\newcommand*{\QUANTUM}{}
\newtheorem{theorem}{Theorem}
\newtheorem{definition}[theorem]{Definition}
\newtheorem{remark}[theorem]{Remark}
\newtheorem{lemma}[theorem]{Lemma}
\newtheorem{corollary}[theorem]{Corollary}
\newcommand{\bvec}[1]{\mathbf{#1}}
\newcommand{\REVN}[1]{{#1}}
\newcommand{\REV}[1]{{#1}}
\newcommand{\ve}{\bvec{e}}
\newcommand{\fp}{\mathbf{p}}
\newcommand{\fq}{\mathbf{q}}
\newcommand{\ff}{\mathbf{f}}
\newcommand{\fs}{\mathbf{s}}
\renewcommand{\Re}{\mathrm{Re}}
\renewcommand{\Im}{\mathrm{Im}}
\newcommand{\I}{\mathrm{i}}
\newcommand{\mc}[1]{\mathcal{#1}}
\newcommand{\mf}[1]{\mathfrak{#1}}
\newcommand{\wt}[1]{\widetilde{#1}}
\newcommand{\abs}[1]{\left\lvert#1\right\rvert}
\newcommand{\norm}[1]{\left\lVert#1\right\rVert}
\newcommand{\argmin}{\mathop{\mathrm{argmin}}}
\newcommand{\rd}{\mathrm{d}}
\newcommand{\Or}{\mathcal{O}}
\newcommand{\NN}{\mathbb{N}}
\newcommand{\RR}{\mathbb{R}}
\newcommand{\CC}{\mathbb{C}}
\newcommand{\wrt}{\textit{w.r.t.} }
\newcommand{\ie}{\textit{i.e.}}
\begin{document}

\title{On the energy landscape of symmetric quantum signal processing}

\author{Jiasu Wang} 
        \affiliation{Department of Mathematics, University of California, Berkeley,  CA 94720, USA. }
        \orcid{0000-0002-1321-2649}
\author{Yulong Dong} 
        \affiliation{Department of Mathematics, University of California, Berkeley,  CA 94720, USA. }
        \orcid{0000-0003-0577-2475}
\author{Lin Lin}
        \affiliation{Department of Mathematics, University of California, Berkeley,  CA 94720, USA. }
        \affiliation{Challenge Institute for Quantum Computation, University of California, Berkeley,  CA 94720, USA}
        \affiliation{Applied Mathematics and Computational Research Division, Lawrence Berkeley National Laboratory, Berkeley, CA 94720, USA}
        \orcid{0000-0001-6860-9566}
             
\maketitle

\begin{abstract}
Symmetric quantum signal processing provides a parameterized representation of a real polynomial, which can be translated into an efficient quantum circuit for performing a wide range of computational tasks on quantum computers. For a given polynomial \REV{$f$}, the parameters (called phase factors) can be obtained by solving an optimization problem. However, the cost function is non-convex, and has a very complex energy landscape with numerous global and local minima. It is therefore surprising that the solution can be robustly obtained in practice, starting from a fixed initial guess $\Phi^0$ that contains no information of the input polynomial. To investigate this phenomenon, we first explicitly characterize all the global minima of the cost function. We then prove that one particular global minimum (called the maximal solution) belongs to a neighborhood of $\Phi^0$, on which the cost function is strongly convex \REV{under the condition $\norm{f}_{\infty}=\mathcal{O}(d^{-1})$ with $d=\deg(f)$.}  \REV{Our result provides a partial explanation of the aforementioned success of optimization algorithms}.

\textit{Keywords---}Quantum signal processing,
quantum algorithm, nonlinear optimization.
\end{abstract}

\newpage
\tableofcontents

\section{Introduction}

Given a target polynomial $f\in\RR[x]$ satisfying (1) $\deg(f)=d$, (2) the  parity of $f$ is $d \bmod 2$, (3) $\norm{f}_{\infty}:=\max_{x\in[-1,1]} \abs{f(x)}< 1$, the problem of  quantum signal processing (QSP)~\cite{LowChuang2017} is to find a set of parameters (called phase factors) $\Phi:=(\phi_0,\cdots,\phi_d)\in [-\pi,\pi)^{d+1}$
so that
\begin{equation}\label{eqn:match_target}
f(x)=g(x,\Phi):=\Re[\braket{0|U(x,\Phi)|0}], \quad x\in[-1,1],
\end{equation}
with
\begin{equation}\label{eqn:unitary-qsvt}
    U(x, \Phi) := e^{\I \phi_0 Z} e^{\I \arccos(x) X} e^{\I \phi_1 Z} e^{\I \arccos(x) X} \cdots e^{\I \phi_{d-1} Z} e^{\I \arccos(x) X} e^{\I \phi_d Z}.
\end{equation}
Here \begin{equation}
X := \left(\begin{array}{cc} 0 & 1\\1 & 0\end{array}\right), \quad Z := \left(\begin{array}{cc} 1 & 0\\0&-1\end{array}\right)
\label{eqn:pauli_xz}
\end{equation}
are Pauli $X$ and $Z$ matrices, respectively. $\braket{0|U(x,\Phi)|0}$ stands for the upper-left entry of the matrix $U(x,\Phi)\in\mathrm{SU}(2)$, and $g(x,\Phi)$
defines a mapping from $[-1,1]$ to $\RR$ for fixed $\Phi$. When the phase factors are restricted to be symmetric, \ie,
\begin{equation}
\Phi=(\phi_0,\phi_1,\phi_2,\ldots,\phi_2,\phi_1,\phi_0)\in [-\pi,\pi)^{d+1},
\label{eqn:symmetry_phase}
\end{equation}
this is referred to as the symmetric quantum signal processing. The simplest example is $\Phi=(0,\ldots,0)$. This gives $U(x,\Phi)=e^{\I d\arccos(x) X}$ and $g(x,\Phi)=\cos(d \arccos(x))=T_d(x)$, where $T_d$ is the Chebyshev polynomial of the first kind of degree $d$. 
 
Due to the parity constraint, the number of degrees of freedom in the target polynomial $f(x)$ is $\wt{d} := \lceil \frac{d+1}{2} \rceil$. Hence $f(x)$ is entirely determined by its values on $\wt{d}$ distinct points. 
Throughout the paper, we choose these points to be $x_k=\cos\left(\frac{2k-1}{4\wt{d}}\pi\right)$, $k=1,...,\wt{d}$, \ie, positive nodes of the Chebyshev polynomial $T_{2 \wt{d}}(x)$. 
For any target polynomial $f(x)$ defined above, the solution to \cref{eqn:match_target} exists~\cite{LowChuang2017,GilyenSuLowEtAl2019}, and Ref.~~\cite{DongMengWhaleyEtAl2021} suggests that the solution can be restricted to be symmetric.
The existence of the solution implies that the problem of symmetric quantum signal processing can be equivalently solved via the following optimization problem
\begin{equation}\label{eqn:optprob-intro}
    \Phi^* = \argmin_{\substack{\Phi \in [-\pi,\pi)^{d+1},\\
    \text{symmetric.}}} F(\Phi),\ F(\Phi) := \frac{1}{\wt{d}} \sum_{k=1}^{\wt{d}} \abs{g(x_k, \Phi) - f(x_k)}^2,
\end{equation}
\ie, any solution $\Phi^*$ to \cref{eqn:match_target} achieves the global minimum of the cost function with $F(\Phi^*)=0$, and vice versa.

However, the energy landscape of the cost function $F(\Phi)$ is very complex, and has numerous global as well as local minima (see \cref{sec:numer}). 
This is already the case with the symmetry constraint. 
(Without the symmetry constraint, the number of variables is larger than the number of equations and there should be an infinite number of global minima.)
Starting from a random initial guess, an optimization algorithm can easily be trapped at a local minima already when $d$ is small\footnote{An early attempt (not using optimization-based approach) showed that even finding $\Phi$ with $d>30$ can be very costly~\cite[Apendix H.3]{ChildsMaslovNamEtAl2018}.}.
It is therefore surprising that starting from a special symmetric initial  guess
\begin{equation}\label{eqn:phi0}
\Phi^0=(\pi/4,0,0,\ldots, 0,0,\pi/4),
\end{equation}
\REV{which is independent of the target function,} at least one global minimum can be robustly identified using standard unconstrained optimization algorithms even when $d$ is as large as $10,000$~\cite{DongMengWhaleyEtAl2021}, and the optimization method is free from being trapped by any local minima. 
Direct calculation shows that $g(x,\Phi^0)=0$, and therefore $\Phi^0$ does not contain any \emph{a priori} information of the target polynomial $f(x)$! 

\subsection{Main results}

In this work, we provide a theory to characterize the energy landscape of the cost function $F(\Phi)$, and to explain this unusual behavior of numerical optimization. Under the assumptions on the maxnorm $\norm{f}_{\infty}$ as stated in \cref{cor:conv_pg}, this leads to the first provable algorithm for solving the  QSP problem without referring to extended precision arithmetic operations~\cite{Haah2019}.

Let the domain of the symmetric phase factors be 
\begin{equation}
    D_d=\begin{cases}
    [-\frac{\pi}{2},\frac{\pi}{2})^{\frac{d}{2}} \times [-\pi,\pi) \times [-\frac{\pi}{2},\frac{\pi}{2})^{\frac{d}{2}}, & d \mbox{ is even,}\\
    [-\frac{\pi}{2},\frac{\pi}{2})^{d+1}, &d \mbox{ is odd.}\\
    \end{cases}
\end{equation}
Our first main result is the existence and uniqueness of symmetric phase factors for a class of polynomial matrices in $\mathrm{SU}(2)$. 
\begin{theorem}[Existence and uniqueness of symmetric phase factors]
\label{thm:existandunique}
Consider any $P\in \mathbb{C}[x]$ and $Q\in \mathbb{R}[x]$ satisfying the following conditions
\begin{enumerate}[label=(\arabic*)]
    \item $\deg(P)= d$ and $\deg(Q)= d-1$.
    \item $P$ has parity $(d \bmod 2)$ and $Q$ has parity $(d-1 \bmod 2)$.
    \item (Normalization condition) $\forall x\in[-1,1]: |P(x)|^2+(1-x^2)|Q(x)|^2=1$.
    \item \label{itm:4} If $d$ is odd, then the leading coefficient of $Q$ is positive.
\end{enumerate}
There exists a unique set of symmetric phase factors $\Phi:=(\phi_0,\phi_1,\cdots,\phi_1,\phi_0)\in D_d$ such that 
\begin{equation}\label{eq:UPQ}
U(x,\Phi)=\begin{pmatrix}
P(x) & \I Q(x)\sqrt{1-x^2}\\
\I Q(x) \sqrt{1-x^2} & P^* (x)
\end{pmatrix}.
\end{equation}
\end{theorem}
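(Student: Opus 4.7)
My approach is strong induction on $d$: at each step I peel off the outermost pair of equal phase factors and reduce to a degree-$(d-2)$ problem. For the base case $d=0$, condition (3) forces $P$ to be a unit-modulus constant and $Q\equiv 0$, so $\phi_0\in[-\pi,\pi)$ is uniquely determined by $e^{\I\phi_0}=P$. For $d=1$, parity and the normalization identity force $Q\equiv 1$ (with condition (4) pinning the sign) and $|p_1|=1$ where $P=p_1 x$, so $\phi_0\in[-\pi/2,\pi/2)$ is uniquely pinned by $e^{2\I\phi_0}=p_1$.

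\emph{Inductive step.} Assume $d\ge 2$. Comparing leading coefficients in $|P|^2+(1-x^2)Q^2=1$ yields $|p_d|=|q_{d-1}|$ (in particular $q_{d-1}\ne 0$); I then define $\phi_0$ to be the unique element of $[-\pi/2,\pi/2)$ with $e^{-2\I\phi_0}p_d=q_{d-1}$. The peeled matrix
\[
\tilde U(x)\;:=\;e^{-\I\arccos(x)X}\,e^{-\I\phi_0 Z}\,U(x,\Phi)\,e^{-\I\phi_0 Z}\,e^{-\I\arccos(x)X}
\]
is both unitary and symmetric as a matrix (because each factor on the right is), so it has the form of \cref{eq:UPQ} for some $\tilde P\in\CC[x]$ and $\tilde Q\in\RR[x]$ satisfying $|\tilde P|^2+(1-x^2)\tilde Q^2=1$. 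Expanding the product yields
\[
\tilde P = e^{-2\I\phi_0}x^2 P - e^{2\I\phi_0}(1-x^2)P^* + 2x(1-x^2)Q,\qquad \tilde Q = (2x^2-1)Q - 2x\,\Re\!\bigl(e^{-2\I\phi_0}P\bigr).
\]
The choice of $\phi_0$ kills the $x^{d+2}$ coefficient of $\tilde P$ and the $x^{d+1}$ coefficient of $\tilde Q$; feeding the $x^{2d-2}$ coefficient relation obtained from $|P|^2+(1-x^2)Q^2=1$ into the next-to-leading coefficients shows these vanish as well, so $\deg\tilde P\le d-2$ and $\deg\tilde Q\le d-3$ with inherited parities. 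Reading the relation in the opposite direction, $P=e^{2\I\phi_0}\bigl(x^2\tilde P-(1-x^2)\tilde P^*-2x(1-x^2)\tilde Q\bigr)$, combined with $\deg P=d$, then forces $\deg\tilde P=d-2$ exactly, and unitarity gives $\deg\tilde Q=d-3$.

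\emph{Closing the induction.} To invoke the induction hypothesis on $(\tilde P,\tilde Q)$ I must verify condition (4) for $\tilde Q$ when $d$ is odd: starting from $q_{d-1}>0$ and the $x^{2d-4}$ coefficient of the normalization identity, the plan is to show that the leading coefficient $\tilde q_{d-3}$ is positive. This step genuinely requires both the given positivity and the extra unitarity information---for $d=3$ the analysis pins $q_0=-1$ and hence $\tilde q_0=+1$, and I expect the general case to follow the same pattern---and this sign propagation is the main technical obstacle. Once established, the hypothesis yields a unique symmetric $\tilde\Phi\in D_{d-2}$ with $U(x,\tilde\Phi)=\tilde U(x)$, and the concatenation $\Phi:=(\phi_0,\tilde\Phi,\phi_0)\in D_d$ is symmetric of length $d+1$ and realizes \cref{eq:UPQ}. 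Uniqueness is immediate: the leading-coefficient identity pins $\phi_0$ uniquely in $[-\pi/2,\pi/2)$ (respectively in $[-\pi,\pi)$ in the base case $d=0$), and inductive uniqueness applied to the peeled matrix finishes the argument.
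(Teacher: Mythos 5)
Your overall strategy---peel off the outermost pair of phase factors via $W^{-1}e^{-\I\phi_0 Z}\,U\,e^{-\I\phi_0 Z}W^{-1}$, pin $\phi_0$ by the leading-coefficient identity $e^{-2\I\phi_0}\fp_d=\fq_{d-1}$, verify conditions (1)--(3) for the reduced pair, and induct---is exactly the paper's proof: the paper iterates the same construction in \cref{lma:newPQ_coef,lma:phi_expression}, and its uniqueness argument is the same $\phi_0$-pinning you give. However, your proposal has a genuine gap precisely at the step you yourself flag as ``the main technical obstacle'': propagating condition (4), i.e., showing that the leading coefficient of the reduced polynomial $\tilde Q$ is positive whenever $\fq_{d-1}>0$. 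You verify this only for $d=3$ and state that you ``expect the general case to follow the same pattern.'' Without this step the induction hypothesis cannot be invoked on $(\tilde P,\tilde Q)$ when $d$ is odd, so the induction does not close and the proof is incomplete as written.

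Moreover, the $d=3$ pattern you sketch does not generalize in the way you expect. Your $d=3$ argument pins $q_0=-1$ by combining the coefficient identities with a Cauchy--Schwarz-type constraint ($\abs{2\Re(\fp_3\fp_1^*)}\le 2\abs{\fp_3}\abs{\fp_1}$), which is special to having so few coefficients. The paper instead closes the gap by an exact computation (\cref{lma:newPQ_coef}, proved via \cref{lma:normalization_property}): extracting the $x^{2d-2}$ and $x^{2d-4}$ coefficients of the normalization identity yields
\begin{equation*}
\fq^{(1)}_{d-3}=\frac{\fq_{d-1}}{4}+\frac{1}{\fq_{d-1}}\abs{\Im[\fp_{d-2}e^{-2\I\phi_0}]}^2,
\end{equation*}
so positivity (indeed $\fq^{(1)}_{d-3}\ge \fq_{d-1}/4$) is manifest and the sign propagates down the entire recursion. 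This identity---rather than a case-by-case sign-chasing argument---is the essential content your proposal is missing; the remainder of your outline (degree reduction via the $x^{2d-2}$ coefficient relation, parity, the argument that $\deg\tilde P=d-2$ exactly by inverting the peeling, the base cases, and uniqueness) matches the paper and is correct.
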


This result is a refinement of \cite[Theorem 4]{GilyenSuLowEtAl2019}, which only shows the existence of the phase factors without symmetry constraints. 
\cref{thm:existandunique} states that in the presence of the symmetric constraint, the phase factors are also unique. Besides, under some constraint, the converse of \cref{thm:existandunique} also holds (see \cref{re:converse_thm1}). It follows that there is a bijection between the global minimizers of \cref{eqn:optprob-intro} and all possible pairs of $(P(x), Q(x))$ satisfying the assumption in \cref{thm:existandunique} and $\Re[P](x)=f(x)$.
The conditions (1), (2) for the target polynomial $f$ are compatible with the first two requirements in \cref{thm:existandunique}.
The condition (3) on the maxnorm, \ie, $\norm{f}_{\infty}< 1$ is compatible with the normalization condition, which is itself a natural condition due to the unitarity of $U(x,\Phi)$.

To simplify the discussion, we introduce the following definition of admissible pair of polynomials associated with a target polynomial. Here $P_{\Re}:=\Re[P], P_{\Im}:=\Im[P]$.
\begin{definition}[Admissible pair of polynomials]\label{con:PQ}
Let $f\in\RR[x]$ be a target polynomial satisfying (1) $\deg(f)=d$, (2) the  parity of $f$ is $(d \bmod 2)$, (3) $\norm{f}_{\infty}<1$. Then $(P,Q)$ is an admissible pair of polynomials associated with $f$ if the conditions (1)-(3) in \cref{thm:existandunique} are satisfied, together with
\begin{enumerate}[label=(\arabic*)]
\setcounter{enumi}{3}
    \item The leading coefficient of $Q$ is positive,
    \item $P_{\Re}(x)=f(x)$.
\end{enumerate}
The polynomials $P_{\Im},Q\in\RR[x]$ are also called complementary polynomials to $f$.
\end{definition}

Comparing \cref{thm:existandunique} and \cref{con:PQ}, the main modification is that the leading coefficient of the complementary polynomial $Q$ is always restricted to be positive. 
This allows us to unify the discussion of odd and even values of $d$.
The bijection relation can then be concisely formulated as follows.
\begin{corollary}[Bijection between global minima and admissible pairs]\label{cor:bijection}
If $d$ is odd, there is a bijection between the global minima of \cref{eqn:optprob-intro} and all admissible pairs $(P, Q)$.

If $d$ is even, there is a bijection between the global minima  of \cref{eqn:optprob-intro} and all pairs of polynomials $(P, \pm Q)$, where $(P, Q)$ is an admissible pair.
\end{corollary}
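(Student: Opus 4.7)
The plan is to reduce the corollary to \cref{thm:existandunique} after identifying which symmetric $\Phi$'s correspond to global minima of $F$.

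First, I would establish that $F(\Phi^*)=0$ if and only if $g(\cdot,\Phi^*)\equiv f$ on $[-1,1]$. The nontrivial direction uses that $g(\cdot,\Phi^*)-f$ is a real polynomial of degree at most $d$ and parity $(d\bmod 2)$, hence lies in an $\wt{d}$-dimensional subspace; vanishing at the $\wt{d}$ distinct positive Chebyshev nodes $x_1,\ldots,x_{\wt{d}}$ then forces it to be identically zero (for $d$ even by regarding it as a polynomial in $x^2$, for $d$ odd after factoring out $x$).

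Second, from the matrix form \cref{eq:UPQ} I would read off a canonical pair $(P,Q)$ attached to any global minimum $\Phi^*\in D_d$. Unitarity of $U(x,\Phi^*)$ yields the normalization $|P(x)|^2+(1-x^2)|Q(x)|^2\equiv 1$, and the condition $P_{\Re}=g(\cdot,\Phi^*)=f$ with $\deg(f)=d$ pins down $\deg(P)=d$; balancing the leading order in the normalization identity then forces $\deg(Q)=d-1$. The parities of $P$ and $Q$ are inherited from the structure of $e^{\I\arccos(x)X}$ in \cref{eqn:unitary-qsvt}. Thus $(P,Q)$ automatically satisfies conditions (1)-(3) of \cref{thm:existandunique}, together with $P_{\Re}=f$.

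For $d$ odd, the shape of $D_d$ is calibrated so that the extracted pair also satisfies condition (4) of \cref{thm:existandunique} (this compatibility is embedded in the existence-uniqueness statement read in the reverse direction). Hence $\Phi^*\mapsto (P,Q)$ lands in the set of admissible pairs, and conversely every admissible pair is hit by exactly one $\Phi\in D_d$ by \cref{thm:existandunique}, giving the desired bijection. For $d$ even, condition (4) is vacuous, so both $(P,Q)$ and $(P,-Q)$ are valid inputs to the theorem whenever $(P,Q)$ is admissible, since $(1-x^2)|{-Q}|^2=(1-x^2)|Q|^2$ and degrees and parities are preserved under the sign flip. The theorem then produces exactly two distinct elements of $D_d$ per admissible pair, both global minima by the first step, and every global minimum arises this way.

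The main obstacle, beyond careful bookkeeping, is the compatibility step for odd $d$: confirming that $\Phi\mapsto (P,Q)$ from $D_d$ actually lands in the image parameterized by \cref{thm:existandunique}, \ie that condition (4) follows automatically from $\Phi\in D_d$. Once this piece, already contained in the proof of \cref{thm:existandunique}, is granted, the corollary is essentially a direct reformulation of that bijection.
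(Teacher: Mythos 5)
Your proposal is correct and takes essentially the same route as the paper, which presents \cref{cor:bijection} as an immediate consequence of \cref{thm:existandunique} and \cref{con:PQ}: a global minimum of \cref{eqn:optprob-intro} is exactly a symmetric $\Phi$ with $g(\cdot,\Phi)\equiv f$ (your parity/Chebyshev-node argument), the associated $(P,Q)$ then satisfies conditions (1)--(3) with $P_{\Re}=f$ (your degree argument from the normalization), and \cref{thm:existandunique} supplies the inverse map, with the $\pm Q$ doubling appearing only for even $d$. The one point worth pinning down in your ``compatibility step'' for odd $d$: condition (4) follows not from $\Phi\in D_d$ but from the \emph{symmetry} of $\Phi$ --- by \cref{lma:leading_coef} one has $q_{d-1}=\prod_{j=1}^{\wt{d}-1}\cos^2(\phi_j)\geq 0$, which your forced equality $\deg(Q)=d-1$ upgrades to strict positivity --- and this lives in the paper's \cref{lma:leading_coef,lma:leading_coef_sym} rather than inside the proof of \cref{thm:existandunique} itself.
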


The proof of \cref{thm:existandunique} is constructive. 
So given an admissible pair, we have an algorithm to evaluate the symmetric phase factor. \cref{thm:admissible_pair} generalizes the result of \cite[Lemma 4]{Haah2019} by explicitly constructing \textit{all} admissible pairs.
Together with \cref{thm:existandunique}, we have a complete description of all global minima of the cost function in \cref{eqn:optprob-intro}.
\begin{theorem}[Construction of admissible pairs]
\label{thm:admissible_pair}
   Given a target polynomial $f(x)$, all admissible pairs $(P, Q)$ must take the following form,
   \begin{equation}
 \begin{split}
    P_\Im\left(x\right)&=\sqrt{\alpha} \frac{e\left(x+\I \sqrt{1-x^2}\right)+e\left(x-\I\sqrt{1-x^2}\right)}{2},\\ \quad Q\left(x\right)& =\sqrt{\alpha} \frac{e\left(x+\I \sqrt{1-x^2}\right)-e\left(x-\I\sqrt{1-x^2}\right)}{2\I \sqrt{1-x^2}}.
\end{split}
\end{equation}
\REV{Here, $e(z):=z^{-d} \prod_{i=1}^{2d} (z-r_i)$, where each $r_i$ $(i=1,\ldots,2d)$ is a root of the function
\begin{equation}
    \mf{F}(z):=1-\left[f\left(\frac{z+z^{-1}}{2}\right)\right]^2,
\end{equation} 
and $\alpha\in \CC$ satisfies $\mf{F}(z)=\alpha  e(z)e(z^{-1})$.}
\end{theorem}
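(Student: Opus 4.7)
The strategy is a Fej\'er--Riesz-style factorization: I would trade the real-variable problem on $x\in[-1,1]$ for a Laurent-polynomial factorization on the unit circle via the Joukowski substitution
\[x=\tfrac{z+z^{-1}}{2},\qquad z=x+\I\sqrt{1-x^2},\qquad \sqrt{1-x^2}=\tfrac{z-z^{-1}}{2\I},\]
so that $z$ and $z^{-1}$ represent the two conjugate expressions $x\pm\I\sqrt{1-x^2}$ appearing in the displayed formulas of the theorem.

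Given an admissible pair $(P,Q)$, I would package $P_{\Im}$ and $Q$ into the single Laurent polynomial
\[h(z):=P_{\Im}\!\left(\tfrac{z+z^{-1}}{2}\right)+\tfrac{z-z^{-1}}{2}\,Q\!\left(\tfrac{z+z^{-1}}{2}\right),\]
which is the image under the substitution of $P_{\Im}(x)+\I\sqrt{1-x^2}\,Q(x)$. The parity and degree bounds in \cref{con:PQ} imply that $h$ has Laurent support in $\{z^{-d},\ldots,z^d\}$, so $p(z):=z^d h(z)$ is an ordinary polynomial of degree at most $2d$. Factoring $p(z)=\beta\prod_{i=1}^{2d}(z-r_i)$ (with some $r_i=0$ if $\deg p<2d$) yields
\[h(z)=\beta\,e(z),\qquad e(z):=z^{-d}\prod_{i=1}^{2d}(z-r_i),\]
matching the form of $e$ in the theorem.

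The normalization condition together with $P_{\Re}=f$ gives the polynomial identity $P_{\Im}(x)^2+(1-x^2)Q(x)^2=1-f(x)^2$, and using $(1-x^2)=-(z-z^{-1})^2/4$ this translates into the Laurent-polynomial identity $h(z)\,h(z^{-1})=\mf{F}(z)$. Substituting $h=\beta e$ produces $\mf{F}(z)=\beta^2\,e(z)e(z^{-1})$, so $\alpha=\beta^2$ and $\beta=\sqrt{\alpha}$. Because $P_{\Im}$ is a polynomial in $x$, it is invariant under $z\leftrightarrow z^{-1}$, while $\sqrt{1-x^2}\,Q$ is antisymmetric; taking the symmetric and antisymmetric parts of $h(z)=\sqrt{\alpha}\,e(z)$,
\begin{align*}
P_{\Im}(x)&=\tfrac{h(z)+h(z^{-1})}{2}=\sqrt{\alpha}\,\tfrac{e(z)+e(z^{-1})}{2},\\
\I\sqrt{1-x^2}\,Q(x)&=\tfrac{h(z)-h(z^{-1})}{2}=\sqrt{\alpha}\,\tfrac{e(z)-e(z^{-1})}{2},
\end{align*}
and dividing the second line by $\I\sqrt{1-x^2}$ reproduces the displayed formulas once one recognizes $z=x+\I\sqrt{1-x^2}$ and $z^{-1}=x-\I\sqrt{1-x^2}$.

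I expect the main technical obstacle to be the degree-bookkeeping: verifying that $p$ has degree exactly $2d$ so that the factorization furnishes precisely $2d$ roots $r_i$, and handling the degenerate case cleanly by allowing some $r_i=0$. Matching the leading $x^{2d}$-coefficient in $|P|^2+(1-x^2)Q^2=1$ gives $a^2-b^2=-(\text{leading of }f)^2$, where $a,b$ denote the leading coefficients of $P_{\Im}$ and $Q$; combined with the admissibility condition $b>0$, this yields $|a|<b$, so the leading coefficient $(a+b)/2^d$ of $p$ is nonzero, which is what forces $\deg p=2d$ and drives the rest of the argument. The reciprocal symmetry $\mf{F}(z)=\mf{F}(z^{-1})$ then falls out automatically from the identity $\mf{F}=\alpha e(z)e(z^{-1})$, so no separate root-pairing argument is needed.
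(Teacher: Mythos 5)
Your proposal is correct and takes essentially the same route as the paper's own proof (given there for the restated version, \cref{thm:characterize_PQ}): the same substitution $x=\frac{z+z^{-1}}{2}$, the same auxiliary Laurent polynomial (your $h$ is exactly the paper's $H(z)=p_\Im(z)+\frac{z-z^{-1}}{2}q(z)$), the same key identity $\mf{F}(z)=H(z)H(z^{-1})$ derived from the normalization condition, and the same symmetric/antisymmetric split under $z\leftrightarrow z^{-1}$ to recover the displayed formulas. The only divergence is a sub-step of degree bookkeeping: where the paper invokes its uniqueness lemma (\cref{lma:unique_decom}) to conclude that $z^{d}H(z)$ has degree exactly $2d$ with nonzero roots, you argue directly from the top-coefficient identity $\fs_d^2-\fq_{d-1}^2=-\ff_d^2$ and the admissibility condition $\fq_{d-1}>0$ that the leading coefficient $(\fs_d+\fq_{d-1})/2^d$ is positive---an equally valid, self-contained variant, and the same computation at the $z^{-d}$ coefficient gives $p(0)=(\fs_d-\fq_{d-1})/2^d\neq 0$, so the roots are automatically nonzero as well.
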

\REV{Note that the set $\{r_i\}_{i=1}^{2d}$ does not include all roots of the \emph{Laurent polynomial} $\mf{F}(z)$, which has $4d$ roots in total.}
By properly choosing the roots (see a more detailed description in \cref{re:construct_method}), we can construct all the admissible pairs. 
The number of global minima is finite, which is a consequence of the compactness of the domain $D_d$. \cref{thm:admissible_pair} shows that the number of global minima can grow combinatorially with respect to $d$. 

Unfortunately, the procedure described above for finding phase factors is numerically unstable. It requires extended precision arithmetic operations and is therefore very expensive when $d$ is large (see detailed discussions in \cref{sec:related}).
This is noticeably different from solving the optimization problem in \cref{eqn:optprob-intro}, which is numerically stable and can be readily performed using standard double precision arithmetic operations. 

Among the myriad of global minima, \cref{thm:admissible_pair} allows us to identify a special global minimum, which is obtained by choosing $\{r_i\}_{i=1}^{2d}$ to be the roots of $\mf{F}(z)$ within the unit disc. 
The unique symmetric phase factor associated with this admissible pair is referred to as the \emph{maximal solution} (the reason for the naming is technical and is explained in \cref{sec:maximal_sol}). 

The maximal solution enjoys many desirable properties. For any target polynomial $f$ with $\norm{f}_{\infty}\leq \frac{1}{2}$, the maximal solution lies in the neighborhood of $\Phi^0$, \REV{ which is defined in \cref{eqn:phi0}. To characterize this property, we first introduce the definition of reduced phase factors. Given any symmetric phase factors $\Phi$ of length $d+1$, reduced phase factors is referred to the left half part of $\Phi$, namely $\left(\phi_0,\phi_1,\cdots,\phi_{\wt{d}-1}\right)$, and is denoted as $\wt{\Phi}$. Recall that $\wt{d}:=\left\lceil\frac{d+1}{2}\right\rceil$. Due to symmetry, $\Phi$ is fully determined by $\wt{\Phi}$. Hence, we may adopt Euclidean distance between $\wt{\Phi}^*$ and $\wt{\Phi}^0$ instead of that between $\Phi^*$ and $\Phi^0$. Furthermore, throughout this paper, we choose $\wt{\Phi}$ as the free variables. With some abuse of notation, we identify the cost function $F(\Phi)$ with $F(\wt{\Phi})$.}

\begin{theorem}[Distance between the maximal solution and $\Phi^0$]
\label{thm:estimate_maximal} 
Let $\Phi^*$ be the maximal solution for the target function $f(x)$. Denote $\wt{\Phi}^*$ and $\wt{\Phi}^0$ as the corresponding reduced phase factors of $\Phi^*$ and $\Phi^0$ respectively. If $\norm{f(x)}_{\infty}\leq \frac{1}{2}$, then 
\begin{equation}
    \norm{\wt{\Phi}^*-\wt{\Phi}^0}_2\leq \frac{\pi}{\sqrt{3}} \norm{f(x)}_{\infty}.
\end{equation}
\end{theorem}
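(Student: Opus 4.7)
The plan is to establish the distance bound via strong convexity of the cost function $F$ (viewed as a function of the reduced phase factors $\wt{\Phi}\in\RR^{\wt{d}}$) on a convex neighborhood $\mathcal{N}$ of $\wt{\Phi}^0$ that contains the maximal solution $\wt{\Phi}^*$. The numerical constant $\pi/\sqrt{3}$ corresponds precisely to a strong convexity modulus $m=6/\pi^2$: granted such a modulus, and noting that $F(\wt{\Phi}^*)=0$ with $\nabla F(\wt{\Phi}^*)=0$, the strong convexity inequality between $\wt{\Phi}^0$ and $\wt{\Phi}^*$ reads
\begin{equation*}
F(\wt{\Phi}^0) \;\geq\; \tfrac{m}{2}\,\|\wt{\Phi}^*-\wt{\Phi}^0\|_2^2 .
\end{equation*}
A direct evaluation of $U(x,\Phi^0)=e^{\I(\pi/4)Z}e^{\I d\arccos(x)X}e^{\I(\pi/4)Z}$ shows $g(x,\Phi^0)\equiv 0$, so $F(\wt{\Phi}^0)=\wt{d}^{-1}\sum_k f(x_k)^2 \leq \|f\|_\infty^2$, and combining yields $\|\wt{\Phi}^*-\wt{\Phi}^0\|_2 \leq \sqrt{2/m}\,\|f\|_\infty = (\pi/\sqrt{3})\,\|f\|_\infty$, exactly as claimed.

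The required strong convexity is read off from
\begin{equation*}
\nabla^2 F(\wt{\Phi}) \;=\; \tfrac{2}{\wt{d}}\,J(\wt{\Phi})^T J(\wt{\Phi}) + R(\wt{\Phi}),
\end{equation*}
where $J_{kj}(\wt{\Phi})=\partial g(x_k,\cdot)/\partial\wt{\phi}_j$ and $R$ collects the residual terms $(g(x_k,\wt{\Phi})-f(x_k))\nabla^2 g(x_k,\wt{\Phi})$. The first step is to compute $J(\Phi^0)$ exactly. Using the Pauli identities $e^{\I\alpha X}Z e^{\I\beta X}=\cos(\alpha-\beta)Z+\sin(\alpha-\beta)Y$ and $e^{\I(\pi/4)Z}Y e^{\I(\pi/4)Z}=Y$, one finds $\partial_{\phi_j}g(x,\Phi^0)=-\cos((2j-d)\arccos x)$, and summing symmetric pairs gives $\partial_{\wt{\phi}_j}g(x,\Phi^0) = -2\,T_{d-2j}(x)$ (with a middle correction $-T_0(x)$ for even $d$). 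Since the nodes $x_k$ are the positive roots of $T_{2\wt{d}}$, Chebyshev orthogonality of the columns (all of the same parity $d\bmod 2$) makes $J(\Phi^0)^T J(\Phi^0)$ diagonal with explicit entries, giving the leading quadratic term of $\nabla^2 F(\Phi^0)$ a concrete spectral lower bound.

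Propagating the spectral bound from $\Phi^0$ to a convex neighborhood $\mathcal{N}$ requires Lipschitz control on both $J^T J$ and the residual $R$ in $\wt{\Phi}$; these follow from uniform bounds on the first three $\Phi$-derivatives of $g(x,\cdot)$, which are explicit trigonometric polynomials in $\arccos(x)$. The residual is further controlled using $|g(x_k,\wt{\Phi})-f(x_k)|\leq |g(x_k,\wt{\Phi})|+|f(x_k)|$, and the hypothesis $\|f\|_\infty\leq 1/2$ is just what is needed to ensure that these perturbations leave the inequality $\nabla^2 F \succeq (6/\pi^2)I$ valid on some explicit ball $B(\wt{\Phi}^0,r)=:\mathcal{N}$.

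The main obstacle is showing $\wt{\Phi}^*\in\mathcal{N}$ without circularly invoking the bound being proved. My plan is a bootstrapping argument along the homotopy $f_t(x):=tf(x)$, $t\in[0,1]$. By \cref{thm:admissible_pair}, the roots of $\mf{F}_t(z)=1-t^2 f^2((z+z^{-1})/2)$ inside the closed unit disc depend continuously on $t$ (none cross $|z|=1$ while $\|f_t\|_\infty<1$), so the maximal solution $\wt{\Phi}^*(t)$ is continuous in $t$, equals $\wt{\Phi}^0$ at $t=0$ (since $\mf{F}_0\equiv 1$ forces all $2d$ roots to the origin, giving $P_\Im=T_d$, $Q=U_{d-1}$), and equals $\wt{\Phi}^*$ at $t=1$. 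Setting $t^\dag := \sup\{t\in[0,1]:\wt{\Phi}^*(s)\in\mathcal{N}\ \forall\,s\leq t\}$, strong convexity applies on $[0,t^\dag]$ and gives $\|\wt{\Phi}^*(t)-\wt{\Phi}^0\|_2 \leq (\pi/\sqrt{3})\,t\,\|f\|_\infty \leq \pi/(2\sqrt{3})$. As long as the radius $r$ of $\mathcal{N}$ is chosen $\geq \pi/(2\sqrt{3})$---which is the quantitative meaning of the hypothesis $\|f\|_\infty\leq 1/2$---the maximal solution cannot exit $\mathcal{N}$, so $t^\dag=1$ and the theorem follows.
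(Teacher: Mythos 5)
Your proposal hinges on a claim that is not established and is, in fact, the open problem this paper explicitly flags: strong convexity of $F$ with a $d$-independent modulus $m=6/\pi^2$ on a $d$-independent ball of radius at least $\pi/(2\sqrt{3})\approx 0.91$ around $\wt{\Phi}^0$, under only $\norm{f}_\infty\le\frac12$. Your computation of the Hessian at the center is fine (it matches \cref{thm:strong-convex-init}, with $\lambda_{\min}=2$ or $4$), and the arithmetic $\sqrt{2/m}=\pi/\sqrt{3}$ is consistent; the problem is the propagation step. You assert that Lipschitz control of $J^\top J$ and of the residual $R$ ``follows from uniform bounds on the first three $\Phi$-derivatives of $g$,'' but converting such entrywise bounds into spectral bounds on $\wt{d}\times\wt{d}$ matrices inevitably introduces powers of $\wt{d}$: this is exactly what happens in \cref{lma:perturb-Jacobian}, where $\norm{E}_2\le 4\wt{d}^{3/2}\norm{\boldsymbol{\epsilon}}_2$, forcing the strong-convexity ball in \cref{thm:Hess_PD} to have radius $\frac{1}{20\wt{d}}$ and the hypothesis $\norm{f}_\infty\le\frac{\sqrt{3}}{20\pi\wt{d}}$. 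The residual term is even worse on a ball of radius $\sim 0.9$: there $\norm{g(\cdot,\wt{\Phi})}_T$ need not be small (by \cref{thm:estimate_Cheby_norm} one only gets $\norm{g}_T\le\sqrt{3}\norm{\wt{\Phi}-\wt{\Phi}^0}_2$, which can exceed $1$), so $\norm{R}_F$ grows linearly in $\wt{d}$ and swamps any fixed modulus. The paper's discussion section states precisely that removing this $d$-dependence is an unsolved question, so your argument cannot be completed by the techniques available here. (Your homotopy $f_t=tf$ with a continuity/bootstrap argument to place $\wt{\Phi}^*$ in the neighborhood is a genuinely nice idea and would resolve the circularity concern \emph{if} the convexity input existed, but it cannot substitute for it.)

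For contrast, the paper's proof avoids Hessian and convexity arguments entirely, which is how it achieves a $d$-independent bound: it uses the reduction recursion for the leading Chebyshev coefficients (\cref{thm:cheby_version}), bounds each reduced phase factor via $\abs{x}\le\frac{\pi}{2}\abs{\sin x}$ and $\abs{\sin(2\phi_\ell)}^2\le\frac{8}{3}\bigl(\left(q^{(\ell)}_{d-1-2\ell}\right)^2-\left(q^{(\ell-1)}_{d+1-2\ell}\right)^2\bigr)$, telescopes the sum using the monotonicity of $q^{(\ell)}_{d-1-2\ell}$, and closes the estimate with the Mahler-measure/Jensen-formula lower bound $q_{d-1}^2\ge 1-\norm{f}_\infty^2$ (\cref{lma:coef_q_bound}), which is specific to the maximal solution. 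If you want to salvage your route, you would first need to prove the $d$-independent local strong convexity — which would be a new result beyond this paper, not a lemma you can cite or sketch.
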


In particular, if the target polynomial is $f=0$, then the maximal solution is the initial guess $\Phi^0$.  We then prove that when $\norm{f}_{\infty}$ is sufficiently small, the maximal solution belongs to a neighborhood of $\Phi^0$, on which the cost function $F(\wt{\Phi})$ is strongly convex, \ie, the Hessian matrix denoted by $\mathrm{Hess}(\wt{\Phi})$ is positive definite. 

\begin{theorem}[Local strong convexity]
\label{thm:Hess_PD}
        If the target polynomial satisfies $\norm{f}_\infty \leq \frac{\sqrt{3}}{20\pi\wt{d}}$, for any symmetric phase factors $\Phi$ of length $d+1$ satisfying $\norm{\wt{\Phi} -\wt{\Phi}^0}_2 \leq \frac{1}{20\wt{d}}$, the following estimate holds:
        \begin{equation}
            \frac{1}{4}\leq \lambda_{\min} \left(\mathrm{Hess}(\wt{\Phi})\right)\leq \lambda_{\max} \left(\mathrm{Hess}(\wt{\Phi})\right)\leq \frac{25}{4}.
        \end{equation}
\end{theorem}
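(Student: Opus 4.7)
The plan is to view $F(\wt{\Phi})$ as a nonlinear least-squares objective and split its Hessian into a Gauss--Newton piece and a residual piece,
\begin{equation*}
\mathrm{Hess}(\wt{\Phi}) = \tfrac{2}{\wt{d}}\,J(\wt{\Phi})^{\top}J(\wt{\Phi}) + \tfrac{2}{\wt{d}}\sum_{k=1}^{\wt{d}}\bigl(g(x_k,\wt{\Phi})-f(x_k)\bigr)\,\nabla^{2}g(x_k,\wt{\Phi}),
\end{equation*}
with $J_{k,j}(\wt{\Phi}) = \partial_{\phi_j}g(x_k,\wt{\Phi})$. At the reference point $(\wt{\Phi}^0,f\equiv 0)$ both pieces can be evaluated in closed form; for any $(\wt{\Phi},f)$ in the prescribed neighborhoods, it then suffices to bound the perturbation in operator norm and invoke Weyl's inequality.

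\textbf{Computing the base Hessian at $\wt{\Phi}^0$.} Because the interior phases all vanish in $\Phi^0$, the defining product collapses to $U(x,\Phi^0) = e^{\I\frac{\pi}{4}Z}\,e^{\I d\arccos(x)X}\,e^{\I\frac{\pi}{4}Z}$, and in particular $g(x,\Phi^0)\equiv 0$. Differentiating in a reduced phase $\phi_j$ inserts one or two (because of the symmetry constraint) $\I Z$ factors into the middle exponential. Using the conjugation identity $e^{\I\theta X}Z e^{-\I\theta X}=\cos(2\theta)Z+\sin(2\theta)Y$ with $\theta=\arccos(x)$, each resulting amplitude is, up to a fixed sign and an $O(1)$ prefactor, a Chebyshev polynomial $T_{m_j}(x)$ of parity $(d\bmod 2)$, with $m_j$ running over distinct integers as $j$ varies. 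Since the chosen nodes $x_k=\cos\bigl(\tfrac{(2k-1)\pi}{4\wt{d}}\bigr)$ are precisely the positive zeros of $T_{2\wt{d}}$, the Chebyshev polynomials of this parity are discretely orthogonal on $\{x_k\}_{k=1}^{\wt{d}}$, so $J(\wt{\Phi}^0)^{\top}J(\wt{\Phi}^0) = c\wt{d}\,I$ for an explicit constant $c>0$. Together with $g(x_k,\wt{\Phi}^0)=0$, this identifies $\mathrm{Hess}(\wt{\Phi}^0)\bigl|_{f\equiv 0}$ as an explicit positive multiple of the identity, comfortably inside the bracket $[1/4,25/4]$.

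\textbf{Perturbation to general $(\wt{\Phi},f)$.} The plan for the general case is entrywise control. For each fixed $x\in[-1,1]$, both $\partial_{\phi_j}g(x,\cdot)$ and $\partial_{\phi_i}\partial_{\phi_j}g(x,\cdot)$ are smooth scalar functions whose $\wt{\Phi}$-Lipschitz constants are bounded by a universal constant independent of $d$ (see below). Consequently $\|J(\wt{\Phi})-J(\wt{\Phi}^0)\|_F = O(\wt{d}\,\|\wt{\Phi}-\wt{\Phi}^0\|_2)$ and $\|\nabla^{2}g(x_k,\wt{\Phi})\|_2 = O(1)$ uniformly in $k$. Combined with $|g(x_k,\wt{\Phi})-f(x_k)|\leq\|f\|_\infty+O(\|\wt{\Phi}-\wt{\Phi}^0\|_1)$, the quantitative hypotheses $\|\wt{\Phi}-\wt{\Phi}^0\|_2\leq 1/(20\wt{d})$ and $\|f\|_\infty\leq \sqrt{3}/(20\pi\wt{d})$ are calibrated so that the operator-norm perturbation to the base Hessian is strictly smaller than half the width of the target bracket, producing the conclusion via Weyl's inequality.

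\textbf{Main obstacle.} The hardest step is deriving $d$-\emph{independent} bounds on the first and second $\wt{\Phi}$-derivatives of $g$ and on their Lipschitz constants. A naive estimate that ignores unitarity-driven cancellations scales with the number $d+1$ of factors in $U$, which is too loose to meet the stated precision: the right-hand side $1/(20\wt{d})$ already has a $\wt{d}^{-1}$ factor, and accumulating further $d$'s in the Lipschitz bound would destroy the estimate. The resolution is that each $\phi_j$-derivative simply inserts a constant number of $\I Z$ matrices into an otherwise unitary word, so that $\|\partial_{\phi_j}U(x,\wt{\Phi})\|_2$, $\|\partial_{\phi_i}\partial_{\phi_j}U(x,\wt{\Phi})\|_2$, and their $\wt{\Phi}$-gradients are all bounded by a universal constant uniformly in $x\in[-1,1]$ and $\wt{\Phi}\in D_d$. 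Once this pointwise bound is established, the clean identity-type structure of $\mathrm{Hess}(\wt{\Phi}^0)\bigl|_{f\equiv 0}$ from the first step propagates through Weyl's inequality to yield the stated spectral bracket.
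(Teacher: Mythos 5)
Your skeleton (Gauss--Newton plus residual splitting, explicit base Hessian at $\wt{\Phi}^0$ via discrete Chebyshev orthogonality, then eigenvalue perturbation) coincides with the paper's, and your Gauss--Newton piece is essentially recoverable, though your quantitative claim there is off: the unitarity argument gives $\abs{g_{ij}}\le 4$ pointwise, hence the Lipschitz constant of an entry $g_j(x_k,\cdot)$ with respect to the \emph{whole vector} $\wt{\Phi}$ in $\ell_2$ is $O(\sqrt{\wt{d}})$, not $O(1)$, so the correct Jacobian bound is $\norm{A(\wt{\Phi})-A(\wt{\Phi}^0)}_F\le 4\wt{d}^{3/2}\norm{\wt{\Phi}-\wt{\Phi}^0}_2$ rather than your $O(\wt{d}\norm{\wt{\Phi}-\wt{\Phi}^0}_2)$; the radius $\frac{1}{20\wt{d}}$ still absorbs this, exactly as in \cref{lma:perturb-Jacobian,cor:phi_around_initial}.

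The genuine gap is in the residual term, and it is precisely the $d$-independence problem you flag as the ``main obstacle'': your proposed resolution does not resolve it. Your bound $\abs{g(x_k,\wt{\Phi})-f(x_k)}\le \norm{f}_\infty + O(\norm{\wt{\Phi}-\wt{\Phi}^0}_1)$ is true (with constant $2$, since $\abs{g_i}\le 2$), but under the stated hypothesis one only has $\norm{\wt{\Phi}-\wt{\Phi}^0}_1\le \sqrt{\wt{d}}\,\norm{\wt{\Phi}-\wt{\Phi}^0}_2\le \frac{1}{20\sqrt{\wt{d}}}$, so $\max_k\abs{g-f}=O(\wt{d}^{-1/2})$; any aggregation of the $\wt{d}\times\wt{d}$ matrix $R_{ij}=\frac{2}{\wt{d}}\sum_k (g-f)(x_k)\,g_{ij}(x_k)$ then gives only $\norm{R}_2\le\norm{R}_F=O(\wt{d}\cdot \wt{d}^{-1/2})=O(\sqrt{\wt{d}})$, which diverges and destroys the bracket $[\frac14,\frac{25}{4}]$. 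No pointwise-in-$x$ unitarity bound can repair this, because the $\ell_2$-Lipschitz constant of $\wt{\Phi}\mapsto g(x,\wt{\Phi})$ is genuinely $\Theta(\sqrt{\wt{d}})$: at $\wt{\Phi}^0$ the gradient has the $\wt{d}$ entries $-2T_{d-2i}(x)$. What is needed, and what your proposal lacks, is an \emph{averaged-in-$x$} estimate with a $d$-independent constant, namely $\norm{g(\cdot,\wt{\Phi})}_T\le \sqrt{3}\,\norm{\wt{\Phi}-\wt{\Phi}^0}_2$. This is \cref{thm:estimate_Cheby_norm}, whose proof is structural rather than perturbative: it runs the QSP reduction backwards, using the recursion and monotonicity of the leading Chebyshev coefficients $q^{(\ell)}_{d-1-2\ell}$ from \cref{thm:cheby_version} and the Mahler-measure estimate of \cref{lma:coef_q_bound}, to show $\norm{\wt{\Phi}-\wt{\Phi}^0}_2^2\ge \frac14 g_d^2+q^{(0)}_{d-1}\bigl(1-q^{(0)}_{d-1}\bigr)\ge\frac13\norm{g}_T^2$. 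With this, the Cauchy--Schwarz step keeps the full sum $\sum_k(g-f)^2(x_k)=\wt{d}\,\Or(\wt{d}^{-2})$ intact instead of bounding it by $\wt{d}\max_k(g-f)^2$, yielding $\norm{R}_F=\Or(1)$ with explicit constants, and Wielandt--Hoffman (\cref{lma:preturb_eigen}) finishes. (As an aside, inside the ball of radius $\frac{1}{20\wt{d}}$ one could also obtain the needed $\norm{\cdot}_T$ bound from the operator-norm control $\sigma_{\max}(A)\le(\frac15+\sqrt2)\sqrt{\wt{d}}$ of \cref{cor:phi_around_initial} together with the mean value inequality and $g(x_k,\wt{\Phi}^0)=0$; but some $d$-independent $\ell_2\to\norm{\cdot}_T$ estimate of this kind is indispensable, and your proposal contains none.)
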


The local strong convexity result in \cref{thm:Hess_PD} immediately implies that when $\norm{f}_{\infty}$ is sufficiently small, standard optimization algorithms, such as the projected gradient method~\cite{Bertsekas1976,Kelley1999}, can converge in the neighborhood of $\Phi^0$, without being trapped by any local minima.

\begin{corollary}[Convergence of projected gradient method]
\label{cor:conv_pg}
If the target polynomial satisfies $\norm{f}_\infty \leq \frac{\sqrt{3}}{20\pi\wt{d}}$, starting from $\Phi^0$, the projected gradient method with step size $t=\frac{1}{L}$ converges exponentially to the maximal solution $\Phi^*$, \ie, at the $\ell$-th iteration 
\begin{equation}
    \norm{\wt{\Phi}^\ell-\wt{\Phi}^*}^2_2\leq e^{-\frac{\sigma}{L}\ell} \norm{\wt{\Phi}^0-\wt{\Phi}^*}^2_2.
\end{equation}
Here $\sigma=\frac{1}{4}$ and $L=\frac{25}{4}$.
\end{corollary}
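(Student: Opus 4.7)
The plan is to combine Theorem~\ref{thm:estimate_maximal} and Theorem~\ref{thm:Hess_PD} to reduce the statement to the textbook local convergence analysis for projected gradient descent on a strongly convex, smooth function, with the projection carried out onto the closed convex ball $B := \{\wt{\Phi} : \|\wt{\Phi} - \wt{\Phi}^0\|_2 \leq \tfrac{1}{20\wt{d}}\}$ on which the Hessian bounds of Theorem~\ref{thm:Hess_PD} hold.

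First I would verify that the maximal solution lies in $B$ and is the unique minimizer of $F$ restricted to $B$. Under the hypothesis $\|f\|_\infty \leq \tfrac{\sqrt{3}}{20\pi\wt{d}}$, Theorem~\ref{thm:estimate_maximal} yields $\|\wt{\Phi}^* - \wt{\Phi}^0\|_2 \leq \tfrac{\pi}{\sqrt{3}} \|f\|_\infty \leq \tfrac{1}{20\wt{d}}$, so $\wt{\Phi}^* \in B$. Theorem~\ref{thm:Hess_PD} then gives $\sigma I \preceq \mathrm{Hess}(\wt{\Phi}) \preceq L I$ on $B$ with $\sigma = \tfrac{1}{4}$ and $L = \tfrac{25}{4}$, so $F$ is $\sigma$-strongly convex on $B$. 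Since $\wt{\Phi}^*$ is a global minimizer with $F(\wt{\Phi}^*)=0$, in particular $\nabla F(\wt{\Phi}^*) = 0$, and strong convexity singles out $\wt{\Phi}^*$ as the unique critical point of $F$ in $B$.

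Next I would analyze the iteration $\wt{\Phi}^{\ell+1} = \Pi_B\!\bigl(\wt{\Phi}^\ell - \tfrac{1}{L}\nabla F(\wt{\Phi}^\ell)\bigr)$. Since $\wt{\Phi}^* \in B$ and $\nabla F(\wt{\Phi}^*) = 0$, the map $\wt{\Phi} \mapsto \Pi_B\!\bigl(\wt{\Phi} - \tfrac{1}{L}\nabla F(\wt{\Phi})\bigr)$ fixes $\wt{\Phi}^*$. For any $\wt{\Phi}^\ell \in B$ the segment from $\wt{\Phi}^\ell$ to $\wt{\Phi}^*$ lies in $B$ by convexity, so the integrated-Hessian identity
\begin{equation*}
\nabla F(\wt{\Phi}^\ell) = \bar{H}_\ell(\wt{\Phi}^\ell - \wt{\Phi}^*), \qquad \bar{H}_\ell := \int_0^1 \mathrm{Hess}\!\bigl(\wt{\Phi}^* + t(\wt{\Phi}^\ell - \wt{\Phi}^*)\bigr) \ud t,
\end{equation*}
holds with $\sigma I \preceq \bar{H}_\ell \preceq L I$. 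Combining nonexpansiveness of $\Pi_B$ with the spectral bound $\|I - \tfrac{1}{L}\bar{H}_\ell\|_2 \leq 1 - \sigma/L$ yields the one-step contraction $\|\wt{\Phi}^{\ell+1} - \wt{\Phi}^*\|_2 \leq (1 - \sigma/L)\|\wt{\Phi}^\ell - \wt{\Phi}^*\|_2$. Iterating, squaring, and using $(1 - \sigma/L)^{2\ell} \leq e^{-2\sigma \ell/L} \leq e^{-\sigma \ell/L}$ gives the stated bound.

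The main obstacle is ensuring that every iterate remains inside the strong-convexity region $B$, since the Hessian bounds of Theorem~\ref{thm:Hess_PD} are only available on $B$. A plain (unprojected) gradient step could a priori escape $B$: the best easy bound on $\|\wt{\Phi}^{\ell+1} - \wt{\Phi}^0\|_2$ via the triangle inequality through $\wt{\Phi}^*$ is $2\|\wt{\Phi}^0 - \wt{\Phi}^*\|_2 \leq \tfrac{1}{10\wt{d}}$, which exceeds the radius of $B$ under the stated hypothesis. Projecting onto $B$ at every iteration resolves this and simultaneously preserves the one-step contraction through the nonexpansiveness of $\Pi_B$, after which the rest is a routine application of the fact that $I - \tfrac{1}{L}\bar{H}_\ell$ is a contraction with ratio $1 - \sigma/L$.
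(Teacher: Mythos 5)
Your proposal is correct, and its structure mirrors the paper's: both use \cref{thm:estimate_maximal} to place $\wt{\Phi}^*$ inside the ball $\mathcal{S}=\{\wt{\Phi}:\norm{\wt{\Phi}-\wt{\Phi}^0}_2\leq \frac{1}{20\wt{d}}\}$ and \cref{thm:Hess_PD} to get $\sigma I \preceq \mathrm{Hess} \preceq L I$ there, after which the convergence of projected gradient descent is a standard fact. The only difference is how that standard fact is handled: the paper simply cites Bubeck's Theorem 3.10 for $\alpha$-strongly convex, $\beta$-smooth functions, whereas you re-derive it self-containedly via the integrated-Hessian identity $\nabla F(\wt{\Phi}^\ell)=\bar{H}_\ell(\wt{\Phi}^\ell-\wt{\Phi}^*)$, the spectral bound $\norm{I-\frac{1}{L}\bar{H}_\ell}_2\leq 1-\sigma/L$, and nonexpansiveness of $\Pi_{\mathcal{S}}$. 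Your route buys two small things: it is self-contained (no external optimization theorem needed, only smoothness of $F$ and convexity of the ball), and the contraction $(1-\sigma/L)^{2\ell}\leq e^{-2\sigma\ell/L}$ actually gives a rate twice as fast as the stated $e^{-\sigma\ell/L}$, which it then trivially implies; the citation route is shorter and defers the bookkeeping (iterates staying in $\mathcal{S}$, the fixed-point property of $\wt{\Phi}^*$) to the reference. Your closing observation about why the projection is genuinely needed -- an unprojected step can a priori leave the region where the Hessian bounds hold -- matches the paper's own discussion of this point in the open-questions section.
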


Therefore when $\norm{f}_\infty \leq \frac{\sqrt{3}}{20\pi\wt{d}}$, in order to reach precision $\epsilon$, the projected gradient method can be terminated after $\Or(\log(1/\epsilon))$ steps independent of the details of $f$. Since each iteration is numerically stable, the maximal solution can be readily obtained using standard double precision arithmetic operations  in practice.

\subsection{Background and related works}\label{sec:related}

The mapping in \cref{eqn:match_target,eqn:unitary-qsvt} may seem a very peculiar way for encoding a polynomial (and it is). 
However, such an SU(2) representation can be directly translated into a quantum circuit, which is so far the most concise way for performing eigenvalue transformation $f(A)$ (when $A$ is an Hermitian matrix)~\cite{LowChuang2017}, and singular value transformations $f^{\mathrm{SV}}(A)$ (when $A$ is a general matrix) on quantum computers~\cite{GilyenSuLowEtAl2019}.
Many tasks in quantum computation can be formulated using such transformations.
When $f$ is not a polynomial, it can be approximated by a polynomial with bounded precision. 
This strategy can be used to unify a large class of quantum algorithms including Grover's search and quantum phase estimation~\cite{NielsenChuang2000,MartynRossiTanEtAl2021}, and to perform a wide range of computational tasks, such as solving linear system of equations~\cite{NielsenChuang2000,LinTong2020}, eigenvalue problems~\cite{LinTong2020a}, Hamiltonian simulation~\cite{LowChuang2017,GilyenSuLowEtAl2019} etc.

After the initial efforts~\cite{LowChuang2017,ChildsMaslovNamEtAl2018}, significant progress has been made by using direct methods to obtain phase factors~\cite{GilyenSuLowEtAl2019,Haah2019}. These methods are based on finding roots of high degree polynomials to high precision and are not numerically stable. Specifically, these algorithms require $\Or(d\log(d/\epsilon))$ bits of precision, where $d$ is the degree of $f(x)$ and $\epsilon$ is the target accuracy~\cite{Haah2019}. 
It is worth mentioning that the extended precision needed in these algorithms is not an artifact of the proof technique. For instance, for $d\approx 500$, the number of bits needed to represent each floating point number can be as large as  $1000\sim 2000$~\cite{DongMengWhaleyEtAl2021}, which is much larger than the $64$ bits provided by standard double precision floating point format.

\REV{There have been two recent improvements of the factorization based method, based on the capitalization method~\cite{ChaoDingGilyenEtAl2020}, and the Prony method~\cite{Ying2022}, respectively. 
Although the two methods differ significantly, empirical results indicate that both methods are numerically stable, and are applicable to polynomials of large degrees. The algorithm in~\cite{ChaoDingGilyenEtAl2020} is a variation of the algorithm in~\cite{GilyenSuLowEtAl2019}, which introduces a small perturbation to the high order Chebyshev coefficients to enhance the numerical stability. The Prony method in~\cite{Ying2022} avoids the root finding of high degree polynomials by directly constructing the admissible pairs for a given target polynomial, and uses randomness to enhance the numerical stability.
Ref.~\cite{DongMengWhaleyEtAl2021} proposes a very different optimization based algorithm and suggests the use of the symmetric phase factors.  This work provides a detailed discussion of the analytic structure of quantum signal processing with symmetric phase factors, and demonstrates the effectiveness of the optimization based approach under the condition that $\norm{f}_{\infty}$ is sufficiently small.} 
This provides a partial solution to the open problem of finding phase factors using numerically stable algorithms, and justifies the usage of standard double precision arithmetic operations in practice. 

\subsection{Discussion and open questions}

Without the symmetry constraint, the optimization problem in \cref{eqn:optprob-intro} has $d$ degrees of freedom. 
This is larger than the number of degrees of freedom of target function $f$, which is only $\wt{d}$ due to the parity constraint. 
As a result, the Hessian matrix is always singular, preventing us from proving the exponential convergence of optimization methods \cite{NocedalWright1999}. 

\REV{Our analytical result is the first result trying to explain the unexpected success of optimization algorithms in the problem of finding phase factors. In \cref{thm:Hess_PD}, we prove the local strong convexity of the cost function when $\norm{f}_\infty=\Or(d^{-1})$.
Therefore for a given target polynomial, in order to apply \cref{cor:conv_pg}, we need to first multiply the target polynomial by a scaling factor $c=\Or(d^{-1})$ and find the phase factors for $c f(x)$ instead. Such a scaling factor can be undesirable and may lead to suboptimal query complexities when implementing certain quantum algorithms. On the other hand, numerical results suggest that optimization algorithms can robustly converge to a global minimum when $\norm{f}_\infty$ is smaller than a constant (less than $1$) that is independent of $d$. Note that the estimate in \cref{thm:estimate_maximal} is already independent of $d$. Hence the gap is mainly due to \cref{thm:Hess_PD}, where the eigenvalue estimate of the Hessian matrix is obtained via the estimate of each matrix entry, which introduces the $d$-dependence. It is a natural open question to prove the strong convexity of the cost function near $\Phi^0$ where $\norm{f}_{\infty}$ can be independent of $d$. }

The proof of \cref{cor:conv_pg} requires the projection of $\wt{\Phi}^{\ell}$ to the domain $\{\wt{\Phi}:\norm{\wt{\Phi} -\wt{\Phi}^0}_2 \leq \frac{1}{20\wt{d}}\}$ at each iteration. This is because the steepest descent method may potentially overshoot and deviate away from the locally strong convex region. 
Numerical observation shows that unconstrained optimizers, such as quasi-Newton methods can robustly achieve the global minimum, which is beyond our current theoretical analysis. 

Finally, since $\Phi^0$ is only one of the solutions when $f=0$, we may ask what if we run the projected gradient method starting from other solutions when $f=0$, and to obtain optimal phase factors for more general target polynomials. These solutions correspond to non-maximal solutions. Numerical observations show that the convergence rate towards such non-maximal solutions can be significantly slower than that towards the maximal solution (see \cref{fig:convergence_rate,fig:convergence_qsppack}). 
The analysis of such behavior requires the generalization of estimates of in \cref{thm:estimate_maximal,thm:Hess_PD} to non-maximal solutions. These will be our future works.

\subsection*{Acknowledgments:} 

This work was partially supported by the NSF Quantum Leap Challenge Institute (QLCI) program through grant number OMA-2016245 (J.W.,Y.D.), by Department of Energy under Grant No. DE-SC0017867 and No. DE-AC02-05CH11231 (L.L.). L.L. is a Simons Investigator.

\section{Preliminaries}
\subsection{Chebyshev polynomials}\label{sec:chebyshev}
We first collect some basic facts of Chebyshev polynomials, which is used to formulate the optimization problem in \cref{eqn:optprob-intro} \REV{and will be extensively used in \cref{sec:local_convergence,sec:local_strong_convexity}.}
Given $x\in[-1,1]$, the Chebyshev polynomial of the first kind is $T_n(x)=\cos\left(n\arccos{(x)}\right)$, and that of the second kind is given by $U_n(x)\sin(\arccos{(x)})=\sin\left((n+1)\arccos{(x)}\right)$.

A useful fact is that both $T_n$ and $U_n$ form a sequence of orthogonal polynomials. The Chebyshev polynomial of the first kind $T_n$ are orthogonal with respect to the weight $\frac{1}{\sqrt{1-x^2}}$ over $[-1,1]$, and that of the second kind $U_n$ are orthogonal with respect to the weight $\sqrt{1-x^2}$ over $[-1,1]$,
\begin{equation*}
\begin{split}
    &\frac{1}{\pi}\int_{-1}^1 T_n(x) T_m(x) \frac{1}{\sqrt{1-x^2}}\rd x=\frac{1}{2}\delta_{nm}(1+\delta_{n0}),\\
    &\frac{1}{\pi}\int_{-1}^1 U_n(x) U_m(x) \sqrt{1-x^2}\rd x=\frac{1}{2}\delta_{nm}.
\end{split}
\end{equation*}
The induced norms
$$\norm{f}_T^2:=\frac{1}{\pi}\int_{-1}^1 f^2(x)\frac{1}{\sqrt{1-x^2}}\rd x,\quad \norm{f}_U^2:=\frac{1}{\pi}\int_{-1}^1 f^2(x)\sqrt{1-x^2}\rd x$$
are referred to as the Chebyshev norm of the first kind and the second kind respectively. Furthermore, any function $h$ with $\norm{h}_T<\infty$ can be uniquely expressed as a series of Chebyshev polynomials of the first kind, 
\begin{equation}
    h(x)=\sum_{n=0}^{\infty} c_n T_n(x), \qquad c_n =\frac{2}{\pi(1+\delta_{n0})}\int_{-1}^1 h(x) T_n(x) \frac{1}{\sqrt{1-x^2}}\rd x.
\end{equation}
Similarly, any function $h$ with $\norm{h}_U<\infty$ can be uniquely expressed as a series of Chebyshev polynomials of the second kind, 
\begin{equation}
    h(x)=\sum_{n=0}^{\infty} c_n \REV{U_n(x)}, \qquad c_n =\frac{2}{\pi}\int_{-1}^1 h(x) \REV{U_n(x) }\sqrt{1-x^2}\rd x.
\end{equation}

Apart from the weighted orthogonality, Chebyshev polynomials of the first kind $T_n$ satisfies the discrete orthogonality condition:
\begin{equation}
    \sum_{j=1}^{2\wt{d}} T_n(x_j)T_m(x_j)=\begin{cases}
    0 & n\ne m\\
    2\wt{d} & n=m=0\\
    \wt{d} & n=m\ne 0
    \end{cases},
\end{equation}
where $x_j =\cos\left(\pi\frac{2j-1}{4\wt{d}}\right)$ are the nodes of $T_{2\wt{d}}$ and $\max(n,m)\leq 2\wt{d}$. This set of points is also referred to as Chebyshev nodes. Furthermore, if $n$ and $m$ have the same parity which implies $T_n(-x)T_m(-x) = T_n(x) T_m(x)$, we have
\begin{equation}
    \sum_{j=1}^{\wt{d}} T_n(x_j)T_m(x_j) = \frac{1}{2} \sum_{j=1}^{2\wt{d}} T_n(x_j) T_m(x_j) = \begin{cases}
    0 & n\ne m\\
    \wt{d} & n=m=0\\
    \frac{\wt{d}}{2} & n=m\ne 0
    \end{cases}.
\end{equation}

\subsection{Notation}\label{sec:notation}
Given a positive integer $n$, $\mathbb{K}_n[x]$ denotes the set of all polynomials of degree at most $n$ with variable $x$, and coefficients taken in the field $\mathbb{K}$. The Laurent polynomial is a linear combination of positive as well as negative powers of the variable $x$ with coefficients in the field $\mathbb{K}$, and $\mathbb{K}[x,x^{-1}]$ denotes the set of all Laurent polynomials.  Specifically, we are only interested in the cases where $\mathbb{K}$ is either $\RR$ or $\CC$. Unless otherwise noted, the term ``polynomial'' refers to real polynomial or complex polynomial, which only have nonnegative power of the variable $x$. 

We assume that the length of the set of phase factors is $d+1$ unless otherwise noted. We also assume that a pair of polynomials $(P,Q)\in \CC_d[x]\times \CC_{d-1}[x]$ satisfies the conditions in \cref{thm:existandunique}. Given a polynomial $P \in \CC[x]$, we define $P_\Re(x):=\Re[P(x)]$ and $P_\Im(x) := \Im\left[P(x)\right]$. The real polynomial $P_\Re(x)$ is usually denoted by $f(x)$ when it refers to the target of the optimization.

The QSP problem involves some delicate relations of polynomial coefficients.
Throughout the paper, we will interchangeably expand a polynomial using the monomial basis and the Chebyshev basis, \ie,
\begin{equation}
f(x)=\sum_{j=0}^d \ff_j x^j=\sum_{j=0}^d f_j T_j(x),\quad \ff_j,f_j\in \RR.
\end{equation}
Similarly, 
\begin{equation}\label{eq:cheby_expansion}
    \begin{split}
        P(x)&=\sum_{j=0}^d \fp_j x^j=\sum_{j=0}^d p_j T_j(x),\quad \fp_j,p_j\in \CC, \\
        P_\Im(x) &=\sum_{j=0}^d \fs_j x^j=\sum_{j=0}^d s_j T_j(x),\quad \fs_j,s_j\in \RR, \\
        Q(x)&=\sum_{j=0}^{d-1} \fq_j x^j=\sum_{j=0}^{d-1} q_j U_j(x)\quad \fq_j,q_j\in \CC .
    \end{split}
\end{equation}
For convenience, we set the values of $\ff_j,f_j,\fp_j,p_j,\fs_j,s_j,\fq_j,q_j$ with a negative integer index to zero. 

The coefficients in the monomial and Chebyshev basis satisfy the relation
\begin{equation}\label{eq:relation_monomial_chebyshev}
    \quad \ff_j =2^{j-1} f_j,\quad \fp_j =2^{j-1} p_j,\quad \fs_d =2^{j-1} s_j,\quad\fq_{j} =2^{j} q_{j}, \quad \forall j\geq0.
\end{equation}

The set  $[n]:=\{0,1,\cdots, n-1\}$ is referred to as the index set generated by a positive integer $n$. The row and column indices of a $n$-by-$n$ matrix run from $0$ to $n-1$, namely in the index set $[n]$. For a matrix $A\in\CC^{m\times n}$, the transpose, Hermitian conjugate and complex conjugate are denoted by $A^{\top}$, $A^{\dag}$, $A^*$, respectively. The same notations are also used for the operations on a vector. We use the convention in quantum computing to write the basis vectors in $\CC^2$. It is equivalent to the following identification
\begin{equation*}
    \ket{0} = \left(\begin{array}{c}
         1 \\
         0 
    \end{array}\right),\ \ket{1} = \left(\begin{array}{c}
         0 \\
         1 
    \end{array}\right),\ \bra{0} = \ket{0}^\dagger = \left(1, 0\right), \text{ and } \bra{1} = \ket{1}^\dagger = \left(0, 1\right).
\end{equation*}
Given a $2$-by-$2$ matrix $A = (a_{ij})$, we have $\braket{i | A | j} = a_{ij}$ for any $i, j \in \{0, 1\}$.

For a matrix $A\in\CC^{m\times n}$, we denote the operator norm $\norm{\cdot}_2$ induced by the vector $\ell_2$ norm as
\begin{equation}
\norm{A}_2: =\sup_{x \neq 0}  \frac{\norm{A x}_2 }{\norm{x}_2}, 
\end{equation}
and denote the Frobenius norm $\norm{\cdot}_F$ as
\begin{equation}
    \norm{A}_F:=\sqrt{\sum_{i\in [m],j\in[n]}\abs{\REV{a_{ij}}}^2}.
\end{equation}
We use $\sigma_k(A)$ to denote its $(k+1)$-th largest singular value, \ie,
\begin{equation*}
    \REV{0\leq \sigma_{\min(m,n)-1}(A)\leq \cdots\leq \sigma_{1}(A)\leq \sigma_0(A)}.
\end{equation*}
We define $\sigma_{\min}(A):=\sigma_{\min(m,n)-1}(A)$ and  $\sigma_{\max}(A):=\sigma_0(A)$.
Here, the index $k$ starts from $0$ which agrees with the convention of matrix index.

For a Hermitian matrix $A\in \CC^{n\times n}$, we use the notation $\lambda_k(A)$ to designate the $(k+1)$-th largest eigenvalue, \ie,
\begin{equation*}
    \REV{\lambda_{n-1}(A)\leq \cdots\leq \lambda_{1}(A)\leq \lambda_0(A)}.
\end{equation*}
We define $\lambda_{\min}(A):=\lambda_{n-1}(A)$ and  $\lambda_{\max}(A):=\lambda_0(A)$.

\subsection{Quantum signal processing}
In the absence of the symmetry constraint, the existence of phase factors for a unitary representation of a pair of polynomials $(P,Q)$ is proved in~\cite{GilyenSuLowEtAl2019}. 
\begin{theorem}[\textbf{Quantum signal processing} {\cite[Theorem 4]{GilyenSuLowEtAl2019}}]\label{thm:qsp}
    Let $d\in \NN$. There exists a set of phase factors $\Phi := (\phi_0, \cdots, \phi_d) \in \RR^{d+1}$ such that
\begin{equation}
\label{eq:qsp-gslw}
        U(x, \Phi) = e^{\I \phi_0 Z} \prod_{j=1}^{d} \left( W(x) e^{\I \phi_j Z} \right) = \left( \begin{array}{cc}
        P(x) & \I Q(x) \sqrt{1 - x^2}\\
        \I Q^*(x) \sqrt{1 - x^2} & P^*(x)
        \end{array} \right),
\end{equation}
where 
\begin{displaymath}
W(x) = e^{\I \arccos(x) X}=\left(\begin{array}{cc}{x} & {\I \sqrt{1-x^{2}}} \\ {\I \sqrt{1-x^{2}}} & {x}\end{array}\right),
\end{displaymath} 
if and only if $P,Q\in \CC[x]$ satisfy that
    \begin{enumerate}[label=(\arabic*)]
        \item \label{itm:1} $\deg(P) \leq d, \deg(Q) \leq d-1$,
        \item \label{itm:2} $P(x)$ has parity $(d\mod2)$ and $Q(x)$ has parity $(d-1 \mod 2)$,
        \item  \label{itm:3} $|P(x)|^2 + (1-x^2) |Q(x)|^2 = 1, \forall x \in [-1, 1]$.
    \end{enumerate}
\end{theorem}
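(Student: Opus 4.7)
The plan is to prove both directions of the equivalence, with the forward (``only if'') implication following from elementary algebraic identities and the reverse (``if'') implication proceeding by induction on $d$. For the forward direction, I would observe that every factor $e^{\I \phi_j Z}$ and $W(x)$ is unitary, so $U(x,\Phi) \in \mathrm{SU}(2)$, which immediately forces the normalization condition (3). Degree and parity claims (1)--(2) follow by a short induction on $d$: multiplying a matrix of the stated form on the right by $W(x) e^{\I \phi_{d+1} Z}$ increases the degree of the top-left polynomial by exactly one and flips its parity, with the analogous statement for the off-diagonal entry.

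The substantive content is the reverse direction, which I would prove by induction on $d$. For the base case $d = 0$, condition (3) reduces to $|P|^2 = 1$ with $P$ constant, hence $P = e^{\I \phi_0}$ for some $\phi_0 \in \RR$ and $Q \equiv 0$, giving $U(x,(\phi_0)) = e^{\I \phi_0 Z}$ as required. For the inductive step, I would seek a phase $\phi_d \in \RR$ and a reduced pair $(\tilde{P}, \tilde{Q})$ of degrees at most $d-1, d-2$ satisfying (1)--(3) so that
\begin{equation*}
\begin{pmatrix} P & \I Q \sqrt{1-x^2}\\ \I Q^* \sqrt{1-x^2} & P^* \end{pmatrix} = \begin{pmatrix} \tilde{P} & \I \tilde{Q} \sqrt{1-x^2}\\ \I \tilde{Q}^* \sqrt{1-x^2} & \tilde{P}^* \end{pmatrix} W(x)\, e^{\I \phi_d Z}.
\end{equation*}
Matching the top row entry-by-entry determines
\begin{equation*}
\tilde{P}(x) = x e^{-\I \phi_d} P(x) + (1-x^2) e^{\I \phi_d} Q(x),\qquad \tilde{Q}(x) = x e^{\I \phi_d} Q(x) - e^{-\I \phi_d} P(x),
\end{equation*}
and the bottom row is then automatically consistent because the right-hand side is a product of unitaries of the prescribed form.

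The crux is choosing $\phi_d$ so the degree actually drops. The would-be leading contribution to $\tilde{P}$ at $x^{d+1}$ equals $(p_d e^{-\I \phi_d} - q_{d-1} e^{\I \phi_d}) x^{d+1}$, where $p_d, q_{d-1}$ are the leading coefficients of $P, Q$ in the monomial basis. Matching the $x^{2d}$ coefficient of $|P|^2 + (1-x^2)|Q|^2 = 1$ yields $|p_d| = |q_{d-1}|$, so $p_d / q_{d-1}$ lies on the unit circle and a real $\phi_d$ with $e^{2\I \phi_d} = p_d / q_{d-1}$ exists (any $\phi_d$ works when both leading coefficients vanish). I would then invoke the parity constraint (2) to kill the would-be $x^{d}$ term of $\tilde{P}$ for free: neither $xP$ nor $(1-x^2)Q$ contains a monomial of parity $d \bmod 2$, so the first surviving term of $\tilde{P}$ has degree at most $d-1$ with parity $(d-1) \bmod 2$. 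An analogous argument reduces $\deg \tilde{Q}$ from $d$ to $d-2$ with parity $(d-2) \bmod 2$, and (3) for the reduced pair follows from unitarity of the displayed identity. The inductive hypothesis then produces $(\phi_0, \ldots, \phi_{d-1})$ and we set $\Phi := (\phi_0, \ldots, \phi_d)$.

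The main obstacle is the delicate interplay between the parity condition and the normalization: naive leading-coefficient cancellation only reduces $\deg \tilde{P}$ from $d+1$ to $d$, and without (2) the induction would fail to terminate. Tracking that the missing $x^d$ contribution is forced to vanish purely by parity, and doing the matching bookkeeping for $\tilde{Q}$, is the technical heart of the argument; everything else is algebraic manipulation of a $2\times 2$ unitary factorization.
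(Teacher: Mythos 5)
Your proposal is correct, but there is an important contextual point: the paper never proves this statement at all. It is imported verbatim as \cite[Theorem 4]{GilyenSuLowEtAl2019}, so there is no internal proof to compare against; what you have written is essentially a faithful reconstruction of the standard inductive argument from that cited reference. Your two key steps --- choosing $e^{2\I\phi_d} = p_d/q_{d-1}$ (legitimate because matching the $x^{2d}$ coefficient of the normalization condition forces $\abs{p_d}=\abs{q_{d-1}}$) to cancel the top-degree coefficient, and then invoking the parity constraint to kill the next coefficient for free --- are exactly the mechanism that makes the one-factor peeling terminate. It is worth noting that the paper \emph{does} prove the analogous reduction in the symmetric setting (\cref{lma:newPQ_coef,lma:phi_expression}, used to establish \cref{thm:existandunique}): there, to preserve the symmetry constraint, two factors are stripped simultaneously (one from each end of the product), the phase is fixed by $e^{2\I\phi_0} = \fq_{d-1}/\fp_d^*$, and the degree drops by two per step rather than one; your construction is the asymmetric, single-step analogue of that argument. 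One small inaccuracy in your forward direction: multiplying on the right by $W(x)e^{\I\phi_{d+1}Z}$ increases the degree of the upper-left entry by \emph{at most} one, not exactly one --- interior phases equal to $\pm\frac{\pi}{2}$ cause cancellation, as in the identity $W(x)e^{\pm\I\frac{\pi}{2}Z}W(x) = e^{\pm\I\frac{\pi}{2}Z}$ used elsewhere in the paper --- but since condition (1) only asserts upper bounds on the degrees, this does not affect the validity of your argument.
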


When the real component of $P$ is of interest only, the conditions for the existence of phase factors can be further simplified. 

\begin{corollary}[\textbf{Real quantum signal processing} {\cite[Corollary 5]{GilyenSuLowEtAl2019}}]\label{cor:complementary}
Let $f\in \RR[x]$ be a degree-d polynomial for some $d\geq 1$ such that 
\begin{itemize}
    \item $f(x)$ has parity $(d \mod 2)$,
    \item $\abs{f(x)}<1, \forall x\in[-1,1]$.
\end{itemize}
Then there exists some $P,Q\in \CC[x]$ satisfying properties \cref{{itm:1},{itm:2},{itm:3}} of \cref{thm:qsp} such that $f(x)=\Re[P(x)]$.
\end{corollary}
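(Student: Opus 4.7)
The plan is to construct complementary polynomials $P_\Im, Q \in \RR[x]$ directly from $f$ via a Fejér--Riesz-type factorization of $1 - f^2$ on the unit circle. Setting $P := f + \I\,P_\Im$, the identity $|P|^2 + (1-x^2)|Q|^2 = 1$ reduces to finding real polynomials $P_\Im$ of degree at most $d$ and $Q$ of degree at most $d-1$, with the prescribed parities, such that
\begin{equation*}
1 - f(x)^2 = P_\Im(x)^2 + (1-x^2)\,Q(x)^2.
\end{equation*}
Equivalently, the quantity $P_\Im(x) + \I\sqrt{1-x^2}\,Q(x)$ should have modulus squared equal to $1 - f(x)^2$ for $x \in [-1,1]$, so the task becomes a non-negative factorization problem.

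First I would pass to the unit circle via $x = (z+z^{-1})/2$ and $\sqrt{1-x^2} = (z - z^{-1})/(2\I)$ for $z = e^{\I\theta}$. Define the Laurent polynomial $\mathfrak{F}(z) := 1 - [f((z+z^{-1})/2)]^2$, whose nonzero coefficients lie in $z^{-2d}, \ldots, z^{2d}$. Since $|f(x)| < 1$ on $[-1,1]$ and $\mathfrak{F}$ is palindromic ($\mathfrak{F}(z^{-1}) = \mathfrak{F}(z)$) and real-valued on $|z|=1$, we have $\mathfrak{F}(z) > 0$ on the entire unit circle, and its zeros in $\CC \setminus \{0\}$ come in reciprocal pairs $\{r, 1/r\}$ by the palindromic symmetry. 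Selecting exactly one representative from each pair and absorbing multiplicities, I would obtain a polynomial $e(z) = z^{-d} \prod_{i=1}^{2d}(z - r_i)$ and a positive constant $\alpha$ such that $\mathfrak{F}(z) = \alpha\, e(z)\, e(z^{-1})$ as Laurent polynomials; this is precisely the construction underlying \cref{thm:admissible_pair}, so its proof applies directly.

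Next I would extract $P_\Im$ and $Q$ from the symmetric and antisymmetric components of $\sqrt{\alpha}\,e(z)$ under $z \leftrightarrow z^{-1}$:
\begin{equation*}
P_\Im(x) := \sqrt{\alpha}\,\frac{e(z)+e(z^{-1})}{2},\qquad Q(x) := \sqrt{\alpha}\,\frac{e(z)-e(z^{-1})}{2\I\sqrt{1-x^2}},
\end{equation*}
with $z = x + \I\sqrt{1-x^2}$ and $z^{-1} = x - \I\sqrt{1-x^2}$. By construction both expressions are invariant under $z \leftrightarrow z^{-1}$ and are therefore univariate polynomials in $x$; a short argument using the fact that $e(z) \pm e(z^{-1})$ are respectively the symmetric and antisymmetric parts shows $P_\Im, Q \in \RR[x]$ with $\deg P_\Im \leq d$ and $\deg Q \leq d-1$. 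A direct multiplication then yields $P_\Im(x)^2 + (1-x^2)\,Q(x)^2 = \alpha\, e(z)\,e(z^{-1}) = 1 - f(x)^2$, which is exactly condition \ref{itm:3} of \cref{thm:qsp}.

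The main obstacle will be establishing the correct parities: $P_\Im$ should have parity $(d \bmod 2)$ and $Q$ should have parity $((d-1) \bmod 2)$. Because $f$ has parity $(d \bmod 2)$, the Laurent polynomial $\mathfrak{F}$ is invariant under the substitution $z \mapsto -z$ (whose induced action on $x$ is $x \mapsto -x$), and this induces a $\ZZ/2$-symmetry on the set of reciprocal root pairs. The representatives $\{r_i\}$ must be chosen equivariantly with respect to this symmetry so that $e(z)$ acquires a definite parity under $z \mapsto -z$; this parity then translates through the symmetric/antisymmetric decomposition into the required parities of $P_\Im$ and $Q$. Verifying that this equivariant selection is always possible, and checking that the sharper bound $\deg Q \leq d-1$ holds (the factor $2\I\sqrt{1-x^2}$ in the denominator lowers the naive bound by one only if the antisymmetric numerator vanishes at $x = \pm 1$, which follows from the palindromic structure at $z = \pm 1$), is the principal technical content of the proof.
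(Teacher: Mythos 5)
Your plan is the Laurent-polynomial (Fej\'er--Riesz / Haah-style) factorization that this paper itself develops: the backward direction of \cref{thm:characterize_PQ} together with the recipe in \cref{re:construct_method} is exactly the construction you describe, so the route is sound and, as you note, \cref{thm:admissible_pair} does the heavy lifting. For the record, the paper does not prove \cref{cor:complementary} this way --- it cites GSLW, whose original argument factors $1-f(x)^2$ directly in the variable $x$ without passing to Laurent polynomials; the paper remarks in its discussion of constructive methods that the two viewpoints are equivalent.

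Two points in your writeup need repair, and both have essentially a one-line fix. First, invariance under $z\leftrightarrow z^{-1}$ only makes $P_\Im$ and $Q$ \emph{polynomials} in $x$; it does not make them \emph{real}. Realness requires the chosen multiset $\{r_i\}$ to be closed under complex conjugation, so that $\prod_i(z-r_i)$ has real coefficients: if from the conjugate reciprocal pairs $\{r,1/r\}$ and $\{\bar r, 1/\bar r\}$ you selected, say, $r$ and $1/\bar r$, your formulas would produce complex-valued $P_\Im$ and $Q$. Second, the ``equivariant selection'' you defer as the principal technical content (closure under $z\mapsto -z$, needed for parity) and the conjugation-closure above are both guaranteed simultaneously by the canonical choice: take $\mathcal{D}$ to be \emph{all} roots of $\mathfrak{F}$ in the open unit disc. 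Since $\abs{f}<1$ on $[-1,1]$, $\mathfrak{F}>0$ on the unit circle, so no roots lie on it and exactly one member of each reciprocal pair lies inside; and since conjugation and negation both preserve modulus, this $\mathcal{D}$ is automatically closed under both operations. With that choice $\alpha=\bigl(1-f(1)^2\bigr)/\prod_{r\in\mathcal{D}}(1-r)^2>0$, and the degree bound $\deg Q\le d-1$ that worried you is immediate: the antisymmetric part of $e$ is a combination of $z^k-z^{-k}$ with $k\ge 1$, and $(z^k-z^{-k})/(z-z^{-1})=U_{k-1}(x)$ has degree $k-1\le d-1$. Once these two selection constraints are made explicit, your proof is complete and coincides with the paper's own machinery (this disc choice is precisely the paper's admissible decomposition underlying the maximal solution).
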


\subsection{Quantum signal processing as an optimization problem}
The existence result in \cref{cor:complementary} enables the following optimization based strategy to find phase factors.
Given the target polynomial $f(x)\in \RR[x]$, we may minimize the $\mathcal{L}_2$ distance between $g(x,\Phi)$ and target function $f(x)$, \ie,
\begin{equation}
    L(\Phi)=\left(\int_{-1}^1\left[g(x,\Phi)-f(x)\right]^2 \rd x\right)^{\frac{1}{2}}.
\end{equation}
The desired phase factors $\Phi^*$ satisfy $L(\Phi^*)=0$. 

Since both $g(x,\Phi)$ and $f(\Phi)$ are polynomials of degree $d$ with definite parity, the number of degrees of freedom is $\wt{d} := \lceil \frac{d+1}{2} \rceil$. These polynomials are determined by their values at $x_k=\cos\left(\frac{2k-1}{4\wt{d}}\pi\right)$, $k=1,...,\wt{d}$, which are positive Chebyshev nodes of $T_{2 \wt{d}}(x)$. This leads to the following optimization problem,
\begin{equation}
    \Phi^* = \argmin_{\Phi \in [-\pi,\pi)^{d+1}} F(\Phi),\ F(\Phi) := \frac{1}{\wt{d}} \sum_{k=1}^{\wt{d}} \abs{g(x_k, \Phi) - f(x_k)}^2.
\end{equation}

\section{Symmetric quantum signal processing}

In this section, we present some basic properties of symmetric quantum signal processing. We first explain a constraint of the sign of the leading coefficient of $Q(x)$ when $d$ is odd. Then we introduce a constructive method for finding phase factors given suitable polynomials $\left(P,Q\right)$. In the end, we prove \cref{thm:existandunique}, which establishes a bijection between the global minimizers of \cref{eqn:optprob-intro} and all suitable complementary polynomials $\left(P_\Im, Q\right)$.

\subsection{Sign of the leading coefficient of $Q(x)$}
We first derive an explicit expression of the coefficients of $T_d$ and $U_{d-1}$ in the Chebyshev expansion of $P$ and $Q$ respectively.
\begin{lemma}\label{lma:leading_coef}
Let $\Phi\in\RR^{d+1}$ be a set  of phase factors, $P(x)$ and $Q(x)$ be defined by \cref{eq:qsp-gslw}. Considering the Chebyshev expansion of $P$ and $Q$ in \cref{eq:cheby_expansion}, one has
    \begin{equation}\label{eq:chebycoefPQ}
        p_d = e^{\I\left(\phi_0+\phi_d\right)}\prod_{j=1}^{d-1} \cos\left(\phi_j\right),\quad q_{d-1} = e^{\I\left(\phi_0-\phi_d\right)} \prod_{j=1}^{d-1} \cos\left(\phi_j\right).
    \end{equation}
\end{lemma}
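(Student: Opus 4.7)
The plan is to exploit the spectral decomposition of the Pauli $X$ matrix to recast $U(\cos\theta,\Phi)$ as a finite Fourier series in $\theta=\arccos(x)$, and then read off both leading Chebyshev coefficients $p_d$ and $q_{d-1}$ directly from the unique highest-frequency mode.

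First I would substitute $x=\cos\theta$, so that $W(x)=\cos\theta\,I+\I\sin\theta\,X=e^{\I\theta X}$. Writing $X=\ket{+}\bra{+}-\ket{-}\bra{-}$ with $\ket{\pm}:=(\ket{0}\pm\ket{1})/\sqrt{2}$ gives $e^{\I\theta X}=e^{\I\theta}\ket{+}\bra{+}+e^{-\I\theta}\ket{-}\bra{-}$. Substituting into the product in \cref{eq:qsp-gslw} and distributing over the two rank-one projectors yields
\begin{equation*}
U(\cos\theta,\Phi)=\sum_{s\in\{+,-\}^d}e^{\I(s_1+\cdots+s_d)\theta}\,e^{\I\phi_0 Z}\prod_{j=1}^{d}\ket{s_j}\bra{s_j}e^{\I\phi_j Z},
\end{equation*}
a Fourier expansion supported on frequencies $\{-d,-d+2,\ldots,d\}$. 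The unique string contributing to frequency $+d$ is $s_1=\cdots=s_d=+$, and because $\ket{+}\bra{+}$ is rank one, that product telescopes through the scalar identity $\bra{+}e^{\I\phi_j Z}\ket{+}=\cos\phi_j$, so the $e^{\I d\theta}$ matrix coefficient reduces to $\left(\prod_{j=1}^{d-1}\cos\phi_j\right)e^{\I\phi_0 Z}\ket{+}\bra{+}e^{\I\phi_d Z}$; the $(0,0)$ and $(0,1)$ entries of this rank-one outer product are $\tfrac{1}{2}e^{\I(\phi_0+\phi_d)}$ and $\tfrac{1}{2}e^{\I(\phi_0-\phi_d)}$, respectively.

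Finally I would match these against the Fourier expansions implied by \cref{eq:qsp-gslw}: from $P(\cos\theta)=\sum_k p_k\cos(k\theta)$ the $e^{\I d\theta}$ coefficient equals $p_d/2$, and from $\I Q(\cos\theta)\sin\theta=\sum_k q_k\,\I\sin((k+1)\theta)$ (using $U_k(\cos\theta)\sin\theta=\sin((k+1)\theta)$) the coefficient equals $q_{d-1}/2$, with $k=d-1$ the unique contributor. Equating entry by entry delivers \cref{eq:chebycoefPQ}. There is no real obstacle; the only point requiring care is bookkeeping the factors of $1/2$ when translating between the Fourier basis $\{e^{\I n\theta}\}$ and the Chebyshev bases $\{T_n\},\{U_n\}$, together with noting that the parity constraints on $P$ and $Q$ force the Fourier supports on both sides to match, so that the extraction at frequency $d$ is unambiguous.
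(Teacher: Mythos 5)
Your proof is correct, but it takes a genuinely different route from the paper's. The paper stays entirely in the polynomial variable $x$: it pulls out the outer phases to get $P(x) = e^{\I(\phi_0+\phi_d)}\bra{0}W(x)\prod_{j=1}^{d-1}\bigl[e^{\I\phi_j Z}W(x)\bigr]\ket{0}$, expands each middle rotation as $e^{\I\phi_j Z}=\cos(\phi_j)I+\I\sin(\phi_j)Z$, and uses the anticommutation identity $W(x)ZW(x)=Z$ to show every $\sin(\phi_j)$ term loses two powers of $W(x)$ and hence two degrees; an induction then isolates the unique top-degree term $\prod_{j=1}^{d-1}\cos(\phi_j)\,W(x)^d$, whose entries $T_d(x)$ and $\I\sqrt{1-x^2}\,U_{d-1}(x)$ give the claimed coefficients. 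You instead expand each $W(x)=e^{\I\theta X}$ in the $X$-eigenbasis, obtaining a sum over $2^d$ sign strings, i.e., a matrix-valued Fourier series in $\theta=\arccos(x)$, and extract the unique string at frequency $+d$, where the rank-one projectors telescope through $\bra{+}e^{\I\phi_j Z}\ket{+}=\cos\phi_j$. This is essentially the Laurent-polynomial (Haah-style) picture rather than the paper's Chebyshev-algebra picture: your version buys the entire frequency support in one shot with no induction, and computes both entries of the leading coefficient matrix simultaneously, while the paper's version avoids the change of variables and matches the Chebyshev-basis conventions used throughout the rest of the paper. The one step you should make explicit is the final coefficient matching: $\theta=\arccos(x)$ only ranges over $[0,\pi]$, so you are equating two trigonometric polynomials that agree on an interval; this does force them to agree identically (a nonzero trigonometric polynomial of degree $d$ has at most $2d$ zeros on the circle), but that observation is what licenses reading off the $e^{\I d\theta}$ coefficient, and it should be stated rather than attributed to parity.
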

\begin{proof}
See \cref{sec:proof_leading_coef}.
\end{proof}

After imposing the symmetry constraint on phase factors,  we find that there is an additional constraint on the leading coefficient of $Q$ when $d$ is odd. It originates from the observation that when $\Phi$ is symmetric, one has $\phi_{0}=\phi_{d}$ and then the second equality in \cref{eq:chebycoefPQ} becomes 
\begin{equation}
q_{d-1} = \prod_{j=1}^{d-1} \cos\left(\phi_j\right)=\prod_{j=1}^{\wt{d}-1} \cos^2\left(\phi_j\right)\ge 0.
\end{equation}
Note that $\deg(Q)$ may be less than $d-1$. The more precise statement is given in \cref{lma:leading_coef_sym}.

\begin{lemma}\label{lma:leading_coef_sym}
Let $\Phi\in\RR^{d+1}$ be a set  of phase factors, $P(x)$ and $Q(x)$ be defined by \cref{eq:qsp-gslw}. If $\Phi$ is symmetric and $d$ is odd, then the leading coefficient of $Q$ in the Chebyshev basis has the same sign as $\left(-1\right)^{\frac{d-1-\deg (Q)}{2}}$.
\end{lemma}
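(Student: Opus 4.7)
The plan is to prove the lemma by induction on $k := (d-1)/2 \geq 0$, exploiting a factorization of $U(x,\Phi)$ specific to the odd-$d$ symmetric case. I will first observe that each matrix factor of $U$ (namely $e^{\I \phi_j Z}$ and $W(x)$) satisfies $M^\top = M$, so combined with the phase symmetry $\phi_j = \phi_{d-j}$ I can write $U(x, \Phi) = V_L(x) W(x) V_L(x)^\top$, where $V_L := e^{\I \phi_0 Z} W e^{\I \phi_1 Z} W \cdots W e^{\I \phi_k Z}$ is the left-half partial product. By \cref{thm:qsp}, $V_L$ is a QSP circuit with polynomials $A, B \in \mathbb{C}[x]$ of degrees at most $k$ and $k-1$, and a direct $2 \times 2$ multiplication yields the compact formula $Q(x) = |A(x) + xB(x)|^2 - |B(x)|^2$. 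Since converting between the monomial and Chebyshev-$U$ bases involves a strictly positive rescaling factor, the sign of the leading Chebyshev coefficient equals the sign of the leading monomial coefficient, so I will work with monomial leading coefficients throughout. The base case $k = 0$ ($d=1$) is immediate: $V_L = e^{\I \phi_0 Z}$ gives $A = e^{\I \phi_0}$, $B = 0$, hence $Q \equiv 1$ and the claim holds trivially.

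For the inductive step, the first subcase is that none of $\phi_1, \ldots, \phi_k$ equals $\pm \pi/2$. I will apply \cref{lma:leading_coef} to the QSP circuit $V_L$: combining the leading monomial coefficients of $A$ and $B$, the coefficient of $x^k$ in $A + xB$ equals $2^k e^{\I \phi_0} \prod_{j=1}^k \cos \phi_j \neq 0$, so $\deg Q = 2k = d-1$ with strictly positive leading coefficient, consistent with $(-1)^0 = 1$. Otherwise, I pick any $m \in \{1, \ldots, k\}$ with $\phi_m = \pm \pi/2$. A direct $2 \times 2$ calculation verifies the identity $W(x) Z W(x) = Z$, from which I will derive two corollaries: (i) $W e^{\I (\pm \pi/2) Z} W = e^{\I (\pm \pi/2) Z}$, and (ii) $W e^{\I (\pm \pi/2) Z} W e^{\I (\pm \pi/2) Z} W = -W$. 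When $m < k$, identity (i) lets me rewrite the substring $e^{\I \phi_{m-1} Z} W e^{\I \phi_m Z} W e^{\I \phi_{m+1} Z}$ inside $V_L$ as the single diagonal factor $e^{\I (\phi_{m-1} + \phi_m + \phi_{m+1}) Z}$; this is a matrix identity, so $V_L$ (and hence $U$ and $Q$) is unchanged, but $U$ is now expressed as a symmetric QSP of smaller degree $d' = d - 4$ to which I apply the induction hypothesis. Since $(d - d')/2 = 2$ is even, the correct sign is inherited. When $m = k$, identity (ii) collapses the middle block $W e^{\I \phi_k Z} W e^{\I \phi_k Z} W$ of $U$ into $-W$, yielding $U = -U'$, where $U'$ is the symmetric QSP of degree $d' = d - 2$ with phases $(\phi_0, \ldots, \phi_{k-1})$. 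Consequently $Q = -Q'$ with $\deg Q = \deg Q'$; since $(d-1-\deg Q)/2 - (d'-1-\deg Q')/2 = 1$, the single sign flip precisely compensates the parity shift, closing the induction.

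I anticipate the main obstacle is not any single step but rather assembling the pieces: establishing the factorization $U = V_L W V_L^\top$, deriving $Q = |A+xB|^2 - |B|^2$, verifying $WZW = Z$ together with corollaries (i) and (ii), and coordinating the inductive signs with the parity of $(d-1-\deg Q)/2$. Each piece is a routine $2 \times 2$ calculation, but the bookkeeping—particularly in the $m = k$ case, where the minus sign arises precisely from $(\pm \I)^2 = -1$ combined with repeated application of $WZW = Z$—requires care.
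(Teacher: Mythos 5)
Your proof is correct and follows essentially the same route as the paper's: the heart of both arguments is the identity $W(x)\,e^{\pm\I\pi/2\, Z}\,W(x) = e^{\pm\I\pi/2\, Z}$, used to collapse any interior $\pm\pi/2$ phase (degree drop of $4$, no sign change) or the middle occurrence (degree drop of $2$, exactly one sign flip), with the sign bookkeeping closed by induction/iteration. The only substantive difference is the generic case: you route it through the half-circuit factorization $U = V_L W V_L^\top$ and the formula $Q = \abs{A+xB}^2 - \abs{B}^2$, applying \cref{lma:leading_coef} to $V_L$, whereas the paper applies \cref{lma:leading_coef} directly to the full symmetric circuit to get $q_{d-1} = \prod_{j=1}^{\wt{d}-1}\cos^2(\phi_j) > 0$ --- both computations produce the same positive leading coefficient.
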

\begin{proof}
See \cref{sec:proof_leading_coef_sym}.
\end{proof}

As a remark, the statement in \cref{lma:leading_coef_sym} also holds with the  Chebyshev basis replaced by the monomial basis, which is an consequence of directly applying the relation in \cref{eq:relation_monomial_chebyshev}.
\begin{remark}\label{re:converse_thm1}
By directly applying \cref{lma:leading_coef_sym} and \cref{thm:qsp}, one gets that the converse of \cref{thm:existandunique} also holds, given that $\deg(P)= d$ and $\deg(Q)= d-1$.
\end{remark}

\subsection{Existence of symmetric phase factors}\label{sec:existence}
We present a useful lemma, which evaluates the outmost symmetric phase factor, and reduces the polynomial degree by $2$. It also connects the leading coefficients in the monomial basis before and after the reduction. The result of \cref{lma:newPQ_coef} will be used throughout the paper. 
To clarify, we only consider the expansion of polynomial in the monomial basis in rest of this section.
\begin{lemma}\label{lma:newPQ_coef}
Suppose that $P(x)\in \CC[x],Q(x)\in\RR[x]$ satisfy the conditions in \cref{thm:existandunique} and $d\geq 2$. Choose 
\begin{equation}
e^{2\I \phi_0}=\frac{\fq_{d-1}}{\fp_{d}^*}
\end{equation}
and define
\begin{equation}\label{eq:constructP1Q1}
\begin{split}
&\begin{pmatrix}
P^{(1)}(x) & \I Q^{(1)}(x)\sqrt{1-x^2}\\
\I Q^{(1)}(x)\sqrt{1-x^2} & \left(P^{(1)}\right)^* (x)
\end{pmatrix}\\
&:= W^{-1} e^{-\I\phi_0 Z} \begin{pmatrix}
P(x) & \I Q(x)\sqrt{1-x^2}\\
\I Q(x) \sqrt{1-x^2} & P^* (x)
\end{pmatrix} e^{-\I\phi_0 Z} W^{-1}.
\end{split}
\end{equation}
Consider the expansion of $P$, $Q$, $P^{(1)}$ and $Q^{(1)}$ in the monomial basis, 
\begin{equation}
\begin{split}
    P(x)&=\sum_{i=0}^{d} \fp_i x^i, \quad\quad Q(x)=\sum_{j=0}^{d-1} \fq_j x^j.\\
    P^{(1)}(x)&=\sum_{i=0}^{d+2} \fp_i^{(1)} x^i, \quad\quad Q^{(1)}(x)=\sum_{j=0}^{d+1} \fq_j^{(1)} x^j.
\end{split}
\end{equation}
Then $Q^{(1)}(x)\in \RR[x]$ and $P^{(1)},Q^{(1)}$ satisfy the conditions (1)-(3) in \cref{thm:existandunique} with $d$ replaced by $d-2$. 

Furthermore, if $d=2$, then 
\begin{equation}
P^{(1)}=-\fp_0^* e^{2\I \phi_0}=-\left(\frac{\fp_0}{\fp_2}\right)^*\fq_1, \quad Q^{(1)}=0.
\end{equation}

If $d\geq 3$, then 
\begin{equation}\label{eq:monomial_PQ_leading}
\begin{split}
    &\fp_{d-2}^{(1)}=\frac{\fq_{d-1}}{4}-\frac{1}{\fq_{d-1}}\abs{\Im[\fp_{d-2}e^{-2\I\phi_0}]}^2+ \I \Im[\fp_{d-2}e^{-2\I\phi_0}],\\
    &\fq_{d-3}^{(1)}=\frac{\fq_{d-1}}{4}+\frac{1}{\fq_{d-1}}\abs{\Im[\fp_{d-2}e^{-2\I\phi_0}]}^2.
\end{split}
\end{equation}
And $P^{(1)},Q^{(1)}$ also satisfy the condition (4) in \cref{thm:existandunique}.
\end{lemma}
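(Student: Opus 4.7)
The plan is to carry out the matrix product in \cref{eq:constructP1Q1} explicitly, read off $P^{(1)}$ and $Q^{(1)}$, and then verify each claim by combining the choice of $\phi_0$ with polynomial identities derived from the normalization $|P(x)|^2+(1-x^2)|Q(x)|^2=1$. Using $e^{-\I\phi_0 Z}=\mathrm{diag}(e^{-\I\phi_0},e^{\I\phi_0})$ and $W^{-1}(x)=xI-\I\sqrt{1-x^2}X$, the two-sided multiplication yields
\begin{align*}
P^{(1)}(x) &= x^2 e^{-2\I\phi_0} P(x) - (1-x^2) e^{2\I\phi_0} P^*(x) + 2x(1-x^2) Q(x),\\
Q^{(1)}(x) &= (2x^2-1) Q(x) - 2x\,\Re\!\left[e^{-2\I\phi_0} P(x)\right],
\end{align*}
and the $\sqrt{1-x^2}$ factor of the off-diagonal entries comes out correctly. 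Reality of $Q^{(1)}$ is immediate from this formula. The identity $|P^{(1)}|^2+(1-x^2)|Q^{(1)}|^2=1$ is inherited because the map $M\mapsto W^{-1}e^{-\I\phi_0 Z}Me^{-\I\phi_0 Z}W^{-1}$ preserves membership in $\mathrm{SU}(2)$, and the parity claims are verified term-by-term from the parity hypotheses on $P$ and $Q$.

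The main work is the degree reduction. Writing $\gamma:=e^{-2\I\phi_0}\fp_{d-2}$, the choice $e^{2\I\phi_0}=\fq_{d-1}/\fp_d^*$ combined with $|\fp_d|=|\fq_{d-1}|$ (the $x^{2d}$ coefficient of the normalization identity) makes the coefficients of $x^{d+2}$ in $P^{(1)}$ and of $x^{d+1}$ in $Q^{(1)}$ vanish, and parity removes the adjacent leading terms automatically. The more delicate cancellation at $x^d$ in $P^{(1)}$ and $x^{d-1}$ in $Q^{(1)}$ relies on the $x^{2d-2}$ coefficient of the normalization identity, which after using $e^{-2\I\phi_0}=\fq_{d-1}/\fp_d$ simplifies to $\fq_{d-3}=\Re[\gamma]+\tfrac{1}{2}\fq_{d-1}$; substituting this into the degree-$d$ and degree-$(d-1)$ terms makes them cancel exactly, so $\deg P^{(1)}\le d-2$ and $\deg Q^{(1)}\le d-3$. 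The closed forms in \cref{eq:monomial_PQ_leading} are obtained by computing the $x^{d-2}$ coefficient of $P^{(1)}$ and the $x^{d-3}$ coefficient of $Q^{(1)}$, separating into real and imaginary parts, and then invoking the $x^{2d-4}$ normalization identity to eliminate $\fq_{d-5}$ and $\Re[e^{-2\I\phi_0}\fp_{d-4}]$; the imaginary part of $\fp_{d-2}^{(1)}$ collapses immediately to $\Im[\gamma]$.

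With the closed form $\fq_{d-3}^{(1)}=\fq_{d-1}/4+|\Im[\gamma]|^2/\fq_{d-1}$ in hand, condition (4) is immediate: when $d$ is odd, hypothesis (4) gives $\fq_{d-1}>0$, so $\fq_{d-3}^{(1)}>0$. The degenerate case $d=2$ is verified by direct substitution of $P=\fp_0+\fp_2 x^2$ and $Q=\fq_1 x$; the $x^0$ and $x^2$ coefficients of the normalization give $|\fp_0|=1$ and $\Re[\fp_0\fp_2^*]=-\fq_1^2/2$, and these collapse $Q^{(1)}$ to zero and yield $P^{(1)}=-\fp_0^* e^{2\I\phi_0}$. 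The main technical obstacle is the coefficient bookkeeping in the degree-reduction step: for each relevant power of $x$ one must identify which of the three summands contribute and invoke exactly the right consequence of the normalization identity, but each individual computation is elementary once the identifications are made.
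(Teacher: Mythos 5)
Your proposal is correct and follows essentially the same route as the paper's proof: the explicit formulas $P^{(1)} = x^2 e^{-2\I\phi_0}P - (1-x^2)e^{2\I\phi_0}P^* + 2x(1-x^2)Q$ and $Q^{(1)} = (2x^2-1)Q - 2x\Re[e^{-2\I\phi_0}P]$ match the paper's, and your two normalization identities (the $x^{2d-2}$ coefficient giving $\fq_{d-3}=\Re[\gamma]+\tfrac{1}{2}\fq_{d-1}$, and the $x^{2d-4}$ coefficient eliminating $\fq_{d-5}$ and $\Re[e^{-2\I\phi_0}\fp_{d-4}]$) are exactly the paper's \cref{eq:equality1,eq:equality2} in its auxiliary \cref{lma:normalization_property}. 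The only organizational difference is that the paper isolates those identities in a separate lemma before the coefficient bookkeeping, whereas you fold them into the main argument.
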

\begin{proof}
See \cref{sec:proof_newPQ_coef}.
\end{proof}

For convenience we denote $\left(P^{(0)},Q^{(0)}\right)=(P,Q)$. Inspired by \cref{lma:newPQ_coef}, we choose $\phi_\ell$ to satisfy 
\begin{equation}
e^{2\I \phi_\ell}=\frac{\fp^{(\ell)}_{d-2\ell}}{\fq^{(\ell)}_{d-1-2\ell}}
\end{equation}
and repeat the construction 
\begin{equation}\label{eq:construct_newPQ}
\begin{split}
&\begin{pmatrix}
P^{(\ell+1)}(x) & \I Q^{(\ell+1)}(x)\sqrt{1-x^2}\\
\I Q^{(\ell+1)}(x)\sqrt{1-x^2} & \left(P^{(\ell+1)}\right)^* (x)
\end{pmatrix}\\
&:= W^{-1} e^{-\I\phi_\ell Z} \begin{pmatrix}
P^{(\ell)}(x) & \I Q^{(\ell)}(x)\sqrt{1-x^2}\\
\I Q^{(\ell)}(x) \sqrt{1-x^2} & \left(P^{(\ell)}\right)^* (x)
\end{pmatrix} e^{-\I\phi_\ell Z} W^{-1}.
\end{split}
\end{equation}
as long as $\mathrm{deg}\left(P^{(\ell)}\right)\geq 2$. Notice that 
\begin{equation*}
    \deg\left(P^{(\ell)}\right)=\deg\left(P^{(\ell-1)}\right)-2=\deg\left(P^{(0)}\right)-2\ell.
\end{equation*}
Thus, the iteration stops when $\ell=\wt{d}-1$. Moreover, $\deg(P^{(\wt{d}-1)})=0$ if $d$ is even, and
$\deg(P^{(\wt{d}-1)})=1$ if $d$ is odd. For convenience, we introduce the notation of $\hat{d}$,
\begin{equation}\label{eq:definition_hat_d}
    \hat{d}:=\begin{cases}
    \wt{d}-1 & \text{ if } d \text{ is odd},\\
    \wt{d}-2 & \text{ if } d \text{ is even}.
    \end{cases}
\end{equation}

\begin{lemma}\label{lma:phi_expression}
Suppose that $P(x), Q(x)$ satisfy the conditions in \cref{thm:existandunique}. Then there exists a set of symmetric phase factor $\Phi:=(\phi_0,\phi_1,\cdots,\phi_1,\phi_0) \in D_d$ such that \cref{eq:qsp-gslw} holds, where $\phi_\ell$ satisfies
\begin{equation}\label{eq:value_phi}
    e^{2\I \phi_{\ell}} =\frac{\fp^{(\ell)}_{d-2\ell}}{\fq^{(\ell)}_{d-1-2\ell}}, \quad \forall 0\leq \ell\leq \hat{d}.
\end{equation}
In particular, if $d$ is even, $\phi_{\wt{d}-1}$ satisfies 
\begin{equation}\label{eq:value_phi_end}
    e^{\I \phi_{\wt{d}-1}} =\fp^{(\wt{d}-1)}_0.
\end{equation}
Here  $\fp^{(\ell)}_{d-2\ell}$ and $\fq^{(\ell)}_{d-1-2\ell}$ are the leading coefficient of the expansion of $P^{(\ell)}$ and $Q^{(\ell)}$ in the monomial basis respectively. We use the notation $\left(P^{(0)},Q^{(0)}\right)=\left(P,Q\right)$ and define $(P^{(\ell)},Q^{(\ell)})$ by \cref{eq:construct_newPQ} for $1\leq \ell\leq \wt{d}-1$. 
\end{lemma}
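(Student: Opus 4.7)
I would prove the lemma by induction on $d$, using \cref{lma:newPQ_coef} to strip off the outermost pair of (equal, by symmetry) phases at each step. The base cases are $d=0$ and $d=1$; the inductive step reduces a degree-$d$ problem to a degree-$(d-2)$ one.

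\textbf{Base cases.} For $d=0$, the normalization condition forces $Q=0$ and $|P|=1$, so the unique $\phi_0\in[-\pi,\pi)$ with $e^{\I\phi_0}=\fp_0^{(0)}$ gives $\Phi=(\phi_0)\in D_0$ and realizes \cref{eq:UPQ}, in agreement with \cref{eq:value_phi_end}. For $d=1$, parity, normalization, and condition (4) force $P(x)=\fp_1 x$ and $Q(x)=1$ with $|\fp_1|=1$; taking the unique $\phi_0\in[-\pi/2,\pi/2)$ with $e^{2\I\phi_0}=\fp_1=\fp_1^{(0)}/\fq_0^{(0)}$, a direct matrix multiplication verifies that $U(x,(\phi_0,\phi_0))$ equals the right-hand side of \cref{eq:UPQ}.

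\textbf{Inductive step.} For $d\geq 2$, matching leading coefficients of $x^{2d}$ on both sides of the normalization condition gives $|\fp_d^{(0)}|=|\fq_{d-1}^{(0)}|>0$, so the ratio $\fp_d^{(0)}/\fq_{d-1}^{(0)}$ is unimodular and admits a unique $\phi_0\in[-\pi/2,\pi/2)$ with $e^{2\I\phi_0}=\fp_d^{(0)}/\fq_{d-1}^{(0)}$. \cref{lma:newPQ_coef} then produces $(P^{(1)},Q^{(1)})$ satisfying the hypotheses of \cref{thm:existandunique} with $d$ replaced by $d-2$, and the inductive hypothesis yields a symmetric $\Phi'=(\phi_1',\ldots,\phi_1')\in D_{d-2}$ realizing $(P^{(1)},Q^{(1)})$ via \cref{eq:qsp-gslw} and obeying \cref{eq:value_phi,eq:value_phi_end} for the reduced problem. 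Set $\Phi:=(\phi_0,\phi_1',\ldots,\phi_1',\phi_0)$. The key structural identity
\begin{equation*}
    U(x,\Phi)=e^{\I\phi_0 Z}\,W(x)\,U(x,\Phi')\,W(x)\,e^{\I\phi_0 Z}
\end{equation*}
follows directly from the definition of $U$ for symmetric phases. Rearranging \cref{eq:constructP1Q1} in reverse expresses the right-hand side of \cref{eq:UPQ} for $(P,Q)$ as $e^{\I\phi_0 Z}W(x)$ times the matrix of $(P^{(1)},Q^{(1)})$ times $W(x)e^{\I\phi_0 Z}$, so combining with the inductive claim for $\Phi'$ yields the desired $U(x,\Phi)=$ right-hand side of \cref{eq:UPQ}.

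\textbf{Bookkeeping and obstacle.} It remains to verify $\Phi\in D_d$ and the formulas \cref{eq:value_phi,eq:value_phi_end} for all $\ell$. The outer $\phi_0$ lies in $[-\pi/2,\pi/2)$ by construction; for $d$ even, a short index calculation using $\wt d=d/2+1$ shows that the middle phase of $\Phi$ coincides with the middle phase of $\Phi'$ (which corresponds to a reduced problem of even degree $d-2$), so it inherits the range $[-\pi,\pi)$ from the inductive hypothesis. Since the iterates $(P^{(\ell)},Q^{(\ell)})_{\ell\geq 1}$ of the original problem are exactly the iterates of the reduced problem initiated from $(P^{(1)},Q^{(1)})$, the formula \cref{eq:value_phi} for $\ell\geq 1$ is inherited directly from the inductive statement after reindexing $\ell\mapsto\ell-1$, and \cref{eq:value_phi_end} for even $d$ is propagated up from the $d=0$ base case at the bottom of the recursion. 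The main obstacle is purely bookkeeping—tracking the index shifts between $\Phi$, $\Phi'$, and the iterated pairs $(P^{(\ell)},Q^{(\ell)})$—rather than any new analytical ingredient, since the structural identity above and \cref{lma:newPQ_coef} do all the substantive work.
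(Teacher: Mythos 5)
Your proposal is correct and follows essentially the same route as the paper: both peel off the outermost symmetric phase pair via \cref{lma:newPQ_coef}, reduce the degree by two, and terminate at the degree-$0$/degree-$1$ cases, with your induction on $d$ being just a repackaging of the paper's explicit iteration down to $\ell=\wt{d}-1$. The base cases you check ($d=0$ with $Q=0$, $|P|=1$; $d=1$ with $P=\fp_1 x$, $Q=1$) coincide with the paper's terminal even and odd cases, so no genuinely different ingredient is introduced.
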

\begin{proof}
By repeating the construction in \cref{eq:construct_newPQ}, we obtain a sequence of polynomials $\left(P^{(\ell)}, Q^{(\ell)}\right)$ where $0\leq \ell\leq \wt{d}-1$. \cref{eq:value_phi} determines the value of $\phi_{\ell}$ over $D_d$ for $0\leq \ell\leq \wt{d}-2$.

When $d$ is odd, $\deg \left(P^{(\wt{d}-1)}\right)=1$ and $\deg \left(Q^{(\wt{d}-1)}\right)=0$. \cref{lma:newPQ_coef} implies that the leading coefficient of $Q^{(\wt{d}-1)}$ is positive as well. By the normalization condition, $P^{(\wt{d}-1)}$ must be of the form $c x$ where $\abs{c}=1$ and $Q^{(\wt{d}-1)}$ must be 1. Choose $\phi_{\wt{d}-1}\in [-\frac{\pi}{2},\frac{\pi}{2})$ according to \cref{eq:value_phi} and one can check
\begin{equation*}
\begin{pmatrix}
P^{(\wt{d}-1)}(x) & \I Q^{(\wt{d}-1)}(x)\sqrt{1-x^2}\\
\I Q^{(\wt{d}-1)}(x)\sqrt{1-x^2} & \left(P^{(\wt{d}-1)}\right)^* (x)
\end{pmatrix} = e^{\I \phi_{\wt{d}-1}Z} W e^{\I \phi_{\wt{d}-1}Z}.
\end{equation*}
In this way, $\Phi=(\phi_0,\cdots,\phi_{\wt{d}-1}, \phi_{\wt{d}-1},\cdots,\phi_0)$ satisfies the requirements.

When $d$ is even, we know that $P^{(\wt{d}-1)}$ is a constant of absolute value $1$ and $Q^{(\wt{d}-1)}=0$, \ie, 
\begin{equation*}
\begin{pmatrix}
P^{(\wt{d}-1)}(x) & \I Q^{(\wt{d}-1)}(x)\sqrt{1-x^2}\\
\I Q^{(\wt{d}-1)}(x)\sqrt{1-x^2} & \left(P^{(\wt{d}-1)}\right)^* (x)
\end{pmatrix}  =e^{\I\phi_{\wt{d}-1}Z},
\end{equation*}
where $\phi_{\wt{d}-1}$ satisfies \cref{eq:value_phi_end}. Furthermore, we may choose $\phi_{\wt{d}-1}\in[-\pi,\pi)$ and $\Phi=(\phi_0,\cdots,\phi_{\wt{d}-2},\phi_{\wt{d}-1}, \phi_{\wt{d}-2},\cdots,\phi_0)$ satisfies the requirements.
\end{proof}

\subsection{Proof of \cref{thm:existandunique}}
Now we are ready to prove \cref{thm:existandunique}.
\begin{proof}[Proof of \cref{thm:existandunique}]
The existence is given by \cref{lma:phi_expression}. We only need to show the uniqueness. If there exists another set of symmetric phase factors $\Psi:=(\psi_0,\psi_1,\cdots,\psi_1,\psi_0)\in D_d$ satisfying the requirement, we may consider
\begin{equation*}
\begin{split}
&\begin{pmatrix}
P^{(1)}_\Psi(x) & \I Q^{(1)}_\Psi(x)\sqrt{1-x^2}\\
\I Q^{(1)}_\Psi(x)\sqrt{1-x^2} & \left(P^{(1)}_\Psi\right)^* (x)
\end{pmatrix}\\
&:= W^{-1} e^{-\I\psi_0 Z} \begin{pmatrix}
P(x) & \I Q(x)\sqrt{1-x^2}\\
\I Q(x) \sqrt{1-x^2} & P^* (x)
\end{pmatrix} e^{-\I\psi_0 Z} W^{-1}.
\end{split}
\end{equation*}
Since $P^{(1)}_\Psi(x)$ can be obtained by a set of symmetric phase factors of length $d-2$, the coefficient of $x^{d+2}$ of $P^{(1)}_\Psi$ is zero, \ie,
\begin{equation*}
\fp^{(1)}_{d+2}= -2 \fq_{d-1} + \fp_d^* e^{2 \I \psi_0} + \fp_d e^{-2\I  \psi_0} =0,
\end{equation*}
which is obtained by applying \cref{eq:tiledP}. Together with $\abs{\fp_d}=\abs{\fq_{d-1}}$, it leads to $e^{2\I \psi_0}= \frac{\fq_{d-1}}{\fp_d^*}$ and this equation has only one solution over $[-\frac{\pi}{2},\frac{\pi}{2})$, hence $\psi_0=\phi_0$. Then $\Phi=\Psi$ follows inductively.

\end{proof}

\subsection{Optimization based method for finding symmetric phase factors}

Ref.~\cite{DongMengWhaleyEtAl2021} suggests that the optimization problem \cref{eqn:optprob-intro} should be solved by imposing the symmetry constraint on the phase factors. This allows us to use the reduced phase factors $\wt{\Phi}=(\phi_0,\phi_1,\cdots,\phi_{\wt{d}-1})$, as optimization variables.  For any scalar valued function of the set of reduced phase factors, we use subscripts to represent the partial derivative with respect to the corresponding phase factor, for example, $F_i(\wt{\Phi}) := \partial_{\phi_i} F(\wt{\Phi})$. We also use subscripts to represent the corresponding entry of any matrix. These notations and conventions can simplify the discussion in \cref{{sec:local_convergence},{sec:local_strong_convexity}}.

The Hessian matrix of the problem in \cref{eqn:optprob-intro} can be written \REV{element-wise} as
\begin{equation}\label{eq:hess-proof}
\begin{split}
    F_{ij}(\wt{\Phi})&=\frac{2}{\wt{d}}\sum_{k=1}^{\wt{d}} g_i(x_k,\wt{\Phi})g_j(x_k,\wt{\Phi})+\frac{2}{\wt{d}}\sum_{k=1}^{\wt{d}} [g(x_k,\wt{\Phi})-f(x_k)]g_{ij}(x_k,\wt{\Phi})\\
    &=\frac{2}{\wt{d}}(A^\top A)_{ij}+R_{ij},
\end{split}
\end{equation}
where $A(\wt{\Phi}) := [g_j(x_i,\wt{\Phi})]\in \RR^{\wt{d}\times \wt{d}}$ is the Jacobian matrix of $\wt{\Phi} \mapsto \left(g(x_1,\wt{\Phi}),\dots,g(x_{\wt{d}},\wt{\Phi})\right)^\top$, and $R$ is a real $\wt{d}\times \wt{d}$ matrix closely related to the residual with entry (note that $R_{ij}$ does not mean the partial derivative here) 
\begin{displaymath}
R_{ij}(\wt{\Phi}):=\frac{2}{\wt{d}}\sum_{k=1}^{\wt{d}} [g(x_k,\wt{\Phi})-f(x_k)]g_{ij}(x_k,\wt{\Phi}).
\end{displaymath}
Due to the existence of phase factors stated in \cref{thm:existandunique}, at the optimal point $\wt{\Phi}^*$, the second term vanishes, namely $R(\wt{\Phi}^*) = 0$, and $\text{Hess}(\wt{\Phi}^*)=\frac{2}{\wt{d}}(A^*)^\top A^*$, where $A^*:=A(\wt{\Phi}^*)$ and $\wt{\Phi}^*$ are the reduced phase factors corresponding to symmetric phase factors $\Phi^*$.

\section{Global minimizer}
According to the last section, the problem of finding all global minima of the optimization problem is converted to that of finding all admissible pairs of $(P,Q)$ associated with the target polynomial $f$. In this section, we provide a characterization of \emph{all} possible complementary polynomials $(P_\Im,Q)$.

\subsection{Characterizing complementary polynomials}
Before constructing all possible pairs of $(P(x),Q(x))$ satisfying \cref{con:PQ}, we first introduce the following lemma, which views the problem from the perspective of Laurent polynomials.
\begin{lemma}\label{lma:unique_decom}
Given any real polynomial $f(x)$ of degree $d$ and with $\norm{f}_{\infty}<1$, $\mf{F}(z)=1-\left(f\left(\frac{z+z^{-1}}{2}\right)\right)^2$ has a decomposition \begin{equation}
    \mf{F}(z)= \alpha  \prod_{i=1}^{2d} (z-r_i)(z^{-1}-r_i),
\end{equation}
such that $\alpha, r_i \in \CC\backslash\{0\}$. 

The decomposition is unique in the following sense: if $\wt{\alpha }\in \CC$ and $\wt{r}_i\in \CC$, $i=1,\cdots,n$ such that
\begin{equation}
    \mf{F}(z)= \wt{\alpha} \prod_{i=1}^n (z-\wt{r}_i) (z^{-1}-\wt{r}_i),
\end{equation}
then 
\begin{equation}\label{eq:root_permutation}
    n=2d, \qquad\alpha \prod_{i=1}^{2d} r_i = \wt{\alpha} \prod_{i=1}^{2d} \wt{r}_i, \qquad     \frac{r_i+r_i^{-1}}{2}=\frac{\wt{r}_i+\wt{r}_i^{-1}}{2},
\end{equation}
up to permutation of the roots.
\end{lemma}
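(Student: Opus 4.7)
The plan is to exploit the symmetry $\mf{F}(z) = \mf{F}(z^{-1})$, which reduces the decomposition problem to factoring an ordinary polynomial in the single variable $y := z + z^{-1}$. Since $(z+z^{-1})/2$ is invariant under $z \mapsto z^{-1}$, one can write $\mf{F}(z) = G(y)$ where $G(y) := 1 - f(y/2)^2 \in \CC[y]$. Because $f$ has degree $d$ with leading coefficient $\ff_d \neq 0$, the polynomial $G$ has degree exactly $2d$ with leading coefficient $c := -\ff_d^2/4^d$. Over $\CC$ we therefore factor $G(y) = c\prod_{j=1}^{2d}(y - y_j)$.

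The bridge between the two pictures is the elementary identity
\begin{equation*}
(z - r)(z^{-1} - r) = 1 - r(z + z^{-1}) + r^2 = -r\bigl(y - (r + r^{-1})\bigr), \qquad r \neq 0.
\end{equation*}
For existence, for each root $y_j$ I would solve the quadratic $r^2 - y_j r + 1 = 0$, which has two (possibly equal) reciprocal nonzero roots $r_j, r_j^{-1}$ (nonzero because the constant term is $1$); picking one of them and multiplying over $j$ yields $\mf{F}(z) = \alpha\prod_{j=1}^{2d}(z - r_j)(z^{-1} - r_j)$ with $\alpha := c/\prod_j(-r_j) \neq 0$.

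For uniqueness, suppose a second decomposition $\mf{F}(z) = \wt{\alpha}\prod_{i=1}^n(z - \wt{r}_i)(z^{-1} - \wt{r}_i)$ with $\wt{r}_i \neq 0$ is given. Applying the identity again rewrites it as a polynomial in $y$ of degree $n$, namely $\wt{\alpha}\prod_i(-\wt{r}_i)\prod_i(y - (\wt{r}_i + \wt{r}_i^{-1}))$. Comparing with $G(y)$ of degree $2d$ forces $n = 2d$; matching the multiset of $2d$ roots in $y$ then forces $\wt{r}_i + \wt{r}_i^{-1} = r_i + r_i^{-1}$ up to permutation; and matching leading coefficients in $y$ gives $\wt{\alpha}\prod_i(-\wt{r}_i) = c = \alpha\prod_j(-r_j)$, which (since $(-1)^{2d} = 1$) is equivalent to $\alpha\prod r_i = \wt{\alpha}\prod\wt{r}_i$.

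The argument is essentially mechanical; the main thing to watch is that swapping a single $r_j$ with its reciprocal $r_j^{-1}$ multiplies $\prod r_i$ by $r_j^{-2}$ while multiplying $\alpha$ by $r_j^2$, so that the invariant $\alpha\prod r_i$ is genuinely determined by $\mf{F}$ (equal to the leading coefficient $c$ of $G$), even though the individual factors $r_i$ and $\alpha$ are not. The only mild technical point is verifying that every $r_j$ is nonzero, which is immediate from the $1$ in the quadratic $r^2 - y_j r + 1 = 0$.
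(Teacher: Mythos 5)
Your proof is correct and follows essentially the same route as the paper's: your substitution $y = z+z^{-1}$ is just a rescaling of the paper's $x = \frac{z+z^{-1}}{2}$, your identity $(z-r)(z^{-1}-r) = -r\bigl(y - (r+r^{-1})\bigr)$ is the paper's $r(s-x) = \frac{1}{2}(z-r)(z^{-1}-r)$, and both arguments reduce existence and uniqueness to the unique factorization of the degree-$2d$ polynomial $1-f(x)^2$ over $\CC$. The only cosmetic difference is that you make explicit (via leading coefficients) why $\alpha\prod r_i$ is an invariant of $\mf{F}$ even though the individual $r_i$ are only determined up to reciprocation, a point the paper leaves implicit.
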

\begin{proof}
According to the fundamental theorem of algebra, $1-f(x)^2$ has a unique decomposition 
\begin{equation*}
    1-f(x)^2= c\prod_{i=1}^{2d} (s_i -x), \quad s_i \in \CC.
\end{equation*}
Since $\norm{f}_{\infty}<1$, $s_i \ne 0$. The equation $s=\frac{z+z^{-1}}{2}$ always has 2 nonzero solutions, say $z_1, z_2$. In particular, $z_1= z_2^{-1}$. For any $s_i$, pick one solution $r_i$,  use equality
\begin{equation*}
    r_i(s_i-x)=\frac{1}{2}(z-r_i)(z^{-1}-r_i),
\end{equation*}
and one gets
\begin{equation*}
    \mf{F}(z)=c \prod_{i=1}^{2d}\frac{1}{2r_i}(z-r_i)(z^{-1}-r_i)=\alpha \prod_{i=1}^{2d} (z-r_i)(z^{-1}-r_i),
\end{equation*}
where $\alpha=c\prod_{i=1}^{2d} \frac{1}{2r_i}$. It provides a decomposition of $\mf{F}(z)$ satisfying the requirement.

On the other hand, given any decomposition
\begin{equation*}
    \mf{F}(z)= \wt{\alpha} \prod_{i=1}^n (z-\wt{r}_i) (z^{-1}-\wt{r}_i),
\end{equation*}
then one has
\begin{equation*}
    \mf{F}(z)= \wt{\alpha} \prod_{i=1}^n (2\wt{r}_i) \left(\frac{\wt{r}_i + \wt{r}_i^{-1}}{2}-\frac{z+z^{-1}}{2}\right).
\end{equation*}
It provides a decomposition of $1-f(x)^2$, \ie,
\begin{equation*}
    1-f(x)^2= \wt{\alpha} \left(\prod_{i=1}^n 2\wt{r}_i\right) \prod_{i=1}^n\left(\frac{\wt{r}_i + \wt{r}_i^{-1}}{2}-x\right).
\end{equation*}

By the uniqueness decomposition of $1-f(x)^2$ in $\CC[x]$, we obtain \cref{eq:root_permutation} up to permutation of the roots. 
\end{proof}

In particular, we are interested in the following decomposition of $\mf{F}(z)$, which is referred to as an \emph{admissible decomposition},
\begin{itemize}
    \item $\mf{F}(z)= \alpha  \prod_{i=1}^{2d} (z-r_i)(z^{-1}-r_i)$, $\alpha>0, r_i\in \CC\backslash\{0\}$.
    \item Multiset $\mathcal{D}:=\{r_i: 1\leq i\leq 2d\}$ is closed under complex conjugation and additive inverse.
\end{itemize}
Firstly, $\mathcal{D}$ contains half of the roots of $\mf{F}(z)$ with multiplicity (hence $\mathcal{D}$ is a multiset rather than set). Secondly, note that $\alpha>0$ naturally comes from that $\alpha=\frac{1-f(1)^2}{\prod_{r\in \mathcal{D}} (1-r)^2}>0$ for any $\mathcal{D}$ closed under complex conjugation. Another observation is that the roots of $\mf{F}(z)$ cannot be any number of absolute value 1. Otherwise, $1-f\left(\frac{r_i +r_i^{-1}}{2}\right)^2=0$, \ie, $\norm{f}_{\infty}=1$, which contradicts the constraint on the maxnorm of $f$. 

To be more precise, we rephrase \cref{thm:admissible_pair} as follows and present its proof.
\begin{theorem}\label{thm:characterize_PQ}
The pair of polynomials $\left(P(x),Q(x)\right)$ satisfy \cref{con:PQ} if only if there exists an admissible decomposition of $\mf{F}(z)$,
\begin{equation}
    \mf{F}(z)= \alpha  \prod_{i=1}^{2d} (z-r_i)(z^{-1}-r_i),
\end{equation}
such that 
\begin{equation}\label{eq:PQformula}
\begin{split}
    P_\Im\left(x\right)&=\sqrt{\alpha} \frac{e\left(x+\I \sqrt{1-x^2}\right)+e\left(x-\I\sqrt{1-x^2}\right)}{2},\\ \quad Q\left(x\right)& =\sqrt{\alpha} \frac{e\left(x+\I \sqrt{1-x^2}\right)-e\left(x-\I\sqrt{1-x^2}\right)}{2\I \sqrt{1-x^2}},
\end{split}
\end{equation}
where $e(z):=z^{-d} \prod_{i=1}^{2d} (z-r_i)$.
\end{theorem}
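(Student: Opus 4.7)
The plan is to exploit the Joukowski substitution $x = (z + z^{-1})/2$ to transport the normalization identity $P_\Im^2 + (1 - x^2) Q^2 = 1 - f^2$ into an identity of Laurent polynomials in $z$, thereby setting up a bijection between admissible pairs $(P_\Im, Q)$ and admissible decompositions of $\mf{F}(z)$. The central object is the auxiliary Laurent polynomial
\[
E(z) := P_\Im\!\left(\tfrac{z+z^{-1}}{2}\right) + \tfrac{z-z^{-1}}{2}\, Q\!\left(\tfrac{z+z^{-1}}{2}\right),
\]
which has Laurent support contained in $\{z^{-d},\ldots,z^d\}$ and satisfies $E(z_\pm) = P_\Im(x) \pm \I\sqrt{1-x^2}\, Q(x)$ at $z_\pm := x \pm \I\sqrt{1-x^2}$. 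The core dictionary between $E$ and $(P_\Im,Q)$ rests on the Chebyshev identities $z^k + z^{-k} = 2 T_k(x)$ and $z^k - z^{-k} = 2\I\sqrt{1-x^2}\, U_{k-1}(x)$: the palindromic part of $E$ recovers $P_\Im$ as a polynomial in $x$, and the anti-palindromic part recovers $Q$.

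For the ($\Rightarrow$) direction, assume $(P,Q)$ is an admissible pair. A direct computation using $E(z^{-1}) = P_\Im - \I\sqrt{1-x^2}\, Q$ gives $E(z) E(z^{-1}) = P_\Im^2 + (1-x^2)Q^2 = 1 - f^2 = \mf{F}(z)$ as Laurent polynomials. Since $P_\Im, Q \in \RR[x]$, the coefficients of $E$ are real. Matching leading $x^{2d}$-coefficients in the normalization $|P|^2 + (1-x^2)|Q|^2 = 1$ yields $\fs_d^2 + \ff_d^2 = \fq_{d-1}^2$; combined with $\ff_d \ne 0$ and the admissibility condition $\fq_{d-1} > 0$, this forces the $z^d$-coefficient $c = (\fs_d + \fq_{d-1})/2^d$ of $E$ to be strictly positive. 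Hence $E(z) = c\, z^{-d}\prod_{i=1}^{2d}(z - r_i)$, producing the factorization $\mf{F}(z) = \alpha\, e(z) e(z^{-1})$ with $\alpha = c^2 > 0$ and $e(z) = z^{-d}\prod(z - r_i)$. Closure of $\mathcal{D} := \{r_i\}$ under complex conjugation is inherited from the reality of $E$; closure under additive inverse follows from the parity identity $E(-z) = (-1)^d E(z)$, itself a direct consequence of $P_\Im$ having parity $d \bmod 2$ and $Q$ having parity $(d-1) \bmod 2$.

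For the ($\Leftarrow$) direction, start with an admissible decomposition, set $\sqrt{\alpha} > 0$, define $E(z) := \sqrt{\alpha}\, e(z)$, and \emph{define} $(P_\Im,Q)$ by the formulas in the theorem. Closure of $\mathcal{D}$ under conjugation makes the Laurent coefficients of $E$ real, while closure under additive inverse gives $E(-z) = (-1)^d E(z)$. Expanding $(E(z) + E(z^{-1}))/2$ and $(E(z) - E(z^{-1}))/(z - z^{-1})$ through the Chebyshev identities then shows $P_\Im \in \RR[x]$ has parity $d \bmod 2$ and $\deg P_\Im \le d$, while $Q \in \RR[x]$ has parity $(d-1) \bmod 2$ and $\deg Q \le d-1$. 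The identity $E(z) E(z^{-1}) = \mf{F}(z)$ translates back into $|P|^2 + (1 - x^2) Q^2 = 1$ with $P := f + \I P_\Im$. Finally, matching leading $z^{2d}$-coefficients in $\mf{F}(z) = \alpha\, e(z)e(z^{-1})$ and $\mf{F}(z) = 1 - f\!\left(\tfrac{z+z^{-1}}{2}\right)^2$ gives $\alpha \prod r_i = -\ff_d^2/4^d < 0$, so $\prod r_i < 0$; combined with the Chebyshev expansion $Q(x) = \sqrt{\alpha}\sum_{k \ge 1}(c_k - c_{-k})\, U_{k-1}(x)$ (where $c_k$ are the Laurent coefficients of $e$, so that $c_d = 1$ and $c_{-d} = \prod r_i$), this yields $\fq_{d-1} = 2^{d-1}\sqrt{\alpha}(1 - \prod r_i) > 0$ and $\deg Q = d-1$ exactly.

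The main technical hurdle is maintaining a clean dictionary between Laurent polynomial data (reality of coefficients, parity under $z \mapsto -z$, positivity of the leading $z^d$-coefficient) and polynomial data (reality of $P_\Im, Q$, their parities, positivity of $\fq_{d-1}$), so that the two closures defining admissibility of $\mathcal{D}$ correspond precisely to the relevant symmetries of $(P_\Im,Q)$; once this bookkeeping is in place, every verification reduces to a short coefficient or parity computation.
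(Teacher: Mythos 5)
Your proof is correct and takes essentially the same route as the paper's: your $E(z)$ is exactly the paper's auxiliary Laurent polynomial $H(z)$, and both directions run through the factorization $\mf{F}(z)=E(z)E(z^{-1})$, the dictionary real coefficients $\leftrightarrow$ conjugation-closure and parity $\leftrightarrow$ additive-inverse-closure, the Chebyshev identities $z^k\pm z^{-k}$, and the same leading-coefficient matching giving $\fq_{d-1}=2^{d-1}\sqrt{\alpha}\,(1-\prod_i r_i)>0$. If anything, your version is slightly more careful: your check that the $z^d$-coefficient of $E$ is strictly positive (via $\fs_d^2+\ff_d^2=\fq_{d-1}^2$, $\ff_d\neq 0$, $\fq_{d-1}>0$) is needed for \cref{eq:PQformula} to hold with the positive root $\sqrt{\alpha}$ and is glossed over in the paper, and your computation $\alpha\prod_i r_i=-\ff_d^2/4^d<0$ states correctly what \cref{eq:sign_K} garbles; the only detail you omit is that the $r_i$ are nonzero, which follows in one line since the $z^{-d}$-coefficient of $E$, namely $(\fs_d-\fq_{d-1})/2^d$, is nonzero by the same inequality you already established.
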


\begin{proof}
$\Longrightarrow$: Denote $p_\Im(z):=P_\Im\left(\frac{z+z^{-1}}{2}\right)$ and $q(z):=Q\left(\frac{z+z^{-1}}{2}\right)$. Then the normalization condition gives
\begin{equation*}
    \begin{split}
        \mf{F}(z)&=p_\Im(z)^2-q(z)^2\left(\frac{z-z^{-1}}{2}\right)^2\\
        &=\left(p_\Im(z)+\frac{z-z^{-1}}{2}q(z)\right)\left(p_\Im(z)-\frac{z-z^{-1}}{2}q(z)\right).
    \end{split}
\end{equation*}
Define $H(z):=p_\Im(z)+\frac{z-z^{-1}}{2}q(z)$. Since $p_\Im(z)=p_\Im(z^{-1})$ and $q(z)=q(z^{-1})$, one gets 
\begin{equation*}
    \mf{F}(z)=H(z)H(z^{-1}).
\end{equation*}
Notice that $z^d H(z)$ is a real polynomial of degree no more than  $2d$. Thus $z^d H(z)$ can be represented as $c \prod_{r\in\mathcal{D}}(z-r)$, where $c\in\CC$ and the multiset $\mathcal{D}$ contains all the roots of $z^d H(z)$ with multiplicity. Since $H(z)$ is real, $\mathcal{D}$ is closed under complex conjugation and $c\in \RR$.

With $H(z)=c z^{-d}\prod_{r\in\mathcal{D}}(z-r)$, one gets
\begin{equation*}
    \mf{F}(z)=H(z)H(z^{-1})=c^2 \prod_{r\in\mathcal{D}}(z-r)(z^{-1}-r).
\end{equation*}
By \cref{lma:unique_decom}, the cardinality of $\mathcal{D}$ is $2d$ and thus $z^{d}H(z)$ is of degree $2d$. Moreover, $r\in \CC\backslash\{0\}$, $\forall r\in \mathcal{D}$.

Besides, we can check that the parity of $z^dH(z)$ is even by examining the parity of $p_\Im(z)$ and $q(z)$. Thus $\mathcal{D}$ is closed under additive inverse and \cref{eq:PQformula} holds as well.

$\Longleftarrow$:
Denote $p_\Im(z):=\sqrt{\alpha} \frac{e(z)+e(z^{-1})}{2}$ and $q(z):=\sqrt{\alpha} \frac{e(z)-e(z^{-1})}{z-z^{-1}}$. Then
\begin{equation*}
\begin{split}
    \mf{F}(z)&= \alpha  \prod_{i=1}^{2d} (z-r_i)(z^{-1}-r_i)=\alpha e(z) e(z^{-1})\\
    &=\left(\sqrt{\alpha} \frac{e(z)+e(z^{-1})}{2}\right)^2-\left(\sqrt{\alpha} \frac{e(z)-e(z^{-1})}{2}\right)^2\\
    &=\left(\sqrt{\alpha} \frac{e(z)+e(z^{-1})}{2}\right)^2+\left(1-\left(\frac{z+z^{-1}}{2}\right)^2\right) \left(\sqrt{\alpha} \frac{e(z)-e(z^{-1})}{z-z^{-1}}\right)^2\\
    &=p_\Im(z)^2+\left(1-\left(\frac{z+z^{-1}}{2}\right)^2\right)q(z)^2.
\end{split}
\end{equation*}

Since $\mathcal{D}$ is closed under complex conjugation, $\prod_{i=1}^{2d} (z-r_i)$ is a real polynomial and $e(z)=z^d +\sum_{i=1-d}^{d-1}e_i z^i+\left(\prod_{i=1}^{2d} r_i\right)z^{-d}$ for some $e_i\in \RR$. Notice that for $k>0$
\begin{equation*}
    \begin{split}
        &\left(x+\I \sqrt{1-x^2}\right)^k+\left(x+\I \sqrt{1-x^2}\right)^{-k}= 2 T_k(x),\\
        &\left(x+\I \sqrt{1-x^2}\right)^k-\left(x+\I \sqrt{1-x^2}\right)^{-k}= 2\I U_{k-1}(x)\sqrt{1-x^2}.\\
    \end{split}
\end{equation*}
We can verify that $p_\Im(x+\I \sqrt{1-x^2})$ and $q(x+\I \sqrt{1-x^2})$ are real polynomials of $x$. Hence 
\begin{equation*}
    P_\Im (x):= p_\Im(x+\I \sqrt{1-x^2}),\quad\quad Q(x):=q(x+\I \sqrt{1-x^2}).
\end{equation*}
are well defined and $\left(P(x)=f(x)+\I P_\Im(x), Q(x)\right)$ satisfies the normalization condition,
\begin{equation*}
    1-f(x)^2=P_\Im(x)^2+(1-x^2)Q(x)^2.
\end{equation*}

Since $\mathcal{D}$ is closed under additive inverse,
\begin{equation*}
    e(-z)=(-z)^{-d} \prod_{i=1}^{2d} (-z-r_i)=(-1)^dz^{-d} \prod_{i=1}^{2d} (z-r_i)=(-1)^d e(z).
\end{equation*}
It leads to that $P_\Im(x)$ has parity $(d \bmod 2)$ and $Q(x)$ has parity $(d-1 \bmod 2)$.

Another important observation is that  $\prod_{r\in \mathcal{D}}r <0$. It comes from
\begin{equation}\label{eq:alpha_and_f}
    \mf{F}(z)= \alpha  \prod_{i=1}^{2d} (z-r_i)(z^{-1}-r_i)= \alpha \left( \prod_{i=1}^{2d}r_i\right) z^{-2d} \prod_{i=1}^{2d} (z-r_i)(z-r_i^{-1})
\end{equation}
and that the coefficient of $z^{2d}$ in $\mf{F}(z)$ is negative. According to \cref{eq:alpha_and_f}, one has that the coefficient of $z^{2d}$ in $\mf{F}(z)$ is $\alpha\left( \prod_{i=1}^{2d}r_i\right)$. Therefore,
\begin{equation}\label{eq:sign_K}
    \alpha\left( \prod_{i=1}^{2d}r_i\right)<0\Longrightarrow \left( \prod_{i=1}^{2d}r_i\right)<0.
\end{equation}
Since
\begin{equation*}
    \begin{split}
        \frac{e(z)-e(z^{-1})}{z-z^{-1}}&= \frac{1}{z-z^{-1}}\left[\left(1-\prod_{i=1}^{2d} r_i\right)(z^d-z^{-d}) +\sum_{i=1-d}^{d-1}e_i (z^i-z^{-i})\right]\\
        &= \left(1-\prod_{i=1}^{2d} r_i\right)\left( z^{d-1}+z^{-d+1}\right)+l.o..
    \end{split}
\end{equation*}
Here, we use $l.o.$ to denote any linear combination of $z^j$ with coefficients in $\CC$ and $\abs{j}<d-1$. By comparing with the coefficient of $x^{d-1}$ in the monomial expansion for $Q(x)$, $\fq_{d-1}$, one gets
\begin{equation}\label{eq:Q_lead_coef}
   \sqrt{\alpha}\left(1-\prod_{i=1}^{2d} r_i\right)= 2^{1-d}\fq_{d-1}.
\end{equation}
In this way, $\text{deg} (Q)=d-1$, $\text{deg}(P)=d$ and the leading coefficient of $Q(x)$ is positive.
\end{proof}

\begin{remark}[Constructing complementary polynomials]\label{re:construct_method}
\cref{thm:characterize_PQ} indicates that there exists a bijection between the admissible decomposition of $\mf{F}(z)$ and the admissible pair $(P,Q)$ satisfying \cref{con:PQ}. Furthermore, it provides a constructive way to find all admissible complementary polynomials:
\begin{enumerate}
    \item Choose a multiset $\mathcal{D}$ such that 
    \begin{enumerate}
      \item the joint union of $\mathcal{D}$ and its reciprocal contains all roots of $\mf{F}(z)$ with multiplicity.
        \item $\mathcal{D}$ is closed under complex conjugation and additive inverse.
    \end{enumerate}
    \item Construct $P_\Im $ and $Q$ according to \cref{eq:PQformula}.
\end{enumerate}
\end{remark}

\subsection{Maximal solution}\label{sec:maximal_sol}
Among all possible choices of $\mathcal{D}$, we are interested in choosing $\mathcal{D}$ to contain all roots of $\mf{F}(z)$ within the unit disc and outside the unit disc. 
As will be shown below, the associated complementary polynomials satisfy many desirapble properties. Let $\left(P_\Im,Q\right)$, $\left(\bar{P}_\Im, \bar{Q}\right)$ denote the corresponding pair of complementary polynomials respectively. Moreover, throughout the paper, unless otherwise noted, we use  $\left(\bar{P}_\Im, \bar{Q}\right)$ to denote the complementary polynomials constructed according to \cref{eq:PQformula}, where $\mathcal{D}$ contains all roots of $\mf{F}(z)$ outside the unit disc.

\begin{lemma}[Maximal magnitude of leading coefficients]
\label{lem:maximal_leading}
The leading coefficients of the expansion of $Q$ and $\bar{Q}$ in the monomial basis have the largest magnitude among all possible complementary polynomials.
\end{lemma}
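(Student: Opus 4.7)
The plan is to compute $|\fq_{d-1}|^2$ explicitly in terms of the selected roots and exploit the invariance of $\alpha \prod_i r_i$ across admissible decompositions. Starting from \cref{eq:Q_lead_coef}, we have
\begin{equation*}
    \fq_{d-1} = 2^{d-1}\sqrt{\alpha}\,(1 - \rho),\quad \rho := \prod_{i=1}^{2d} r_i,
\end{equation*}
so that $\fq_{d-1}^2 = 4^{d-1}\alpha(1-\rho)^2$. By \cref{lma:unique_decom}, the quantity $K := \alpha \rho$ is invariant across admissible decompositions of $\mf{F}$, and together with $\alpha > 0$ and $\rho < 0$ (from \cref{eq:sign_K}) this yields $K<0$. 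Substituting $\alpha = |K|/|\rho|$ gives
\begin{equation*}
    \fq_{d-1}^2 = 4^{d-1}|K|\,\frac{(1+|\rho|)^2}{|\rho|} = 4^{d-1}|K|\left(|\rho| + 2 + \frac{1}{|\rho|}\right),
\end{equation*}
which reduces the lemma to a statement about how $|\rho|$ may vary with the choice of $\mathcal{D}$.

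Next I would parameterize the admissible values of $|\rho|$. Since $\mf{F}(z) = \mf{F}(z^{-1})$, the $4d$ nonzero roots of $\mf{F}$ split into $2d$ reciprocal pairs, and each admissible $\mathcal{D}$ selects exactly one root from each pair (with closure under complex conjugation and negation). Because no root satisfies $|r| = 1$, each pair contains one root inside the unit disc and one outside; write the $j$-th pair as $\{r_j^{\mathrm{in}}, r_j^{\mathrm{out}}\}$ with $|r_j^{\mathrm{in}}| < 1 < |r_j^{\mathrm{out}}| = 1/|r_j^{\mathrm{in}}|$, and let $\epsilon_j = +1$ if $\mathcal{D}$ picks the inside root of the $j$-th pair and $\epsilon_j = -1$ otherwise. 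Then $\log|\rho| = \sum_j \epsilon_j \log|r_j^{\mathrm{in}}|$, and since every $\log|r_j^{\mathrm{in}}|$ is negative, this sum lies between $\sum_j \log|r_j^{\mathrm{in}}|$ (all-inside) and $-\sum_j \log|r_j^{\mathrm{in}}|$ (all-outside). Setting $t_{\mathrm{in}} := \prod_j |r_j^{\mathrm{in}}|$ and $t_{\mathrm{out}} := 1/t_{\mathrm{in}}$, every admissible $|\rho|$ therefore belongs to $[t_{\mathrm{in}}, t_{\mathrm{out}}]$.

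The conclusion follows from convexity: the function $\phi(t) = t + 2 + 1/t$ is strictly convex on $(0,\infty)$ with minimum at $t=1$ and satisfies $\phi(t) = \phi(1/t)$, so on the interval $[t_{\mathrm{in}}, t_{\mathrm{out}}]$ with $t_{\mathrm{in}} t_{\mathrm{out}} = 1$ it attains its maximum at both endpoints with equal value. Consequently $\fq_{d-1}^2$ is maximized precisely at the all-inside and all-outside admissible selections, which yield $Q$ and $\bar{Q}$ respectively.

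The only bookkeeping obstacle I anticipate is verifying that the closure constraints do not spoil the two extremal selections. This is resolved immediately: both the open unit disc and its exterior are individually closed under complex conjugation and negation, so the all-inside and all-outside selections are automatically admissible, making the upper bound tight.
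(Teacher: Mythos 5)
Your proof is correct and takes essentially the same route as the paper's: both start from \cref{eq:Q_lead_coef}, use the fact that $\alpha\prod_{i}r_i$ is the same for every admissible decomposition (the paper pins it down explicitly as $\beta=-\ff_d^2/4^{d}$, you invoke \cref{lma:unique_decom}), and thereby reduce the lemma to maximizing $K+1/K+2$ over the magnitude $K$ of the product of selected roots. The only difference is that the paper simply asserts that the all-inside and all-outside choices maximize $K+1/K$, whereas you make this step rigorous via the reciprocal-pair parameterization of admissible $K\in[t_{\mathrm{in}},t_{\mathrm{out}}]$ with $t_{\mathrm{in}}t_{\mathrm{out}}=1$ and the convexity and symmetry of $t\mapsto t+2+1/t$, together with the observation that both extremal selections respect the closure constraints.
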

\begin{proof}
Given any multiset $\mathcal{D}$ satisfying the condition mentioned in \cref{re:construct_method}, note that $\mf{F}(z)$ can be represented as
\begin{equation*}
    \mf{F}(z)= \beta z^{-2d}\prod_{r\in \mathcal{D}}(z-r)(z-r^{-1})
\end{equation*}
for some $\beta\in \CC$. By checking the leading coefficient, $\beta = -\frac{\ff^2_d}{4^{d}}$ is a quantity fully determined by $f$.

Recall the proof of \cref{thm:characterize_PQ}, we know that $\alpha \prod_{r\in \mathcal{D}} r = \beta $ by \cref{eq:alpha_and_f}. Together with \cref{eq:Q_lead_coef}, we obtain 
\begin{equation}\label{eq:q_d-1}
    \fq_{d-1}^2 = 4^{d-1} \abs{\beta}\left(K+\frac{1}{K}+2\right),
\end{equation}
where $K:=-\prod_{r\in \mathcal{D}}r>0$ by \cref{eq:sign_K}.

When $\mathcal{D}$ contains all the roots within the unit disc or outside the unit disc, $\frac{1}{K}+K$ is the largest among all the possibilities, that is, the leading coefficient of $Q(x)$ and $\bar{Q}(x)$ has the largest magnitude. 
\end{proof}
The following lemma shows that $\left(P_\Im,Q\right)$ and $\left(\bar{P}_\Im,\bar{Q}\right)$ are directly related to each other.
\begin{lemma}\label{lma:relation_maximal_sol}
The polynomials $\left(P_\Im,Q\right)$ and $\left(\bar{P}_\Im,\bar{Q}\right)$ are related to each other:
\begin{equation}
    P_\Im(x)=-\bar{P}_\Im (x),\quad\quad Q(x)=\bar{Q}(x).
\end{equation}
\end{lemma}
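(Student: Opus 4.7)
The plan is to exploit the fact that the two admissible decompositions are related by the $z \mapsto z^{-1}$ symmetry of $\mf{F}(z)$ and then push everything through the explicit formulas in \cref{thm:characterize_PQ}.

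Let $\mathcal{D}_1$ be the multiset of roots of $\mf{F}(z)$ inside the unit disc (corresponding to $(P_\Im, Q)$) and $\mathcal{D}_2$ the multiset of roots outside (corresponding to $(\bar P_\Im, \bar Q)$). Since $\mf{F}(z)=\mf{F}(z^{-1})$, every root $r$ pairs with $r^{-1}$ across the two multisets, so $\mathcal{D}_2 = \{r^{-1} : r\in \mathcal{D}_1\}$ with matching multiplicities. Let $e_1(z) := z^{-d}\prod_{r\in\mathcal{D}_1}(z-r)$ and $e_2(z) := z^{-d}\prod_{r\in\mathcal{D}_1}(z-r^{-1})$ be the corresponding $e$-polynomials, and let $\alpha_1,\alpha_2$ be the corresponding positive leading constants. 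Write $K_1 := -\prod_{r\in\mathcal{D}_1}r > 0$, where positivity follows from \cref{eq:sign_K}; the analogous quantity $K_2$ for $\mathcal{D}_2$ satisfies $K_1K_2=1$.

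The first step will be the algebraic identity
\begin{equation}
e_1(z^{-1}) = -K_1\, e_2(z).
\end{equation}
This follows by rewriting $z^{-1}-r = -rz^{-1}(z-r^{-1})$ in the product defining $e_1(z^{-1})$ and collecting the factors of $z$ and of $r$; the parity factor $(-1)^{2d}=1$ is what gives the clean minus sign through $\prod r = -K_1$.

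The second step uses that $\alpha \prod_{r\in\mathcal{D}}r$ is an invariant of the decomposition: by \cref{eq:alpha_and_f}, both $\alpha_1\prod_{\mathcal{D}_1}r$ and $\alpha_2\prod_{\mathcal{D}_2}r$ equal the fixed constant $\beta = -\ff_d^2/4^d$. Combined with $K_1K_2=1$, this yields $\alpha_2 = \alpha_1 K_1^2$, hence $\sqrt{\alpha_2}=K_1\sqrt{\alpha_1}$ (using $K_1>0$).

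The final step is to substitute into the formulas \cref{eq:PQformula} of \cref{thm:characterize_PQ}. With $w := x+\I\sqrt{1-x^2}$, so that $w^{-1}=x-\I\sqrt{1-x^2}$, one gets
\begin{equation}
\bar P_\Im(x) = \sqrt{\alpha_2}\,\frac{e_2(w)+e_2(w^{-1})}{2}
= -\frac{\sqrt{\alpha_2}}{K_1}\cdot\frac{e_1(w^{-1})+e_1(w)}{2} = -\sqrt{\alpha_1}\cdot\frac{e_1(w)+e_1(w^{-1})}{2} = -P_\Im(x),
\end{equation}
and an identical manipulation, in which the roles of $e_1(w)$ and $e_1(w^{-1})$ now swap with an extra minus sign in the numerator, gives $\bar Q(x) = Q(x)$. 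I do not anticipate a genuine obstacle: the only thing to check carefully is the bookkeeping of signs in the identity $e_1(z^{-1}) = -K_1 e_2(z)$, and the consistent use of the positive square roots of $\alpha_1,\alpha_2$ that is tacit in \cref{thm:characterize_PQ}.
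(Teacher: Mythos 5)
Your proposal is correct and follows essentially the same route as the paper: both rest on the key identity $e_1(z^{-1})=-K_1\,e_2(z)$ (the paper's $\bar{e}(z)=-\tfrac{1}{K}e(z^{-1})$), the resulting relation $\alpha_2=K_1^2\alpha_1$, and substitution into the formulas of \cref{thm:characterize_PQ}. The only cosmetic difference is that you fix the constant via the invariance of $\beta=\alpha\prod_{r\in\mathcal{D}}r$ from \cref{eq:alpha_and_f}, whereas the paper reads $\bar{\alpha}=K^2\alpha$ directly off the two factorizations of $\mf{F}(z)$; both are equally valid.
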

\begin{proof}
Denote $\mathcal{D}:=\left\{r \text{ is a root of } \mf{F}(z): \abs{r}<1\right\}$, and 
\begin{equation*}
    e(z):=z^{-d}\prod_{r\in \mathcal{D}}(z-r),\quad \bar{e}(z):=z^{-d}\prod_{r\in \mathcal{D}}(z-r^{-1}).
\end{equation*}
Then for some $\alpha, \bar{\alpha}>0$,
\begin{equation}
    \mf{F}(z)= \alpha e(z) e(z^{-1}) =\bar{\alpha} \bar{e}(z) \bar{e}(z^{-1}).
\end{equation}
Notice that 
\begin{equation}
    \begin{split}
        \bar{e}(z)&=z^{-d}\prod_{r\in \mathcal{D}}(z-r^{-1})=z^{-d}\prod_{r\in \mathcal{D}}\left(r^{-1}(rz-1)\right)=z^{d}\prod_{r\in \mathcal{D}}\left(r^{-1}(r-z^{-1})\right)\\
        &=\left(\prod_{r\in \mathcal{D}} r^{-1}\right) z^{-d}\prod_{r\in \mathcal{D}}(z^{-1}-r)=-\frac{1}{K} e(z^{-1}).
    \end{split}
\end{equation}
where $K=-\prod_{r\in \mathcal{D}} r$. Similarly, $\bar{e}(z^{-1})=-\frac{1}{K} e(z)$. Thus, $\bar{\alpha}=K^2 \alpha$. By \cref{thm:characterize_PQ}, we have
\begin{equation}
    \begin{split}
    \bar{P}_\Im\left(x\right)&=\sqrt{\bar{\alpha}} \frac{\bar{e}\left(x+\I \sqrt{1-x^2}\right)+\bar{e}\left(x-\I\sqrt{1-x^2}\right)}{2},\\ 
    &=-\sqrt{\alpha K^2} \frac{1}{K}\frac{e\left(x-\I \sqrt{1-x^2}\right)+e\left(x+\I\sqrt{1-x^2}\right)}{2}= -P_\Im(x),
    \end{split}
\end{equation}
and 
\begin{equation}
    \begin{split}
        \bar{Q}\left(x\right)& =\sqrt{\bar{\alpha}} \frac{\bar{e}\left(x+\I \sqrt{1-x^2}\right)-\bar{e}\left(x-\I\sqrt{1-x^2}\right)}{2\I \sqrt{1-x^2}}\\
        &=-\sqrt{\alpha K^2} \frac{1}{K}\frac{e\left(x-\I \sqrt{1-x^2}\right)-e\left(x+\I\sqrt{1-x^2}\right)}{2\I \sqrt{1-x^2}}=Q(x).
    \end{split}
\end{equation}
This proves the lemma.
\end{proof}

Due to this relation, we may only focus on the admissible pair $\left(P(x),Q(x)\right)$ obtained by choosing $\mathcal{D}$ to contain all the roots of $\mf{F}(z)$ within the unit disc. 
Due to the property in \cref{lem:maximal_leading}, the corresponding phase factors in $D_d$ will be referred to as the maximal solution. 

\begin{remark}[Maximal solution at $f=0$]
\cref{lma:relation_maximal_sol} agrees with the observation that two particular admissible pairs associated with $f=0$ are $\left(\I T_d(x),U_{d-1}(x)\right)$ and $\left(-\I T_d(x),U_{d-1}(x)\right)$, which are obtained by 
\begin{equation}
    \Phi^0:=\left(\frac{\pi}{4}, 0,\cdots,0,\frac{\pi}{4}\right)\quad \bar{\Phi}^0:=\left(-\frac{\pi}{4}, 0,\cdots,0,-\frac{\pi}{4}\right)
\end{equation}
respectively. In the following discussion, we will show that the maximal solution  will converge to $\Phi^0$ as $\norm{f}_{\infty} \to 0$. The same argument is suitable for $\left(\bar{P}_\Im,\bar{Q}\right)$ with $\Phi^0$ replaced by $\bar{\Phi}^0$.
\end{remark}
\begin{remark}\label{re:positive_lead_coef_Pim}
There is a useful observation that the coefficient of $x^d$ for $P_\Im$ is positive. By \cref{eq:PQformula}, we have
\begin{equation}
\begin{split}\label{eq:pim_coef}
    P_{\Im}(x)&=\sqrt{\alpha}\frac{e\left(x+\I \sqrt{1-x^2}\right)+e\left(x-\I \sqrt{1-x^2}\right)}{2}\\
    &=\sqrt{\alpha}\frac{1+\prod_{r\in \mathcal{D}}r}{2}\left[\left(x+\I \sqrt{1-x^2}\right)^d+\left(x-\I \sqrt{1-x^2}\right)^d\right]+l.o.\\
    &=\sqrt{\alpha}\left(1+\prod_{r\in \mathcal{D}}r\right)T_d(x)+l.o..
\end{split}
\end{equation}
Here $l.o.$ represents any polynomial of degree $<d$. To achieve the second equality, we use the expansion of $e(z)=z^{-d}\prod_{r\in \mathcal{D}} (z-r)$ in the monomial basis and discover that only terms $z^d$ and $\left(\prod_{r\in \mathcal{D}}r \right)z^{-d}$ contribute to the first term on the second line, if we replace $z,z^{-1}$ with $x+\I \sqrt{1-x^2}, x-\I \sqrt{1-x^2}$ respectively. 

The specific choice of $\mathcal{D}$ yields $\left(1+\prod_{r\in \mathcal{D}}r\right)>0$. Hence the leading coefficient of $P_\Im$ is positive.
\end{remark}

\subsection{Relation to other constructive methods}
There are two other methods for constructing complementary polynomials, which will be referred to as the GSLW method~\cite{GilyenSuLowEtAl2019} and the Haah method~\cite{Haah2019}, respectively. 
The two methods are equivalent.  The GSLW method solves the problem without using Laurent polynomials. The Haah method first points out that the decomposition can be viewed more naturally in terms of Laurent polynomials.

The Haah method considers 
\begin{equation}\label{eq:Haah_fun}
    1-P_\Re\left(\frac{z+z^{-1}}{2}\right)^2+\left(\frac{z-z^{-1}}{2}\right)^2Q_\Im\left(\frac{z+z^{-1}}{2}\right)^2
\end{equation}
 and the corresponding decomposition:
\begin{equation}
    1-P_\Re\left(\frac{z+z^{-1}}{2}\right)^2+\left(\frac{z-z^{-1}}{2}\right)^2Q_\Im\left(\frac{z+z^{-1}}{2}\right)^2= \alpha  \prod_{r\in \mathcal{D}} (z-r)(z^{-1}-r)
\end{equation}
where $\mathcal{D}$ contains all the roots within the unit disc. 

In our construction, we may omit the term $Q_\Im$ due to the symmetry constraint. However, one can check that our method is also suitable for the case that $Q_\Im\ne 0$ by simply replacing $\mf{F}(z)$ by \cref{eq:Haah_fun}. By the same method as that in \cref{thm:characterize_PQ}, it can shown that any admissible complementary polynomial $(P_\Re,Q_\Re)$ must take the form of \cref{eq:PQformula}, where $Q$ is replaced by $Q_{\Re}$. Hence our method can be generalized to find all admissible complementary polynomials, by going through all possible $e(z)=z^{-d} \prod_{r\in \mathcal{D}}(z-r)$. In particular,  the Haah method considers the special case that $\mathcal{D}$ contains all the root of $\mf{F}(z)$ within the unit disc, and therefore does not generate all admissible complementary polynomials. 

We would like to clarify that Ref \cite{Haah2019} states that we can construct $P_\Im$ and $Q$ as long as the joint union  of a conjugate-closed  multiset $\mathcal{D}$ and its reciprocal $\{\frac{1}{r}:r\in \mathcal{D}\}$ is the list of all the roots of \cref{eq:Haah_fun}, if there is no parity requirement for $P_\Im$ and $Q$. After imposing the parity condition, according to \cref{thm:characterize_PQ}, another necessary  condition for suitable complementary polynomials is that $\mathcal{D}$ is closed under additive inverse. Without it, the parity condition may not hold. This can be verified by the proof of \cref{thm:characterize_PQ}.

\section{Local convergence}\label{sec:local_convergence}
We prove \cref{thm:estimate_maximal} in this section. Throughout this section, polynomials are expanded in the Chebyshev basis without otherwise noted. This is more convenient than the expansion in the monomial basis.

According to the definition of the Chebyshev norms in \cref{sec:chebyshev} and the convention of the coefficients in \cref{sec:notation}, we specifically list the following relations which exploit the weighted orthogonality and will be frequently used in the proof,
\begin{equation}
\begin{split}
    \norm{f}_T^2=f^2_0+\frac{1}{2}\sum_{k=1}^d f^2_k,\quad\quad\norm{P_\Im}_T^2=s_0^2+\frac{1}{2}\sum_{k=1}^d s_k^2,\quad\quad\norm{Q}_U^2=\frac{1}{2}\sum_{k=0}^{d-1} {q_k}^2.
\end{split}
\end{equation}
By integrating  the point-wise normalization condition $f(x)^2+P_\Im(x)^2+(1-x^2)Q(x)^2=1$ with the weight function $\frac{1}{\sqrt{1-x^2}}$, we obtain the following Parseval's equality:
\begin{equation}\label{eqn:parseval}
f^2_0+s_0^2+\frac{1}{2}\sum_{k=1}^{d} \left(f^2_k+s_k^2\right)+\frac{1}{2}\sum_{k=0}^{d-1} {q_k}^2 =1.
\end{equation}

\subsection{Sketch of the proof}
We first give a sketch of the proof as a road map. For simplicity, we assume $d$ is odd in the sketch, while the proof works for any $d$.

\begin{enumerate}
    \item We first show that each element of $\wt{\Phi}^*-\wt{\Phi}^0$ is bounded by $\frac{\pi}{4}$ by properly restricting $\norm{f}_{\infty}$. 
    \item We further upper bound the distance $\norm{\wt{\Phi}^*-\wt{\Phi}^0}_2$ by $\abs{\sin\left(2\phi_0-\frac{\pi}{2}\right)}^2 + \sum_{\ell=1}^{\wt{d}-1} \abs{\sin(2\phi_{\ell})}^2$. It follows the inequality $\abs{x}\leq \frac{\pi}{2}\abs{\sin(x)}$ which holds for any $\abs{x}\leq \frac{\pi}{2}$. The bound in the first step ensures that the inequality can be invoked.
    \item We upper bound the phase factors by the leading Chebyshev coefficients in the recursion sequence of polynomials. The upper bound turns out to be
    \begin{equation*} 
    \begin{split}
        &\abs{\sin\left(2\phi_0-\frac{\pi}{2}\right)}^2\leq \frac{8}{3}\norm{f}_{\infty}^2, \\
        &\abs{ \sin(2 \phi_{\ell})}^2\leq \frac{8}{3}\left(\left({q^{(\ell)}_{d-1-2\ell}}\right)^2-\left({q^{(\ell-1)}_{d+1-2\ell}}\right)^2\right),\quad 1\leq \ell\leq \wt{d}-1.
    \end{split}
    \end{equation*}
    \item Combining the previous results, we finally get $\norm{\wt{\Phi}^*-\wt{\Phi}^0}_2^2 \le \frac{\pi^2}{6}\left(\norm{f}_{\infty}^2+1-q_{d-1}^2\right)$. The upper bound can be further simplified to be $\frac{\pi^2}{3} \norm{f}_\infty^2$ by exploiting the estimate to $q_{d-1}$ (see \cref{lma:coef_q_bound}). Then the statement in \cref{thm:estimate_maximal} follows.
\end{enumerate}

\subsection{Recursion relation in terms of Chebyshev coefficients}
In this subsection, we will convert the recursion relation in terms of monomial coefficients to that in terms of Chebyshev coefficients. Furthermore, we will derive some useful properties which will be used as intermediate steps in the proof of \cref{thm:estimate_maximal}.

Let $(P, Q) \in \CC_d[x] \times \RR_{d-1}[x]$ be a pair of polynomials satisfying the conditions in \cref{thm:existandunique}. The symmetric reduction procedures yield a sequence of polynomials $(P^{(\ell)}, Q^{(\ell)}) \in \CC_{d-2\ell}[x] \times \RR_{d-2\ell-1}[x]$ satisfying the conditions and an associated sequence of phase factors $\phi_\ell$ so that $e^{2\I \phi_\ell}=\frac{\fp^{(\ell)}_{d-2\ell}}{\fq^{(\ell)}_{d-1-2\ell}}$. In particular, $(P^{(0)}, Q^{(0)})=(P, Q)$. An important recursion relation generating the coefficients of new polynomials in the monomial basis is given in \cref{eq:monomial_PQ_leading}. For convenience, we restate it as follows: if $d\geq 3$, then 
\begin{equation*}
\begin{split}
    &\fp_{d-2}^{(1)}=\frac{\fq_{d-1}}{4}-\frac{1}{\fq_{d-1}}\abs{\Im[\fp_{d-2}e^{-2\I\phi_0}]}^2+ \I \Im[\fp_{d-2}e^{-2\I\phi_0}],\\
    &\fq_{d-3}^{(1)}=\frac{\fq_{d-1}}{4}+\frac{1}{\fq_{d-1}}\abs{\Im[\fp_{d-2}e^{-2\I\phi_0}]}^2.
\end{split}
\end{equation*}
Here, recall that the boldface symbols refer to the coefficients in the monomial basis. We will first rewrite this relation using the Chebyshev basis. As a remark, it suffices to consider the case when $\ell = 0$, and the general case can be recovered by substituting $(P, Q, P^{(1)}, Q^{(1)}, \phi_0) \leftarrow (P^{(\ell)}, Q^{(\ell)}, P^{(\ell+1)}, Q^{(\ell+1)}, \phi_\ell)$ in the arguments as long as $\deg\left(P^{(\ell)}\right)\geq 3$.

For simplicity, let us denote the Chebyshev polynomial in the monomial basis as $T_d(x) = 2^{d-1} x^d + t_d x^{d-2} + l.o.$ where we use $l.o.$ to denote any polynomial of degree no more than $d-3$ and $t_d$ is a real number whose exact value is irrelevant in the proof.

Expanding the polynomials in mononial basis and Chebyshev basis, we have
\begin{equation*}
\begin{split}
    & P(x)e^{-2\I\phi_0} =\sum_{k=0}^d \fp_{k}e^{-2\I\phi_0} x^k = e^{-2\I\phi_0}\fp_{d} x^d + e^{-2\I\phi_0}\fp_{d-2} x^{d-2} + l.o.\\
    &=\sum_{k=0}^d p_k e^{-2\I \phi_0} T_k(x) = e^{-2\I\phi_0} 2^{d-1} p_d x^d + \left(e^{-2\I\phi_0}p_dt_d + e^{-2\I\phi_0}p_{d-2}2^{d-3}\right) x^{d-2} + l.o..
\end{split}
\end{equation*}
Then, by comparing the coefficients of terms $x^d$ and $x^{d-2}$, we have $\fp_{d} = 2^{d-1} p_d $ and
\begin{equation*}
\begin{split}
    e^{-2\I\phi_0} \fp_{d-2} = e^{-2\I\phi_0}p_dt_d + e^{-2\I\phi_0}p_{d-2}2^{d-3} = q_{d-1} t_d + e^{-2\I\phi_0}p_{d-2}2^{d-3}.
\end{split}
\end{equation*}
In the last equality, we use
\begin{equation*}
    e^{2\I\phi_0} = \frac{\fp_d}{\fq_{d-1}} = \frac{2^{d-1} p_d}{2^{d-1} q_{d-1}} = \frac{p_d}{q_{d-1}} \Rightarrow e^{-2\I\phi_0} p_d = q_{d-1} \in \RR.
\end{equation*}
An important observation is
\begin{equation}
    \Im[\fp_{d-2}e^{-2\I \phi_0}]= 2^{d-3} \Im[p_{d-2} e^{-2\I \phi_0}],
\end{equation}
which follows $q_{d-1} t_d \in \RR$.

Plugging the previous equations in terms of Chebyshev coefficients in \cref{eq:monomial_PQ_leading}, we have the following recursion relation
\begin{equation}\label{eq:chebyshev_PQ_leading}
\begin{split}
    &p_{d-2}^{(1)}=q_{d-1}-\frac{1}{4q_{d-1}}\abs{\Im[p_{d-2}e^{-2\I\phi_0}]}^2+ \I \Im[p_{d-2}e^{-2\I\phi_0}],\\
    &q_{d-3}^{(1)}=q_{d-1}+\frac{1}{4q_{d-1}}\abs{\Im[p_{d-2}e^{-2\I\phi_0}]}^2.
\end{split}
\end{equation}
The second equality implies the following sign preserving property and the monotonicity
\begin{equation*}
    \text{sgn}(q_{d-1}) = \text{sgn}(q_{d-3}^{(1)}) \text{, and } \abs{q_{d-1}} \le \abs{q_{d-3}^{(1)}}.
\end{equation*}

Regarding the next phase factor $\phi_1$, its defined equation is converted similarly to Chebyshev basis
\begin{equation*}
e^{2\I \phi_{1}} =\frac{\fp^{(1)}_{d-2}}{\fq^{(1)}_{d-3}}=\frac{p^{(1)}_{d-2}}{q^{(1)}_{d-3}}.
\end{equation*}
Furthermore, we can use the derived recursion relation in \cref{eq:chebyshev_PQ_leading} to obtain
\begin{equation*}
    \sin\left(2\phi_1\right)=\Im\left[\frac{p^{(1)}_{d-2}}{q^{(1)}_{d-3}}\right]= \frac{\Im[p_{d-2} e^{-2\I \phi_{0}}]}{q_{d-3}^{(1)}}. 
\end{equation*}

The derived results can be generalized to $\left(P^{(\ell)},Q^{(\ell)}\right)$ where $1\leq \ell \leq \hat{d}$ ($\hat{d}$ is defined by \cref{eq:definition_hat_d}). We summarize these results as the following theorem.
\begin{theorem}\label{thm:cheby_version}
Given $\left(P(x), Q(x)\right)$ satisfying the assumption in \cref{thm:existandunique} and $\Phi=(\phi_0,\phi_1,\cdots,\phi_1,\phi_0)$ being a set of symmetric phase factors in $\mathcal{D}_d$ such that \cref{eq:UPQ} is satisfied, the following statements hold.
\begin{enumerate}
    \item For any $1\leq \ell\leq \hat{d}$, the leading coefficients of $P^{(\ell)}$ and $Q^{(\ell)}$ satisfy the recursion relation
\begin{equation}\label{eq:newPQ_coef_cheby}
\begin{split}
    &p^{(\ell)}_{d-2\ell}=q^{(\ell-1)}_{d+1-2\ell}-\frac{1}{4q^{(\ell-1)}_{d+1-2\ell}}\abs{\Im[p^{(\ell-1)}_{d-2\ell}e^{-2\I\phi_{\ell-1}}]}^2 +\I \Im[p^{(\ell-1)}_{d-2\ell}e^{-2\I\phi_{\ell-1}}],\\
    &q^{(\ell)}_{d-1-2\ell}=q^{(\ell-1)}_{d+1-2\ell}+\frac{1}{4q^{(\ell-1)}_{d+1-2\ell}}\abs{\Im[p^{(\ell-1)}_{d-2\ell}e^{-2\I\phi_{\ell-1}}]}^2.
\end{split}
\end{equation}
    \item For any $0\leq \ell\leq \hat{d}$, 
    \begin{equation}\label{eqn:phase_phil}
        e^{2\I \phi_{\ell}} =\frac{p^{(\ell)}_{d-2\ell}}{q^{(\ell)}_{d-1-2\ell}},
    \end{equation}
    and if $1\leq \ell \leq \hat{d}$,
    \begin{equation}\label{eqn:sin-2-phi-l-Im-over-q}
        \sin\left(2\phi_\ell\right)= \frac{\Im[p^{(\ell-1)}_{d-2\ell} e^{-2\I \phi_{\ell-1}}]}{q_{d-1-2\ell}^{(\ell)}}.
    \end{equation}
    \item If $d$ is even, $e^{\I \phi_{\wt{d}-1}}=p_0^{(\wt{d}-1)}$ and 
    \begin{equation}\label{eq:last_phi}
        \cos(\phi_{\wt{d}-1})=q_1^{(\wt{d}-2)},\ \sin\left( \phi_{\wt{d}-1}\right)= -\I p^{(\wt{d}-2)}_{0} e^{-2\I \phi_{\wt{d}-2}}.
    \end{equation}
    \item (The monotonicity of the leading coefficient of $Q^{(\ell)}$ in the Chebyshev basis) 
    
     If $q^{(0)}_{d-1}>0$, then 
     \begin{equation}\label{eq:chebyshev_Q_monotone}
     0< q^{(0)}_{d-1}\leq q^{(1)}_{d-3}\leq \cdots\leq q^{(\ell)}_{d-1-2\ell}\leq \cdots\leq q^{(\hat{d})}_{d-1-2\hat{d}}\leq 1.
     \end{equation}
\end{enumerate}

\end{theorem}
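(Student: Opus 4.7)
My plan is to prove the four parts in order, reducing each to the monomial-basis constructions already developed in \cref{lma:newPQ_coef,lma:phi_expression}. Throughout, the engine is the monomial recursion \cref{eq:monomial_PQ_leading}, and the work is to translate leading coefficients between the monomial and Chebyshev bases using $T_k(x) = 2^{k-1}x^k + t_k x^{k-2} + l.o.$ together with \cref{eq:relation_monomial_chebyshev}.

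For Part 1, the paragraph preceding the statement already carries out the $(P^{(0)}, Q^{(0)}) \to (P^{(1)}, Q^{(1)})$ case. The general step is obtained by substituting $(P, Q, \phi_0) \leftarrow (P^{(\ell-1)}, Q^{(\ell-1)}, \phi_{\ell-1})$ in that argument; this is legal since $\deg(P^{(\ell-1)}) = d - 2\ell + 2 \geq 3$ whenever $1 \le \ell \le \hat d$, by the definition of $\hat d$. The auxiliary identity that drives the conversion is $\Im[\fp^{(\ell-1)}_{d-2\ell} e^{-2\I\phi_{\ell-1}}] = 2^{d-2\ell-1}\Im[p^{(\ell-1)}_{d-2\ell} e^{-2\I\phi_{\ell-1}}]$, which is the restatement at level $\ell-1$ of the fact used earlier: the low-order shift $q^{(\ell-1)}_{d+1-2\ell}\,t_{d-2\ell+2}$ produced by the change of basis is real-valued, so the imaginary parts of the monomial and Chebyshev leading sub-coefficients coincide up to the expected power of $2$. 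Plugging this into \cref{eq:monomial_PQ_leading} and dividing through by $2^{d-2\ell-1}$ yields \cref{eq:newPQ_coef_cheby}.

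For Part 2, \cref{eq:value_phi} gives $e^{2\I\phi_\ell} = \fp^{(\ell)}_{d-2\ell}/\fq^{(\ell)}_{d-1-2\ell}$, and the common factor $2^{d-2\ell-1}$ cancels by \cref{eq:relation_monomial_chebyshev}, producing \cref{eqn:phase_phil}. By Part 1, $q^{(\ell)}_{d-1-2\ell}$ is real (it equals $q^{(\ell-1)}_{d+1-2\ell}$ plus a real term) and $\Im[p^{(\ell)}_{d-2\ell}] = \Im[p^{(\ell-1)}_{d-2\ell} e^{-2\I\phi_{\ell-1}}]$; taking the imaginary part of the phase identity yields \cref{eqn:sin-2-phi-l-Im-over-q}.

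For Part 3 (only $d$ even), $P^{(\wt d - 1)}$ is a constant polynomial, so its Chebyshev and monomial zeroth-order coefficients agree and \cref{eq:value_phi_end} reads $e^{\I\phi_{\wt d - 1}} = p^{(\wt d -1)}_0$. To get the explicit $\cos$ and $\sin$ formulas, I would invoke the degenerate $d=2$ branch of \cref{lma:newPQ_coef} applied one level up, which gives $p_0^{(\wt d -1)} = -\big(\fp^{(\wt d -2)}_0/\fp^{(\wt d -2)}_2\big)^*\,\fq^{(\wt d -2)}_1$. Converting these final monomial quantities to the Chebyshev leading coefficients via $T_2(x) = 2x^2 -1$ and $U_1(x) = 2x$, and using the normalization condition $|P^{(\wt d -2)}(0)|^2 + |Q^{(\wt d-2)}(0)|^2 = 1$ to simplify the ratio, should match the real part to $q_1^{(\wt d - 2)}$ and the imaginary part to $-\I\, p_0^{(\wt d -2)} e^{-2\I\phi_{\wt d -2}}$. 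I expect this to be the main obstacle: the bookkeeping at the last level is delicate, and the correct identification of the two halves of $e^{\I\phi_{\wt d-1}}$ depends on carefully using parity of $(P^{(\wt d - 2)}, Q^{(\wt d - 2)})$ together with the normalization condition at $x = 0$.

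For Part 4, the second equality of \cref{eq:newPQ_coef_cheby} reads $q^{(\ell)}_{d-1-2\ell} = q^{(\ell-1)}_{d+1-2\ell} + (\text{non-negative})/q^{(\ell-1)}_{d+1-2\ell}$. Starting from $q^{(0)}_{d-1} > 0$, a straightforward induction preserves positivity and produces the monotone chain $q^{(0)}_{d-1}\leq q^{(1)}_{d-3}\leq \cdots\leq q^{(\hat d)}_{d-1-2\hat d}$. The upper bound by $1$ is immediate from \cref{lma:leading_coef} applied to the reduced QSP instance associated with $(P^{(\ell)}, Q^{(\ell)})$, whose symmetric phase factors are $(\phi_\ell, \phi_{\ell+1},\ldots, \phi_{d-\ell})$ with $\phi_\ell = \phi_{d-\ell}$: the lemma gives $q^{(\ell)}_{d-1-2\ell} = \prod_{j=\ell+1}^{d-\ell-1}\cos(\phi_j)$, a product of cosines whose magnitude is at most one.
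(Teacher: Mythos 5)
Your parts 1, 2, and 4 are correct and essentially follow the paper's own route: the paper also proves part 1 by substituting $(P^{(\ell-1)}, Q^{(\ell-1)}, \phi_{\ell-1})$ into the level-zero basis conversion (with the same key observation that the cross term $q^{(\ell-1)}_{d+1-2\ell}\,t_{d-2\ell+2}$ is real, so imaginary parts of monomial and Chebyshev sub-leading coefficients agree up to a power of $2$), part 2 by cancelling the common power of $2$ in \cref{eq:value_phi} and taking imaginary parts, and part 4 by induction on the second line of \cref{eq:newPQ_coef_cheby}. Your upper bound $q^{(\hat{d})}_{d-1-2\hat{d}}\leq 1$ via \cref{lma:leading_coef} applied to the reduced symmetric instance (a product of cosines) is a valid minor variation on the Parseval-type bound the paper uses elsewhere; either works.

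Part 3 is the only piece the paper proves in detail (``we only need to verify \cref{eq:last_phi}''), and it is exactly where your proposal has a genuine gap. Your route through the $d=2$ branch of \cref{lma:newPQ_coef} is sound as far as it goes: it gives $p_0^{(\wt{d}-1)} = -\bigl(\fp_0^{(\wt{d}-2)}\bigr)^* e^{2\I\phi_{\wt{d}-2}}$, and after converting $\fp_0^{(\wt{d}-2)} = p_0^{(\wt{d}-2)} - p_2^{(\wt{d}-2)}$ and using $p_2^{(\wt{d}-2)}e^{-2\I\phi_{\wt{d}-2}} = q_1^{(\wt{d}-2)}$ (part 2 at $\ell = \wt{d}-2$), it reproduces the paper's key identity
\begin{equation*}
e^{\I\phi_{\wt{d}-1}} = q_1^{(\wt{d}-2)} - \bigl(p_0^{(\wt{d}-2)}e^{-2\I\phi_{\wt{d}-2}}\bigr)^*.
\end{equation*}
But \cref{eq:last_phi} requires that $u := p_0^{(\wt{d}-2)}e^{-2\I\phi_{\wt{d}-2}}$ be \emph{purely imaginary}, so that this identity splits into $\cos(\phi_{\wt{d}-1}) = q_1^{(\wt{d}-2)}$ and $\I\sin(\phi_{\wt{d}-1}) = u$; you never establish this, and the single tool you name, normalization at $x=0$, cannot do it. Since $Q^{(\wt{d}-2)}(0)=0$ by parity, that condition says only $\abs{\fp_0^{(\wt{d}-2)}} = \abs{u - q_1^{(\wt{d}-2)}} = 1$, i.e.\ $\abs{u}^2 + \bigl(q_1^{(\wt{d}-2)}\bigr)^2 - 2q_1^{(\wt{d}-2)}\Re[u] = 1$, which is nothing more than the squared modulus of the displayed identity and leaves $\Re[u]$ undetermined. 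You need a second, independent relation. The paper uses Parseval's equality for $\abs{P^{(\wt{d}-2)}}^2 + (1-x^2)\abs{Q^{(\wt{d}-2)}}^2 = 1$, which combined with $\abs{p_2^{(\wt{d}-2)}} = \abs{q_1^{(\wt{d}-2)}}$ gives $\abs{u}^2 + \bigl(q_1^{(\wt{d}-2)}\bigr)^2 = 1$; subtracting the two relations yields $\Re[u]=0$ because $q_1^{(\wt{d}-2)} \neq 0$. (Alternatively, evaluating the pointwise normalization at $x=1$ in addition to $x=0$ gives $\Re\bigl[p_0^{(\wt{d}-2)}\bigl(p_2^{(\wt{d}-2)}\bigr)^*\bigr] = q_1^{(\wt{d}-2)}\Re[u] = 0$, with the same conclusion.) Without one of these extra inputs, your part 3 does not close.
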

\begin{proof}
The statements 1, 2 and 4 follow the previous discussion. 
Hence, we only need to verify \cref{eq:last_phi}.

Note that when $\ell=\wt{d}-2$, we have $\deg\left(P^{(\wt{d}-2)}\right) = 2$ and 
\begin{equation*}
\begin{split}
    &\braket{0|e^{\I\phi_{\wt{d}-2} Z} W(x) e^{\I\phi_{\wt{d}-1}} W(x) e^{\I \phi_{\wt{d}-2} Z}|0} = e^{2\I\phi_{\wt{d}-2}} \left(2 \cos\left(\phi_{\wt{d}-1}\right) x^2 - e^{-\I\phi_{\wt{d}-1}}\right)\\
    &= P^{(\wt{d}-2)}(x) = p_2^{(\wt{d}-2)} T_2(x) + p_0^{(\wt{d}-2)} T_0(x) = 2 p_2^{(\wt{d}-2)} x^2 - p_2^{(\wt{d}-2)} + p_0^{(\wt{d}-2)} .
\end{split}
\end{equation*}
The first equality is obtained by direct computation. Comparing the constant  terms, we have
\begin{equation*}
\begin{split}
    e^{\I\phi_{\wt{d}-1}}&=\left(p_2^{(\wt{d}-2)}e^{-2\I \phi_{\wt{d}-2}}\right)^*-\left(p_0^{(\wt{d}-2)}e^{-2\I \phi_{\wt{d}-2}}\right)^*\\
    &=q_1^{(\wt{d}-2)}-\left(p_0^{(\wt{d}-2)}e^{-2\I \phi_{\wt{d}-2}}\right)^*.
\end{split}
\end{equation*}
Taking the modulus on both sides, we get
\begin{equation}\label{eqn:modulus-1-comparision-proof}
\begin{split}
    1&=\left(q_1^{(\wt{d}-2)}-\Re[p_0^{(\wt{d}-2)}e^{-2\I\phi_{\wt{d}-2}}]\right)^2+ \left(\Im[p^{(\wt{d}-2)}_{0} e^{-2\I \phi_{\wt{d}-2}}]\right)^2\\
    &=\left(q_1^{(\wt{d}-2)}\right)^2+\abs{p_0^{(\wt{d}-2)}}^2-2q_1^{(\wt{d}-2)}\Re[p_0^{(\wt{d}-2)}e^{-2\I\phi_{\wt{d}-2}}].
\end{split}
\end{equation}
Applying  Parseval's equality derived from $\abs{P^{(\wt{d}-2)}(x)}^2 + (1-x^2) \abs{Q^{(\wt{d}-2)}(x)}^2 = 1$, we get
\begin{equation*}
    \abs{p_0^{(\wt{d}-2)}}^2+\frac{1}{2}\abs{p_2^{(\wt{d}-2)}}^2+\frac{1}{2}\left(q_1^{(\wt{d}-2)}\right)^2=1.
\end{equation*}
Note that $\abs{p_2^{(\wt{d}-2)}} = \abs{p_2^{(\wt{d}-2)} e^{-2\I\phi_{\wt{d}-2}}} = \abs{q_1^{(\wt{d}-2)}}$, we have
\begin{equation}
    \abs{p_0^{(\wt{d}-2)}}^2 + \left(q_1^{(\wt{d}-2)}\right)^2 = 1.
\end{equation}
Plugging the last equality into \cref{eqn:modulus-1-comparision-proof}, we obtain $\Re[p_0^{(\wt{d}-2)}e^{-2\I\phi_{\wt{d}-2}}]=0$ since  $q_1^{(\wt{d}-2)}\ne 0$. Therefore, $p^{(\wt{d}-2)}_{0} e^{-2\I \phi_{\wt{d}-2}}$ is pure imaginary which yields  \cref{eq:last_phi}.
\end{proof}

\subsection{Bounding Chebyshev coefficients of the maximal solution}
In this subsection, we assume that $(P,Q)$ is the admissible pair corresponding to the maximal solution. We will derive bounds on the leading coefficients of $Q$ and $P_\Im(x)$ in the Chebyshev basis, \ie, $q_{d-1}$ and $s_{d}$ respectively.
\begin{lemma}\label{lma:coef_q_bound}
The leading coefficient of $Q$ in the Chebyshev basis satisfies the following estimate: $1-\norm{f}^2_{\infty}\leq q_{d-1}^2\leq 1-\norm{f}^2_{T}+\frac{1}{2}f_d^2$.
\end{lemma}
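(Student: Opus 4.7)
For the upper bound, the plan is to combine Parseval's equality \cref{eqn:parseval} with the leading-coefficient identity $q_{d-1}^2 = f_d^2 + s_d^2$. This identity follows by comparing the coefficient of $x^{2d}$ on both sides of $\abs{P(x)}^2 + (1-x^2)\abs{Q(x)}^2 = 1$ (which gives $\abs{\fp_d}^2 = \fq_{d-1}^2$) and converting monomial coefficients to Chebyshev ones via \cref{eq:relation_monomial_chebyshev}. Bounding $\norm{P_\Im}_T^2 \geq \tfrac{1}{2} s_d^2$ and $\norm{Q}_U^2 \geq \tfrac{1}{2} q_{d-1}^2$ and substituting into Parseval gives
\begin{equation*}
1 - \norm{f}_T^2 \;\geq\; \tfrac{1}{2}\bigl(s_d^2 + q_{d-1}^2\bigr) \;=\; q_{d-1}^2 - \tfrac{1}{2} f_d^2,
\end{equation*}
which rearranges to the stated upper bound. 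This part is independent of the maximal-solution hypothesis.

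For the lower bound, I would exploit the explicit Laurent representation of the maximal solution from \cref{thm:characterize_PQ} together with the maximum-modulus principle. Writing $z = x + \I\sqrt{1-x^2}$ so that $\abs{z}=1$ for $x \in [-1,1]$, the formulas \cref{eq:PQformula} combine into $P_\Im(x) + \I\sqrt{1-x^2}\,Q(x) = \sqrt{\alpha}\,e(z)$, hence $\alpha\abs{e(z)}^2 = 1 - f(x)^2$ on the unit circle. I introduce the ordinary polynomial $p(z) := z^d \sqrt{\alpha}\, e(z) = \sqrt{\alpha}\prod_{i=1}^{2d}(z - r_i)$ of degree $2d$ with leading coefficient $\sqrt{\alpha}$; its modulus on $\abs{z}=1$ equals $\sqrt{1 - f(\cos\theta)^2}$, so $\min_{\abs{z}=1}\abs{p(z)} = \sqrt{1 - \norm{f}_\infty^2}$. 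The defining property of the maximal solution is that all roots $r_i$ lie inside the open unit disc, so $g(z) := z^{2d}/p(z)$ is analytic on $\{\abs{z}\geq 1\} \cup \{\infty\}$ with $g(\infty) = 1/\sqrt{\alpha}$. The maximum-modulus principle applied to $g$ on this closed exterior region yields
\begin{equation*}
\tfrac{1}{\sqrt{\alpha}} \;=\; \abs{g(\infty)} \;\leq\; \max_{\abs{z}=1}\abs{g(z)} \;=\; \tfrac{1}{\min_{\abs{z}=1}\abs{p(z)}},
\end{equation*}
hence $\alpha \geq 1 - \norm{f}_\infty^2$. Combining with the identity $q_{d-1} = \sqrt{\alpha}(1+K)$ from \cref{eq:Q_lead_coef} (where $K := -\prod_i r_i > 0$ by \cref{eq:sign_K}) gives $q_{d-1}^2 = \alpha(1+K)^2 \geq \alpha \geq 1 - \norm{f}_\infty^2$, completing the lower bound.

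The main subtlety is the max-modulus step: it is essential that $p$ has no zeros on or outside the unit circle, which is precisely the property selecting the maximal solution among all admissible pairs. For a generic admissible pair, $p$ could have roots on both sides of the circle, $g$ would cease to be analytic in the exterior, and the bound $\alpha \geq 1 - \norm{f}_\infty^2$ would no longer be forced; this is consistent with the lower bound being genuinely a property of the maximal solution and not of arbitrary global minima.
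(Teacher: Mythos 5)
Your proof is correct, and the two halves compare differently against the paper. The upper bound is essentially the paper's own argument: the identity $f_d^2+s_d^2=q_{d-1}^2$ from the leading coefficient of the normalization condition, plugged into Parseval's equality \cref{eqn:parseval}, with the non-leading coefficients of $P_\Im$ and $Q$ discarded. The lower bound, however, takes a genuinely different route at the key analytic step. The paper introduces the Mahler measure of $h(z)=z^{2d}\mf{F}(z)$ and uses Jensen's formula to identify $\abs{\beta}/K$ with the geometric mean of $1-f(\cos\theta)^2$ over the circle, then bounds that mean below by $1-\norm{f}_\infty^2$; you instead apply the maximum modulus principle to $z^{2d}/p(z)$ with $p(z)=\sqrt{\alpha}\prod_{i}(z-r_i)$, which needs only the minimum of $\abs{p}$ on the circle rather than its geometric mean. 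The two arguments rest on the same structural facts --- all roots of the maximal solution lie strictly inside the unit disc (no roots on the circle since $\norm{f}_\infty<1$), and the algebraic identities $q_{d-1}=\sqrt{\alpha}(1+K)$, $\abs{\beta}=\alpha K$, which make your chain $q_{d-1}^2=\alpha(1+K)^2\geq\alpha$ literally the same as the paper's $q_{d-1}^2=\abs{\beta}(K+1/K+2)\geq\abs{\beta}/K$ --- so the final estimates coincide. What your version buys is economy: the maximum modulus principle is more elementary than Jensen's formula, so the proof is self-contained and avoids the Mahler-measure citations. What the paper's version buys is an exact formula, $\alpha=\exp\left(\frac{1}{2\pi}\int_0^{2\pi}\ln\left(1-f(\cos\theta)^2\right)\rd\theta\right)$, of which the needed inequality is a corollary; that equality is sharper information, though unused elsewhere in the argument. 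You are also right on the scope distinction: the upper bound holds for any admissible pair, while the lower bound exploits precisely the root configuration that defines the maximal solution, and your closing remark correctly notes that the argument (not necessarily the conclusion) fails for other admissible pairs.
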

\begin{proof}
    Expanding the polynomials in the monomial basis, the normalization condition reduces to
    \begin{equation*}
        f(x)^2 + P_\Im(x)^2 + (1-x^2) Q(x)^2 = \left(\ff_d^2 + \fs_d^2 - \fq_{d-1}^2\right)x^{2d} + l.o. = 1,
    \end{equation*}
    where $l.o.$ denotes any polynomial of degree less than $2d$.
    It implies that $\ff_d^2+{\fs_d}^2=\fq_{d-1}^2$ which is further $f_d^2+{s_d}^2=q_{d-1}^2$. Hence, we have
    \begin{equation*}
    \begin{split}
        q_{d-1}^2&= \frac{1}{2}\left(f^2_d+s_d^2+q_{d-1}^2\right)\\
        &=1-\left(f_0^2+s_0^2+\frac{1}{2}\sum_{k=1}^{d-1} \left(f^2_k+s_k^2\right)+\frac{1}{2}\sum_{k=0}^{d-2} q_k^2\right) \\
        &\le 1-\left(f_0^2+\frac{1}{2}\sum_{k=1}^{d-1} f^2_k\right) = 1 -\norm{f}^2_{T}+\frac{1}{2}f_d^2.\\
    \end{split}
    \end{equation*}
    In the second line, we exploit \cref{eqn:parseval}, the Parseval's equality derived from the normalization condition.
    
To lower bound $q_{d-1}$, we first remark that \cref{eq:q_d-1} can be rewritten as
\begin{equation}
    q_{d-1}= \abs{\beta}\left(K+\frac{1}{K}+2\right),
\end{equation}
where $K:=-\prod_{r\in \mathcal{D}} r>0$, and $\mathcal{D}$ contains all roots of $\mf{F}(z)$ within the unit disc for the maximal solution.

Given a polynomial fully factored into $g(z)=\gamma \prod_{i=1}^n (z-a_i) \in \CC[z]$, its \emph{Mahler measure} \cite{Mahler1962} is defined as
\begin{equation*}
    M(g) := \abs{\gamma}\prod_{i=1}^n \max\{\abs{a_i},1\}.
\end{equation*}
This quantity is equal to the geometric mean of $\abs{g(z)}$ on the unit circle by using Jensen's formula in \cite{Jensen1900}
\begin{equation}
     M(g) =\exp\left(\frac{1}{2\pi}\int_{0}^{2\pi} \ln{\abs{g(e^{\I \theta})}} \rd \theta\right).
\end{equation}
Remarkably, note that $\abs{\beta}\frac{1}{K}$ is the Mahler measure of a polynomial $h(z)$ defined as
\begin{equation}
    h(z): =\beta \prod_{r\in \mathcal{D}} (z-r)(z-r^{-1})=z^{2d}\mf{F}(z) .
\end{equation}
Therefore, one has
\begin{equation}
\begin{split}
    \abs{\beta}\frac{1}{K} &= M(h) = \exp\left( \frac{1}{2\pi}\int_{0}^{2\pi} \ln{\abs{h(e^{\I \theta})}} \rd \theta\right)\\
    &=\exp\left( \frac{1}{2\pi}\int_{0}^{2\pi} \ln{\left(1-f(\cos{\theta})^2\right)} \rd \theta\right)\\
    &\geq \exp\left( \frac{1}{2\pi}\int_{0}^{2\pi} \ln{\left(1-\norm{f}_{\infty}^2\right)}\rd \theta\right)\\
    &=  1-\norm{f}_{\infty}^2.
\end{split}
\end{equation}
To conclude,
\begin{equation}
    q_{d-1}^2 = \abs{\beta}\left(K+\frac{1}{K}+2\right)\geq \abs{\beta}\frac{1}{K}\geq  1-\norm{f}_{\infty}^2,
\end{equation}
which completes the proof.
\end{proof}
\begin{remark}[Convergence of the maximal solution towards $\Phi^0$]
The derived bounds on $q_{d-1}$ consequentially yields the bounds on $s_d$. Using the equality $f_d^2+s_d^2=q_{d-1}^2$, we get the following estimate
\begin{equation}\label{eqn:bounds-on-sd}
    1-\norm{f}^2_{\infty}-f_d^2\leq s_d^2\leq 1-\norm{f}^2_{T}-\frac{1}{2}f_d^2.
\end{equation}
Combining the derived results, the remaining Chebyshev coefficients of $P_\Im$ and $Q$ can be upper bounded as follows
\begin{equation}
\begin{split}
    & s_0^2+\frac{1}{2}\sum_{k=1}^{d-1} s_k^2+\frac{1}{2}\sum_{k=0}^{d-2} q_k^2\\
    (\text{Parseval's equality})\ & = 1 - f_0^2 - \frac{1}{2} \sum_{k=1}^d f_k^2 - \frac{1}{2}s_d^2 - \frac{1}{2} q_{d-1}^2\\
    (f_d^2 + s_d^2 = q_{d-1}^2)\ & = 1 - q_{d-1}^2 - f_0^2  - \frac{1}{2} \sum_{k=1}^{d-1} f_k^2 \\
    (\text{\cref{lma:coef_q_bound}})\ & \le \norm{f}_\infty^2 - f_0^2 - \frac{1}{2} \sum_{k=1}^{d-1} f_k^2 \leq \norm{f}_{\infty}^2.
\end{split}
\end{equation}
As a consequence of the upper bound, we have $s_i, q_j \to 0\ \forall i \in [d], j\in [d-1]$ as $\norm{f}_\infty \to 0$. \cref{re:positive_lead_coef_Pim} implies that $s_d>0$. Together with $f_d^2 + s_d^2 = q_{d-1}^2$, it follows that $s_d = q_{d-1}$ in that limit since $f_d\to 0$, and the normalization condition yields $s_d = q_{d-1} = 1$ when the leading coefficient of $Q$ is fixed to be positive. 

To conclude, as $\norm{f}_\infty \to 0$, the maximal solution exhibits a limit $(P,Q)\to (\I T_d, U_{d-1})$ which agrees with the polynomials generated from the set of phase factors $\Phi^0 = (\frac{\pi}{4}, 0 ,\cdots, 0, \frac{\pi}{4})$. 
\end{remark}

\subsection{Proof of \cref{thm:estimate_maximal}}
In this subsection, we will prove \cref{thm:estimate_maximal} based on the previous results. We start from a technical lemma strengthening the results in \cref{thm:cheby_version} under additional assumptions.
\begin{lemma}\label{lma:supplement}
When $\norm{f}_{\infty}\leq \frac{1}{2}$, it holds that
\begin{enumerate}
    \item $\frac{\sqrt{3}}{2}\leq\sqrt{1-\norm{f}_{\infty}^2}\leq q^{(0)}_{d-1}\leq q^{(1)}_{d-3}\leq \cdots\leq q^{(\ell)}_{d-1-2\ell}\leq \cdots\leq q^{(\hat{d})}_{d-1-2\hat{d}}\leq 1$.
    \item  $\abs{\phi_0-\frac{\pi}{4}}< \frac{\pi}{4}$ and $\abs{\phi_\ell}< \frac{\pi}{4},  \quad\forall 1 \leq  \ell \leq \wt{d}-1$. 
\end{enumerate}
\end{lemma}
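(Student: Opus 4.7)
The plan is to prove the two parts sequentially, leveraging results already in place.

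Part 1 is essentially immediate. Admissibility gives $q_{d-1}^{(0)}>0$, \cref{lma:coef_q_bound} yields $q_{d-1}^{(0)}\ge \sqrt{1-\norm{f}_\infty^2}\ge \sqrt{3}/2$ under $\norm{f}_\infty\le 1/2$, and statement 4 of \cref{thm:cheby_version} propagates this lower bound through the full chain of inequalities.

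For Part 2 I would split into three cases. For $\phi_0\in[-\pi/2,\pi/2)$, since $\sin(2\phi_0)=s_d/q_{d-1}$ with $q_{d-1}>0$, the claim $\abs{\phi_0-\pi/4}<\pi/4$ reduces to $s_d>0$. I would extract $s_d$ from the maximal-solution formula in \cref{thm:characterize_PQ}: writing $e(z)=\sum_{k=-d}^{d}a_k z^k$ with $a_d=1$ and $a_{-d}=\prod_i r_i$, and exploiting $w^k+w^{-k}=2T_k(x)$ with $w=x+\I\sqrt{1-x^2}$, one reads off $s_d=\sqrt{\alpha}(1+\prod_i r_i)=\sqrt{\alpha}(1-K)$ where $K:=-\prod_i r_i$. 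Since $\norm{f}_\infty<1$ forbids any unit-circle root of $\mf{F}$, the maximal choice places every $r_i$ strictly inside the unit disc, so $K=\abs{\prod_i r_i}<1$ and $s_d>0$. For the bulk factors $\phi_\ell$ with $1\le \ell\le \hat{d}$, introduce $a:=q^{(\ell-1)}_{d+1-2\ell}$ and $b:=\Im[p^{(\ell-1)}_{d-2\ell}e^{-2\I\phi_{\ell-1}}]$; the recursion \cref{eq:newPQ_coef_cheby} gives $\Re[p^{(\ell)}_{d-2\ell}]=a-b^2/(4a)$ and $q^{(\ell)}_{d-1-2\ell}=a+b^2/(4a)>0$, so the claim $\abs{\phi_\ell}<\pi/4$ reduces to $4a^2>b^2$. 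Parseval applied to $\abs{P^{(\ell-1)}(x)}^2+(1-x^2)\abs{Q^{(\ell-1)}(x)}^2=1$ yields $b^2\le \abs{p^{(\ell-1)}_{d-2\ell}}^2\le 2-a^2$, and combining with $a^2\ge 3/4$ from Part 1 gives $4a^2-b^2\ge 5a^2-2\ge 7/4>0$. Finally, when $d$ is even, \cref{eq:last_phi} gives $\cos(\phi_{\wt{d}-1})=q_1^{(\wt{d}-2)}\ge \sqrt{3}/2>\cos(\pi/4)$, yielding $\abs{\phi_{\wt{d}-1}}<\pi/4$ directly.

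The main obstacle is pinning down the sign $s_d>0$ for the maximal solution: the magnitude bounds in \cref{eqn:bounds-on-sd} only control $s_d^2$, so the explicit root-product identity from \cref{thm:characterize_PQ} is essential to distinguish $s_d>0$ from $s_d<0$. Every other step is a routine application of monotonicity, Parseval, and the recursion formulas.
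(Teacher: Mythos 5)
Your proof is correct and takes essentially the same route as the paper's: Part 1 via \cref{lma:coef_q_bound} combined with the monotonicity in \cref{thm:cheby_version}; the sign of $\phi_0$ via the root-product formula of \cref{thm:characterize_PQ} (showing $\Im[p_d^{(0)}]=\sqrt{\alpha}\left(1+\prod_{r\in\mathcal{D}}r\right)>0$ since all roots of the maximal choice lie strictly inside the unit disc); the bulk phases via the recursion \cref{eq:newPQ_coef_cheby} plus Parseval to force $\cos(2\phi_\ell)>0$; and the even-$d$ tail via $\cos(\phi_{\wt{d}-1})=q_1^{(\wt{d}-2)}\geq \frac{\sqrt{3}}{2}$. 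The only (immaterial) difference is your closing arithmetic $4a^2-b^2\geq 5a^2-2\geq \frac{7}{4}>0$, where the paper instead chains the same Parseval bound through $\norm{f}_{\infty}$-dependent estimates; both rest on identical ingredients.
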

\begin{proof}
The first statement directly follows \cref{lma:coef_q_bound}.

To show the second statement, we first notice that according to \cref{thm:cheby_version},
\begin{equation*}
    \cos\left(2\phi_0-\frac{\pi}{2}\right)=\sin(2\phi_0)=\frac{\Im[p_d^{(0)}]}{q_{d-1}^{(0)}}.
\end{equation*}
According to the \cref{eq:pim_coef} in \cref{re:positive_lead_coef_Pim}, we have
\begin{equation}
    \Im[p_d^{(0)}]=\sqrt{\alpha}\left(1+\prod_{r\in \mathcal{D}} r\right)>0,
\end{equation}
where $\mathcal{D}$ contains the roots of $\mf{F}(z)$ in the unit disc. Therefore, $\abs{\phi_0-\frac{\pi}{4}}< \frac{\pi}{4}$ follows $\cos\left(2\phi_0-\frac{\pi}{2}\right)>0$.

When $1\leq \ell \leq \hat{d}$, according to \cref{eq:newPQ_coef_cheby,eqn:phase_phil}, we have
\begin{equation}\label{eq:1}
    \cos\left(2\phi_\ell\right) =\frac{1}{q^{(\ell)}_{d-1-2\ell}}\left(q^{(\ell-1)}_{d+1-2\ell}-\frac{1}{4q^{(\ell-1)}_{d+1-2\ell}}\abs{\Im[p^{(\ell-1)}_{d-2\ell}e^{-2\I\phi_{\ell-1}}]}^2\right).
\end{equation}
Using Parseval's equality, the normalization condition $\abs{P^{(\ell-1)}}^2+(1-x^2)\abs{Q^{(\ell-1)}}^2 =1$ yields
\begin{equation*}
    \begin{split}
        1&=\abs{p^{(\ell-1)}_0}^2+\frac{1}{2}\sum_{j=1}^{d-2\ell+2} \abs{p_j^{(\ell-1)}}^2+\frac{1}{2}\sum_{j=0}^{d-2\ell+1} \left(q_j^{(\ell-1)} \right)^2\\
        &\geq \frac{1}{2}\abs{p^{(\ell-1)}_{d-2\ell}}^2+\frac{1}{2}\abs{p^{(\ell-1)}_{d-2\ell+2}}^2+\frac{1}{2}\left(q_{d-2\ell+1}^{(\ell-1)}\right)^2.
    \end{split}
\end{equation*}
Furthermore, the cancellation of the leading order in the normalization condition implies $\abs{p^{(\ell-1)}_{d-2\ell+2}}^2 = \left(q_{d-2\ell+1}^{(\ell-1)}\right)^2$. Thus
\begin{equation*}
    \frac{1}{2}\abs{p^{(\ell-1)}_{d-2\ell}}^2+\left(q_{d-2\ell+1}^{(\ell-1)}\right)^2 \le 1.
\end{equation*}
Using the first statement frequently, we have
\begin{equation}\label{eq:2}
\begin{split}
    &\abs{\Im[p^{(\ell-1)}_{d-2\ell}e^{-2\I\phi_{\ell-1}}]}^2 \leq \abs{p^{(\ell-1)}_{d-2\ell}}^2\leq 2\left(1-\left(q_{d-2\ell+1}^{(\ell-1)}\right)^2\right)\\
    &\leq 2\left(1-\left(q_{d-1}^{(0)}\right)^2\right) \le 2 \norm{f}_\infty^2 \leq 4\norm{f}^2_{\infty} < 4 \left(1 - \norm{f}_\infty^2\right) \le 4\left(q^{(\ell-1)}_{d+1-2\ell}\right)^2.
\end{split}
\end{equation}
Plugging \cref{eq:2} into \cref{eq:1}, we have $\cos\left(2\phi_{\ell}\right)>0$. Hence, $\abs{\phi_{\ell}}< \frac{\pi}{4}$. 

If $d$ is even, following \cref{thm:cheby_version} statement 3, we have
\begin{equation*}
    \cos(\phi_{\wt{d}-1})=q_1^{(\wt{d}-2)} = q_1^{(\hat{d})} \geq \frac{\sqrt{3}}{2}.
\end{equation*}
Here, $\hat{d} = \wt{d} - 2$ when $d$ is even and the first statement is invoked. It implies that $\abs{\phi_{\wt{d}-1}}<\frac{\pi}{4}$.
\end{proof}
The second statement of the previous lemma ensures that the inequality $\frac{\pi}{2} \abs{\sin(x)}\geq \abs{x}$ for any $\abs{x}\leq \frac{\pi}{2}$ can be invoked. Now we are ready to prove \cref{thm:estimate_maximal}. 
\begin{proof}[Proof of \cref{thm:estimate_maximal}]\label{thm:estimate_maximal_proof}
We first estimate $\abs{\sin(2\phi_0-\frac{\pi}{2})}^2$ as follows
\begin{equation}
    \abs{\sin(2\phi_0-\frac{\pi}{2})}^2 =\abs{\frac{\Re[p_d]}{q_{d-1}}}^2=\frac{f_d^2}{q_{d-1}^2}\leq \frac{4}{3}f_d^2\leq \frac{8}{3}\norm{f}_{\infty}^2,
\end{equation}
where we use the inequality $f_d^2 \leq 2\norm{f}_{T}^2\leq 2\norm{f}_{\infty}^2$.

For any $1\leq \ell\leq\hat{d}$, by \cref{eqn:sin-2-phi-l-Im-over-q},  we have 
\begin{equation*}
\begin{split}
    \abs{ \sin(2 \phi_{\ell})}^2&=\abs{\frac{\Im[p^{(\ell-1)}_{d-2\ell} e^{-2\I \phi_{\ell-1}}]}{q_{d-1-2\ell}^{(\ell)}}}^2\\
    \text{(\cref{lma:supplement} 1.})&\leq \frac{4}{3}\abs{\Im[p^{(\ell-1)}_{d-2\ell} e^{-2\I \phi_{\ell-1}}]}^2\\
    \text{(\cref{thm:cheby_version} 1.}) &= \frac{16}{3}\left(q^{(\ell)}_{d-1-2\ell}-q^{(\ell-1)}_{d+1-2\ell}\right)q^{(\ell-1)}_{d+1-2\ell}\\
    \text{(\cref{thm:cheby_version} 4.})&\leq \frac{8}{3}\left(q^{(\ell)}_{d-1-2\ell}-q^{(\ell-1)}_{d+1-2\ell}\right)\left(q^{(\ell)}_{d-1-2\ell}+q^{(\ell-1)}_{d+1-2\ell}\right)\\
    &= \frac{8}{3}\left(\left({q^{(\ell)}_{d-1-2\ell}}\right)^2-\left({q^{(\ell-1)}_{d+1-2\ell}}\right)^2\right).\\
\end{split}
\end{equation*}
Here, the monotonicity in \cref{thm:cheby_version} implies that $q^{(\ell)}_{d-1-2\ell}-q^{(\ell-1)}_{d+1-2\ell} \ge 0$.

If $d$ is odd, one can derive an upper bound by telescoping and \cref{lma:supplement}
\begin{equation*}
\begin{split}
    \sum_{\ell=1}^{\hat{d}}\abs{ \sin(2\phi_{\ell})}^2 &\leq \frac{8}{3}\sum_{\ell=1}^{\hat{d}} \left(\left(q^{(\ell)}_{d-1-2\ell}\right)^2-\left(q^{(\ell-1)}_{d+1-2\ell}\right)^2\right)\\
    &\leq \frac{8}{3}\left(1 -\left({q_{d-1}^{(0)}}\right)^2\right)\leq \frac{8}{3}\norm{f}_{\infty}^2.
\end{split}
\end{equation*}
Therefore, we have
\begin{equation*}
    \begin{split}
        \norm{\wt{\Phi}^*-\wt{\Phi}^0}_2^2&=\left(\phi_0-\frac{\pi}{4}\right)^2+\sum_{\ell=1}^{\hat{d}} \phi_{\ell}^2\\
        &\leq \frac{1}{4}\frac{\pi^2}{4} \abs{\sin(2\phi_0-\frac{\pi}{2})}^2+ \frac{1}{4}\frac{\pi^2}{4}\sum_{\ell=1}^{\hat{d}}\abs{ \sin(2\phi_{\ell})}^2\\
        &\leq \frac{1}{4}\frac{\pi^2}{4} \left(\frac{8}{3}\norm{f}_{\infty}^2+\frac{8}{3}\norm{f}_{\infty}^2\right)=\frac{1}{3} \pi^2 \norm{f}_{\infty}^2.
    \end{split}
\end{equation*}

If $d$ is even, we have to treat $\phi_{\wt{d}-1}$ separately. Using \cref{thm:cheby_version} statement 3, we have 
\begin{equation*}
    \sin^2\left(\phi_{\wt{d}-1}\right) = 1 - \cos^2\left(\phi_{\wt{d}-1}\right) = 1- \left(q^{(\wt{d}-2)}_{1}\right)^2.
\end{equation*}
Recall that $\abs{\phi_{\wt{d}-1}}<\frac{\pi}{4}$, we apply a finer estimate $\abs{x}\leq \frac{\pi}{2\sqrt{2}} \abs{\sin(x)}$ for any $\abs{x} \le \frac{\pi}{4}$  and get
\begin{equation}
    \abs{\phi_{\wt{d}-1}}^2\leq \frac{\pi^2}{8}\sin^2\left(\phi_{\wt{d}-1}\right).
\end{equation}
Similarly,
\begin{equation*}
    \sum_{\ell=1}^{\hat{d}}\abs{ \sin(2\phi_{\ell})}^2 \leq \frac{8}{3}\sum_{\ell=1}^{\hat{d}} \left(\left(q^{(\ell)}_{d-1-2\ell}\right)^2-\left(q^{(\ell-1)}_{d+1-2\ell}\right)^2\right) =\frac{8}{3}\left(q^{(\wt{d}-2)}_1\right)^2-\frac{8}{3} \left(q_{d-1}^{(0)}\right)^2.
\end{equation*}
Furthermore, combining the derived results, one has
\begin{equation*}
    \begin{split}
        \norm{\wt{\Phi}^*-\wt{\Phi}^0}_2^2&=\left(\phi_0-\frac{\pi}{4}\right)^2+\sum_{\ell=1}^{\hat{d}} \phi_{\ell}^2+\phi_{\wt{d}-1}^2\\
        &\leq \frac{1}{4}\frac{\pi^2}{4} \abs{\sin(2\phi_0-\frac{\pi}{2})}^2+ \frac{1}{4}\frac{\pi^2}{4}\sum_{\ell=1}^{\hat{d}}\abs{ \sin(2\phi_{\ell})}^2+\frac{\pi^2}{8}\abs{\sin\left(\phi_{\wt{d}-1}\right)}^2\\
        &\leq \frac{\pi^2}{4}\left(\frac{2}{3}\norm{f}_{\infty}^2 + \frac{2}{3}\left(q^{(\wt{d}-2)}_1\right)^2-\frac{2}{3}\left(q_{d-1}^{(0)}\right)^2+\frac{1}{2}-\frac{1}{2} \left(q^{(\wt{d}-2)}_{1}\right)^2\right)\\
        (\text{\cref{thm:cheby_version} 4})&\leq \frac{\pi^2}{4}\left(\frac{2}{3}\norm{f}_{\infty}^2 + \frac{2}{3}-\frac{2}{3}\left(q_{d-1}^{(0)}\right)^2\right)\\
        (\text{\cref{lma:supplement} 1})&\leq \frac{\pi^2}{4}\left(\frac{2}{3}\norm{f}_{\infty}^2+\frac{2}{3}\norm{f}_{\infty}^2\right)= \frac{1}{3}\pi^2 \norm{f}_{\infty}^2.
    \end{split}
\end{equation*}

In conclusion, we have
\begin{equation*}
    \norm{\wt{\Phi}^*-\wt{\Phi}^0}_2\leq \frac{\pi}{\sqrt{3}}\norm{f}_{\infty}.
\end{equation*}
\end{proof}

\section{Local strong convexity}\label{sec:local_strong_convexity}
\subsection{Hessian matrix at the maximal solution is positive definite}
We first show that the Hessian matrix at the initial point $\wt{\Phi}^0$ is positive definite.  
\begin{theorem}\label{thm:strong-convex-init}
        At the initial point $\wt{\Phi}^0 := \left(\frac{\pi}{4}, 0, \cdots, 0\right) \in \RR^{\wt{d}}$, the Hessian matrix of the optimization problem \cref{eqn:optprob-intro} is
        \begin{equation}\label{eqn:hess-J}
                \mathrm{Hess}(\wt{\Phi}^0) = \frac{2}{\wt{d}}A(\wt{\Phi}^0)^\top A(\wt{\Phi}^0)=\left\{ \begin{array}{ll}
                        4 I & , \text{if }d\text{ is odd},\\
                        \mathrm{diag}\{ 4, \cdots, 4, 2\} & , \text{if }d\text{ is even}.
                \end{array} \right.
        \end{equation}
\end{theorem}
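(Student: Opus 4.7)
The plan is to establish the two equalities in turn: first that $\mathrm{Hess}(\wt{\Phi}^0)$ reduces to the Gauss--Newton matrix $\frac{2}{\wt{d}}A(\wt{\Phi}^0)^\top A(\wt{\Phi}^0)$, and then that this matrix has the claimed diagonal form. For the first equality I invoke \cref{eq:hess-proof}: the residual term $R_{ij}$ is a weighted sum of $g_{ij}(x_k,\wt{\Phi}^0)$ against the values $g(x_k,\wt{\Phi}^0)-f(x_k)=-f(x_k)$, so it suffices to show $g_{ij}(x_k,\wt{\Phi}^0)=0$ pointwise. This follows from the Pauli structure at $\Phi^0$: since $U(x,\Phi^0)=e^{\I\pi/4\,Z}W(x)^d e^{\I\pi/4\,Z}$, a derivative with respect to the phase at the unreduced position $k$ inserts a factor of $\I Z$ between $W^k$ and $W^{d-k}$, so two such insertions at positions $k_1\le k_2$ produce the string $W^{k_1}ZW^{k_2-k_1}ZW^{d-k_2}$. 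Applying the elementary identity $ZW^mZ=W^{-m}$ collapses this to $W^{d-2(k_2-k_1)}$, and hence every contribution to $\partial_i\partial_j U|_{\Phi^0}$ has the uniform form $-e^{\I\pi/4\,Z}W^m e^{\I\pi/4\,Z}$ for some integer $m$. Using the closed form $W^m=T_{|m|}(x)I+\I\,\mathrm{sgn}(m)\sqrt{1-x^2}U_{|m|-1}(x)X$ together with the observation that conjugation by $e^{\I\pi/4\,Z}$ sends $Z\mapsto\I I$ and fixes $X$, the upper-left entry of each such matrix equals $-\I T_{|m|}(x)$, which is purely imaginary. Summing over the positions corresponding to the reduced indices $i,j$ and taking the real part yields $g_{ij}(x_k,\wt{\Phi}^0)=0$, so $R(\wt{\Phi}^0)=0$ independently of the target $f$.

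For the Jacobian entries $A(\wt{\Phi}^0)_{kj}=g_j(x_k,\wt{\Phi}^0)$ I would perform the analogous first-derivative calculation. Each reduced coordinate $\phi_j$ with $0\le j\le \wt{d}-2$, together with $j=\wt{d}-1$ when $d$ is odd, corresponds to the two distinct unreduced positions $j$ and $d-j$. The symmetric sum $W^j Z W^{d-j}+W^{d-j}ZW^j$ evaluates to $2T_{|d-2j|}(x)Z$ once the $Y$-components cancel; conjugating by $e^{\I\pi/4\,Z}$ and multiplying by $\I$ then gives $\partial_j^{\mathrm{red}}U|_{\Phi^0}=-2T_{d-2j}(x)I$, so $g_j(x,\wt{\Phi}^0)=-2T_{d-2j}(x)$. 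When $d$ is even and $j=\wt{d}-1$, only the unpaired middle position $d/2$ contributes, and the identity $W^{d/2}ZW^{d/2}=Z$ produces the anomalous entry $g_{\wt{d}-1}(x,\wt{\Phi}^0)=-T_0(x)=-1$.

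Finally, I would assemble $(A^\top A)_{ij}=\sum_{k=1}^{\wt{d}} g_i(x_k)g_j(x_k)$ using the same-parity discrete Chebyshev orthogonality formula recorded in \cref{sec:chebyshev}: the degrees $d-2j$ arising in the Jacobian columns are pairwise distinct and share the common parity $d\bmod 2$, so every off-diagonal entry vanishes, whereas the diagonal entries evaluate to $4\cdot\wt{d}/2=2\wt{d}$ for each non-constant $T_{d-2j}$ column and to $1\cdot\wt{d}=\wt{d}$ for the single $T_0$ column that appears only in the even-$d$ case. Multiplying by $2/\wt{d}$ then produces $4I$ when $d$ is odd and $\mathrm{diag}(4,\dots,4,2)$ when $d$ is even, as claimed. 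The principal bookkeeping obstacle will be the parity-dependent treatment of the middle coordinate $\phi_{\wt{d}-1}$: for odd $d$ it naturally pairs as $(d-1)/2,(d+1)/2$ and fits the side-coordinate formula, whereas for even $d$ it occupies the unpaired position $d/2$, which is simultaneously the source of the $-T_0$ column and of the anomalous diagonal entry $2$ (rather than $4$) in the final Hessian.
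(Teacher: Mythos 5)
Your proposal is correct and takes essentially the same route as the paper's own proof: both collapse the inserted $\I Z$ factors via $ZW(x)^mZ=W(x)^{-m}$ to show that every second derivative $g_{ij}(x,\wt{\Phi}^0)$ vanishes (so the residual term $R$ in \cref{eq:hess-proof} drops out), both compute the Jacobian entries $g_j(x,\wt{\Phi}^0)=-2T_{d-2j}(x)$ with the anomalous value $-1$ for the unpaired middle coordinate when $d$ is even, and both finish with the same-parity discrete Chebyshev orthogonality at the nodes $x_k$. The only cosmetic difference is that the paper first strips the edge $\pi/4$ phases using the identity $\Re[\braket{0|U|0}]=-\Im[\braket{0|e^{-\I\pi/4 Z}Ue^{-\I\pi/4 Z}|0}]$ from \cite{DongMengWhaleyEtAl2020} and tracks realness of $\braket{0|W^{d_1}(\I Z)W^{d_2}(\I Z)W^{d_3}|0}$, whereas you keep the edge phases and use $e^{\I\pi/4 Z}Ze^{\I\pi/4 Z}=\I I$, $e^{\I\pi/4 Z}Xe^{\I\pi/4 Z}=X$ directly.
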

\begin{proof}
        According to \cite[Lemma 3]{DongMengWhaleyEtAl2021}, 
        $$\Re\left[ \langle 0 | U(x, \wt{\Phi}) | 0 \rangle \right] = - \Im\left[ \langle 0 | e^{-\I \frac{\pi}{4} Z} U(x, \wt{\Phi}) e^{-\I \frac{\pi}{4} Z} | 0 \rangle \right],$$
        which suggests us to extract $\frac{\pi}{4}$ from the phase factors at the edges and consider $e^{-\I \frac{\pi}{4} Z} U(x, \wt{\Phi}^0)\\ e^{-\I \frac{\pi}{4} Z} = W(x)^d$. The integer power of the matrix $W(x) := e^{\I \arccos(x) X}$ exactly generates the Chebyshev polynomial of the first kind as its upper-left matrix element. Furthermore, its upper-left is real. The claim follows the property of the exponential map of Pauli matrices $W(x)^d = e^{\I d \arccos(x) X} = \cos(d\arccos(x)) + \I \sin(d\arccos(x)) X$. Then, noting that $\braket{0|X|0} = 0$, it yields $\braket{0 | W(x)^d | 0} = \cos(d\arccos(x)) = T_d(x)$.
        
        To proceed, let us consider the monomial $M^{d_1,d_2,d_3}(x) := W(x)^{d_1}(\I Z) W(x)^{d_2} (\I Z) W(x)^{d_3}$ for any $d_1, d_2, d_3 \ge 0$. We will show that the upper-left element $\braket{0 | M^{d_1,d_2,d_3}(x) | 0}$ is real. Following the anticommutation relation $ZXZ = -X$, a useful equality can be derived $ZW(x)Z = e^{\I \arccos(x) ZXZ} = e^{-\I \arccos(x) X} = W(x)^{-1}$. Consequentially, by moving $Z$ gates to the same site for cancellation, it can be shown that $\braket{0 | M^{d_1,d_2,d_3}(x) | 0} = - T_{\abs{d_1+d_3-d_2}}(x)$ is real. 
        
        Note that taking second-order derivative on the unitary defined in \Cref{eqn:optprob-intro} at $\Phi^0$ results in an integer-coefficient superposition of monomials of the defined form. To illustrate, given an odd integer $d = 2\wt{d} - 1$ and $0 \le i < j < \wt{d}$, one can verify that
        \begin{equation*}
            \frac{\partial^2}{\partial \phi_i \partial \phi_j} U(x, \wt{\Phi}) \bigg|_{\wt{\Phi}_0} = M^{i,j-i,d-j}(x) + M^{i,d-j-i,j}(x) + M^{d-j,j-i,i}(x) + M^{j,d-j-i,i}(x).
        \end{equation*}
        Then, by taking the imaginary component, the second-order derivative $g_{ij}$ is vanishing at $\wt{\Phi}^0$. Thus, the second term in the right-hand side of \Cref{eq:hess-proof} vanishes since $g_{ij}(x, \wt{\Phi}^0) = 0$. 
        
        When $d = 2(\wt{d} - 1)$ is even, $g_i(x, \wt{\Phi}^0) = - 2 \Im\left[\braket{0| \I W(x)^i Z W(x)^{d-i}|0} \right] = -2 T_{d-2i}(x)$ for $i = 0, \cdots, \wt{d} - 2$ and $g_{\wt{d}-1}(x, \wt{\Phi}^0) = - 1$. According to the discrete orthogonality of Chebyshev nodes, we have
        $$\sum_{k=1}^{\wt{d}} g_i(x_k, \wt{\Phi}^0) g_j(x_k, \wt{\Phi}^0) = \wt{d} \left( 2 \delta_{ij} - \delta_{i, \wt{d}-1} \delta_{j, \wt{d}-1} \right).$$
        
        When $d = 2\wt{d}-1$ is odd, we have similarly $\sum_{k=1}^{\wt{d}} g_i(x_k, \wt{\Phi}^0) g_j(x_k, \wt{\Phi}^0) = 2 \wt{d} \delta_{ij}$.
        
        Following \cref{eq:hess-proof}, the Hessian matrix at $\wt{\Phi}^0$ takes the form in the theorem which completes the proof.
\end{proof}

To show that the Hessian matrix is bounded from below, we need the following lemma regarding the perturbation analysis to the singular values of a matrix.
\begin{lemma}[{\cite[Corollary 8.6.2]{GolubVanLoan96}}]\label{lma:svdperturb}
Given A, A+E $ \in \RR^{m\times n}$ with $m\geq n$, we have
\begin{equation*}
    \abs{\sigma_k (A+E)-\sigma_k(A)}\leq \sigma_0(E)=\norm{E}_2,\quad \forall k\in [n].
\end{equation*}
\end{lemma}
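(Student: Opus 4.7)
The plan is to prove the SVD perturbation bound via the Courant--Fischer min--max characterization of singular values, which is the cleanest route for this kind of Lipschitz-style bound. The key identity I would use is
\[
\sigma_k(M) = \max_{\substack{S\subseteq\RR^n \\ \dim S = k+1}} \; \min_{\substack{x\in S \\ \norm{x}_2 = 1}} \norm{Mx}_2,
\]
valid for any $M\in\RR^{m\times n}$ with $m\geq n$. This is a direct consequence of applying the standard Courant--Fischer theorem to the symmetric positive semidefinite matrix $M^\top M$ and taking square roots, so I would cite it rather than reprove it.

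First I would observe the pointwise inequality for any unit vector $x\in\RR^n$:
\[
\bigl| \norm{(A+E)x}_2 - \norm{Ax}_2 \bigr| \;\leq\; \norm{Ex}_2 \;\leq\; \norm{E}_2,
\]
which follows from the reverse triangle inequality together with the definition of the operator norm. This already reduces the problem to transferring a pointwise bound on the Rayleigh-like quantity $\norm{Mx}_2$ to a bound on the $(k+1)$-th largest such value.

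Next, plugging this inequality into the variational formula, for every $(k+1)$-dimensional subspace $S$ one has
\[
\min_{x\in S,\, \norm{x}_2=1} \norm{(A+E)x}_2 \;\geq\; \min_{x\in S,\, \norm{x}_2=1} \norm{Ax}_2 \;-\; \norm{E}_2.
\]
Taking the maximum over all such subspaces $S$ on both sides gives $\sigma_k(A+E) \geq \sigma_k(A) - \norm{E}_2$. The matching upper bound follows from the symmetric argument: writing $A = (A+E) + (-E)$ and noting $\norm{-E}_2 = \norm{E}_2$, the same reasoning yields $\sigma_k(A) \geq \sigma_k(A+E) - \norm{E}_2$, i.e., $\sigma_k(A+E) \leq \sigma_k(A) + \norm{E}_2$. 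Combining the two one-sided inequalities gives the desired $\abs{\sigma_k(A+E) - \sigma_k(A)} \leq \norm{E}_2$.

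I do not anticipate any real obstacle: the only care needed is to invoke the correct form of the min--max characterization for rectangular matrices (since the lemma is stated for $m\geq n$ and indexed over $k\in[n]$), and to keep track of the convention that $\sigma_0 \geq \sigma_1 \geq \cdots$ matches the ``max over $(k+1)$-dimensional subspaces of the min'' formula. Because this is a well-known textbook result cited directly from Golub--Van Loan, the author likely omits the proof; my sketch above is the standard argument.
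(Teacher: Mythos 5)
Your proof is correct: the min--max characterization $\sigma_k(M) = \max_{\dim S = k+1} \min_{x\in S,\,\norm{x}_2=1}\norm{Mx}_2$, the pointwise reverse-triangle-inequality bound, and the two one-sided estimates obtained by swapping the roles of $A$ and $A+E$ assemble exactly as you describe. The paper itself gives no proof of this lemma --- it is quoted verbatim as Corollary 8.6.2 of Golub--Van Loan --- and your argument is the standard textbook derivation of that result, so there is nothing to reconcile.
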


Then we bound the perturbation to the Jacobian matrix via that of the phase factors.
\begin{lemma}\label{lma:perturb-Jacobian}
    Given any symmetric phase factors $\Phi$ of length $d+1$, let $E:=A(\wt{\Phi}+\boldsymbol{\epsilon})-A(\wt{\Phi})$. Then $\norm{E}_2\leq 4\wt{d}^{\frac{3}{2}}\norm{\boldsymbol{\epsilon}}_2$.
\end{lemma}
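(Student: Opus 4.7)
The plan is to control $\norm{E}_2$ by the Frobenius norm and then by an entry-wise bound obtained from the mean value theorem. First I would write each entry as
\begin{equation*}
E_{ij} \;=\; g_j(x_i, \wt{\Phi}+\boldsymbol{\epsilon}) - g_j(x_i, \wt{\Phi}) \;=\; \int_0^1 \nabla_{\wt{\Phi}}\, g_j(x_i, \wt{\Phi}+t\boldsymbol{\epsilon})\cdot \boldsymbol{\epsilon}\,\mathrm{d}t,
\end{equation*}
so that Cauchy--Schwarz gives $\abs{E_{ij}} \leq \sup_{t\in[0,1]} \norm{\nabla_{\wt{\Phi}} g_j(x_i, \wt{\Phi}+t\boldsymbol{\epsilon})}_2 \,\norm{\boldsymbol{\epsilon}}_2$. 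The task thus reduces to obtaining a uniform bound on the Hessian entries $g_{jk}(x,\wt{\Phi}) = \partial^2 g(x,\wt{\Phi}) / \partial \phi_j \partial \phi_k$, valid for all $x\in[-1,1]$ and all symmetric phase factors.

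Next I would expand these second derivatives explicitly. Since $g(x,\wt{\Phi}) = \Re\braket{0|U(x, \Phi(\wt{\Phi}))|0}$ and $U(x,\Phi)$ is a product of $e^{\I\phi_\ell Z}$ and $W(x)$ factors, differentiating with respect to the $\ell$-th original phase inserts a factor of $\I Z$ at that position. Under the symmetric reparametrization, each $\wt{\phi}_j$ governs at most two of the original phases (namely $\phi_j$ and $\phi_{d-j}$, with a single contribution when $j=\wt{d}-1$ and $d$ is even), so by the chain rule $g_{jk}$ expands into at most four terms of the form $\pm\Re\braket{0|V_1 (\I Z) V_2 (\I Z) V_3|0}$, where each $V_\ell$ is itself a unitary. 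Since $\norm{V_\ell}_2 = \norm{\I Z}_2 = 1$, every such term has modulus at most one, yielding the uniform bound $\abs{g_{jk}(x,\wt{\Phi})} \leq 4$.

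Assembling the pieces, $\norm{\nabla_{\wt{\Phi}} g_j(x_i,\cdot)}_2 \leq \sqrt{\wt{d}\cdot 16} = 4\sqrt{\wt{d}}$, so $\abs{E_{ij}} \leq 4\sqrt{\wt{d}}\,\norm{\boldsymbol{\epsilon}}_2$ and hence
\begin{equation*}
\norm{E}_2 \;\leq\; \norm{E}_F \;\leq\; \sqrt{\wt{d}^2 \cdot 16 \wt{d}}\;\norm{\boldsymbol{\epsilon}}_2 \;=\; 4\wt{d}^{3/2}\norm{\boldsymbol{\epsilon}}_2,
\end{equation*}
which is the claim. I do not anticipate any serious obstacle: the calculation is routine once one observes that derivatives of a product of unitaries in the phase factors are bounded in operator norm by one. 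The only subtlety is the bookkeeping for the symmetric chain rule, which multiplies the number of terms in each first derivative by at most two and hence at most four in the Hessian; no cancellation among these terms is needed to obtain the stated constant.
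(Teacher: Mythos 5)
Your proposal is correct and follows essentially the same route as the paper's proof: both bound each second derivative $g_{jk}$ by $4$ by observing that differentiation inserts $\I Z$ factors (at most two sites per reduced phase factor due to symmetry, hence at most four unitary-sandwich terms each of modulus at most one), deduce $\norm{\nabla g_j}_2 \le 4\sqrt{\wt{d}}$, apply the mean value theorem (the paper's Lagrange form versus your integral form plus Cauchy--Schwarz, which are equivalent here), and finish with $\norm{E}_2 \le \norm{E}_F \le 4\wt{d}^{3/2}\norm{\boldsymbol{\epsilon}}_2$. The only difference is presentational: the paper spells out the odd/even $d$ bookkeeping via explicitly defined shifted asymmetric phase-factor sets, which your chain-rule argument compresses but does not omit.
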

\begin{proof}
Recall that in the proof of \cref{thm:strong-convex-init}, we have shown that taking the derivative \wrt $\phi_i$ yields terms by adding an $\I Z$ multiplication left to $e^{\I \phi_i Z}$. Note that $\I Z = e^{\I \frac{\pi}{2} Z}$, it shifts the $i$-th site by $\frac{\pi}{2}$ which essentially breaks the symmetry constraint. That means we have to use asymmetric set of phase factors to characterize the derivative. To capture this, we define the evaluation map of asymmetric phase factors as follows. Given an asymmetric set of phase factors $\Phi_\text{as} := (\phi_0, \phi_1, \cdots, \phi_{d})$, we define
\begin{equation*}
    U^{\text{(as)}}(x, \Phi_\text{as}) := e^{\I\phi_0 Z} e^{\I\arccos(x)X} e^{\I\phi_1 Z} e^{\I\arccos(x) X} \cdots e^{\I\phi_{d-1} Z} e^{\I\arccos(x) X} e^{\I\phi_d Z}.
\end{equation*}
We use a superscript $\text{(as)}$ to distinguish it with $U(x,\wt{\Phi})$ which takes the reduced phase factors $\wt{\Phi}$ of a symmetric set as argument.

When $d$ is odd, we define an asymmetric set of phase factors as
$$\Phi_\text{as}^{(i)} := \left(\phi_0, \cdots, \phi_{i-1}, \phi_i + \frac{\pi}{2}, \phi_{i+1}, \cdots, \phi_{\wt{d}-1}, \phi_{\wt{d}-1}, \cdots, \phi_{i+1}, \phi_i, \phi_{i-1}, \cdots, \phi_0\right)$$
which unrolls the full set of phase factors and add a $\frac{\pi}{2}$ shift on the $i$-th site. As a remark, the shifted site $i$ can be greater than $\wt{d}$ which adds a phase shift on the right half. 

Taking derivative yields that 
\begin{equation*}
\begin{split}
    g_i(x,\wt{\Phi}) &= \Re\left[ \braket{0| \frac{\partial}{\partial \phi_i} U(x, \wt{\Phi}) |0} \right] = \Re[\braket{0|U^{\text{(as)}}(x,\Phi_\text{as}^{(i)})|0}] + \Re[\braket{0|U^{\text{(as)}}(x,\Phi_\text{as}^{(d-i)})|0}]\\
    &= \Re[\braket{0|U^{\text{(as)}}(x,\Phi_\text{as}^{(i)})|0}] + \Re[\braket{0|U^{\text{(as)}}(x,\Phi_\text{as}^{(i)})^\top|0}] = 2\Re[\braket{0|U^{\text{(as)}}(x,\Phi_\text{as}^{(i)})|0}].
\end{split}
\end{equation*}
Due to the unitarity, we have
\begin{equation*}
    \abs{g_i(x, \wt{\Phi})} \le 2, \quad \forall i \in [\wt{d}].
\end{equation*}

To further take the second derivative, let us first introduce a double-shifted asymmetric set of phase factors. When $0 \le i < j < \wt{d}$,
\begin{equation*}
    \Phi_\text{as}^{(i,j)} := \left(\phi_0, \cdots, \phi_{i-1}, \phi_i + \frac{\pi}{2}, \phi_{i+1}, \cdots, \phi_{j-1}, \phi_j + \frac{\pi}{2}, \phi_{j+1}, \cdots, \phi_{\wt{d}-1}, \phi_{\wt{d}-1}, \cdots, \phi_0\right).
\end{equation*}
When $0 \le i = j < \wt{d}$,
\begin{equation*}
    \Phi_\text{as}^{(i,i)} := \left(\phi_0, \cdots, \phi_{i-1}, \phi_i + \pi, \phi_{i+1}, \cdots, \phi_{\wt{d}-1}, \phi_{\wt{d}-1}, \cdots, \phi_{i+1}, \phi_i, \phi_{i-1}, \cdots, \phi_0\right).
\end{equation*}
Similarly, when $i \text{ or } j \ge \wt{d}$, the phase shift is added on the right half accordingly.

With the defined notation, we are able to compute the second derivative as \begin{equation*}
    \begin{split}
        g_{ij}(x, \wt{\Phi}) &:= \frac{\partial}{\partial \phi_j} g_i(x, \wt{\Phi}) = 2\Re[\braket{0|\frac{\partial}{\partial \phi_j} U^{\text{(as)}}(x,\Phi_\text{as}^{(i)})|0}]\\
        &= 2 \Re[\braket{0|U^{\text{(as)}}(x,\Phi_\text{as}^{(i, j)})|0}] + 2 \Re[\braket{0|U^{\text{(as)}}(x,\Phi_\text{as}^{(i, d-j)})|0}].
    \end{split}
\end{equation*}

Due to the unitarity, we have
\begin{equation*}
    \abs{g_{ij}(x, \wt{\Phi})} \le 4, \quad \forall i, j \in [\wt{d}].
\end{equation*}
Therefore, for any given odd integer $d$ and symmetric $\Phi \in \RR^{d+1}$, we have
\begin{equation}\label{eq:estimate_gradient_odd}
    \norm{\nabla g_i(x,\wt{\Phi})}_2 = \sqrt{\sum_{j \in [\wt{d}]} \abs{g_{ij}(x, \wt{\Phi})}^2} \leq 4 \sqrt{\wt{d}}, \quad \forall i \in [\wt{d}].
\end{equation}

When $d$ is even, the full set of phase factors is $(\phi_0,...\phi_{\wt{d}-2},\phi_{\wt{d}-1},\phi_{\wt{d}-2},...,\phi_{0})$. The only difference in the unrolled structure comparing to that in the odd case is that there is a single unpaired phase factor $\phi_{\wt{d}-1}$ at the middle. Therefore, when the derivative only involves $\phi_0, \cdots, \phi_{\wt{d}-2}$, the analysis is exactly the same as that when $d$ is odd.

When $i=\wt{d}-1$, taking derivative \wrt $\phi_{\wt{d}-1}$ is equivalently shifting this site by $\frac{\pi}{2}$, namely,
\begin{equation*}
    g_{\wt{d}-1}(x,\wt{\Phi})= \Re[\braket{0|U(x,\wt{\Psi})|0}]=g(x,\wt{\Psi}),
\end{equation*}
where 
\begin{equation*}
    \Psi := \Phi + \frac{\pi}{2} \ve_{\wt{d}-1} = (\phi_0, \cdots, \phi_{\wt{d}-2}, \phi_{\wt{d}-1}+\frac{\pi}{2}, \phi_{\wt{d}-2}, \cdots, \phi_0).
\end{equation*}
Note that $\Psi$ is a set of symmetric phase factors (and therefore we use a different notation). Then, taking the second derivative on $g(x, \wt{\Phi})$ is equivalent to taking the first derivative on $g(x, \wt{\Psi})$ whose analysis is the same as that in the previous case. Therefore,
\begin{equation*}
    g_{i, \wt{d}-1}(x,\wt{\Phi}) = g_i(x, \wt{\Psi}) = 2 \Re\left[ \braket{0| U^{\text{(as)}}(x, \Psi_\text{as}^{(i)})|0} \right],\ \forall i \in [\wt{d}-1],
\end{equation*}
and
\begin{equation*}
    g_{\wt{d}-1,\wt{d}-1}(x,\wt{\Phi}) = g_{\wt{d}-1}(x, \wt{\Psi}) = - g(x,\wt{\Phi}).
\end{equation*}
These results lead to the inequality
\begin{equation*}
    \abs{g_{i, \wt{d}-1}(x,\wt{\Phi})} \le 2, \quad \forall i \in [\wt{d}].
\end{equation*}
Thus, the inequality in \cref{eq:estimate_gradient_odd} also holds when $d$ is even.

To conclude, for any given integer $d$ and symmetric $\Phi \in \RR^{d+1}$, we obtain a uniform upper bound of the gradient of $g_j(x,\wt{\Phi})$
\begin{equation}\label{eq:estimate_gradient}
    \norm{\nabla g_j(x,\wt{\Phi})}_2 \leq 4 \sqrt{\wt{d}},\  \forall j \in [\wt{d}], \ \forall x\in[-1,1].
\end{equation}
Following the mean value theorem, there exists $t\in (0,1)$ so that for any $j \in [\wt{d}]$ and $x \in [-1, 1]$
\begin{equation*}
    \abs{g_j(x,\wt{\Phi}+\boldsymbol{\epsilon})-g_j(x,\wt{\Phi})}\leq \norm{\nabla g_j(x,\wt{\Phi}+t\boldsymbol{\epsilon})}_2 \norm{\boldsymbol{\epsilon}}_2\leq 4\sqrt{\wt{d}}\norm{\boldsymbol{\epsilon}}_2.
\end{equation*}
Furthermore
\begin{displaymath}
    \norm{ A(\wt{\Phi}+\boldsymbol{\epsilon})-A(\wt{\Phi})}_F=\sqrt{\sum_{j=0}^{\wt{d}-1}\sum_{k=1}^{\wt{d}}\left(g_j(x_k,\wt{\Phi}+\boldsymbol{\epsilon})-g_j(x_k,\wt{\Phi})\right)^2}\leq 4\wt{d}^{\frac{3}{2}}\norm{\boldsymbol{\epsilon}}_2.
\end{displaymath}
Then, the theorem follows
\begin{displaymath}
    \norm{A(\wt{\Phi}+\boldsymbol{\epsilon})-A(\wt{\Phi})}_2\leq \norm{A(\wt{\Phi}+\boldsymbol{\epsilon})-A(\wt{\Phi})}_F\leq 4\wt{d}^{\frac{3}{2}}\norm{\boldsymbol{\epsilon}}_2.
\end{displaymath}
\end{proof}
Then the perturbation analysis to the singular value of the Jacobian matrix is obtained by directly applying \cref{lma:svdperturb}.
\begin{corollary}\label{cor:phi_around_initial}
If $\norm{\boldsymbol{\epsilon}}_2\leq \frac{1}{20\wt{d}}$, then 
$$\frac{4}{5}\sqrt{\wt{d}}\leq\sigma_{\min}(A(\wt{\Phi}^0+\boldsymbol{\epsilon}))\leq \sigma_{\max}(A(\wt{\Phi}^0+\boldsymbol{\epsilon}))\leq \left(\frac{1}{5}+\sqrt{2}\right)\sqrt{\wt{d}}.$$
\end{corollary}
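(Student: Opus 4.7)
The plan is to combine the exact computation of $A(\wt{\Phi}^0)^\top A(\wt{\Phi}^0)$ from \cref{thm:strong-convex-init} with the operator-norm perturbation bound from \cref{lma:perturb-Jacobian} and the singular-value stability estimate from \cref{lma:svdperturb}. This is essentially a one-line calculation once those three ingredients are in place, so the main task is simply bookkeeping with the constants.

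First, I would read off the singular values of the unperturbed Jacobian. \cref{thm:strong-convex-init} gives $\frac{2}{\wt{d}}A(\wt{\Phi}^0)^\top A(\wt{\Phi}^0)$ as either $4I$ (odd $d$) or $\mathrm{diag}\{4,\dots,4,2\}$ (even $d$). Consequently every singular value of $A(\wt{\Phi}^0)$ lies in the interval $[\sqrt{\wt{d}},\sqrt{2\wt{d}}]$; in particular $\sigma_{\min}(A(\wt{\Phi}^0))\ge \sqrt{\wt{d}}$ and $\sigma_{\max}(A(\wt{\Phi}^0))\le \sqrt{2\wt{d}}$.

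Next, I would set $E:=A(\wt{\Phi}^0+\boldsymbol{\epsilon})-A(\wt{\Phi}^0)$ and apply \cref{lma:perturb-Jacobian} to get $\norm{E}_2\le 4\wt{d}^{3/2}\norm{\boldsymbol{\epsilon}}_2$. Under the hypothesis $\norm{\boldsymbol{\epsilon}}_2\le \frac{1}{20\wt{d}}$, this yields
\begin{equation*}
\norm{E}_2\ \le\ 4\wt{d}^{3/2}\cdot\frac{1}{20\wt{d}}\ =\ \frac{\sqrt{\wt{d}}}{5}.
\end{equation*}
Finally, invoking \cref{lma:svdperturb} with $A=A(\wt{\Phi}^0)$ and $A+E=A(\wt{\Phi}^0+\boldsymbol{\epsilon})$ gives $\sigma_{\min}(A(\wt{\Phi}^0+\boldsymbol{\epsilon}))\ge \sqrt{\wt{d}}-\tfrac{\sqrt{\wt{d}}}{5}=\tfrac{4}{5}\sqrt{\wt{d}}$ and $\sigma_{\max}(A(\wt{\Phi}^0+\boldsymbol{\epsilon}))\le \sqrt{2\wt{d}}+\tfrac{\sqrt{\wt{d}}}{5}=(\sqrt{2}+\tfrac{1}{5})\sqrt{\wt{d}}$, which is exactly the claim.

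There is no real obstacle here: the nontrivial work has already been done in \cref{thm:strong-convex-init} (to identify the base-point spectrum) and \cref{lma:perturb-Jacobian} (to control the Lipschitz constant of $\wt{\Phi}\mapsto A(\wt{\Phi})$ in operator norm). The only thing to check is that the radius $\frac{1}{20\wt{d}}$ is chosen precisely so that $4\wt{d}^{3/2}\cdot\frac{1}{20\wt{d}}=\frac{\sqrt{\wt{d}}}{5}$, which gives a clean one-fifth slack off the smaller base singular value $\sqrt{\wt{d}}$; this is what makes the resulting lower bound $\tfrac{4}{5}\sqrt{\wt{d}}$ strictly positive and therefore useful downstream when proving the Hessian estimates in \cref{thm:Hess_PD}.
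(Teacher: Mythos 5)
Your proof is correct and follows essentially the same route as the paper's own proof: both read off $\sigma_{\min}(A(\wt{\Phi}^0))\ge\sqrt{\wt{d}}$ and $\sigma_{\max}(A(\wt{\Phi}^0))\le\sqrt{2\wt{d}}$ from \cref{thm:strong-convex-init}, bound $\norm{E}_2\le 4\wt{d}^{3/2}\norm{\boldsymbol{\epsilon}}_2\le\frac{1}{5}\sqrt{\wt{d}}$ via \cref{lma:perturb-Jacobian}, and conclude with the singular-value perturbation bound of \cref{lma:svdperturb}. The constants and arithmetic match the paper exactly.
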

\begin{proof}
Applying \cref{lma:svdperturb}, we have
\begin{equation}
    \abs{\sigma_k (A(\wt{\Phi}^0+\boldsymbol{\epsilon}))-\sigma_k(A(\wt{\Phi}^0))}\leq 4\wt{d}^{\frac{3}{2}}\norm{\boldsymbol{\epsilon}}_2.
\end{equation}
As a consequence of \cref{thm:strong-convex-init}, $\sigma_{\max}(A(\wt{\Phi}^0)) =\sqrt{2\wt{d}}$ and
\begin{equation*}
\sigma_\text{min}(A(\wt{\Phi}^0) = \left\{
    \begin{array}{ll}
        \sqrt{\wt{d}} &, \text{ when } d \text{ is even},\\
        \sqrt{2 \wt{d}} &, \text{ when } d \text{ is odd}. 
    \end{array}\right.
\end{equation*}
When $\norm{\boldsymbol{\epsilon}}_2\leq \frac{1}{20\wt{d}}$, we have
\begin{equation}
    \sigma_{\min}(A(\wt{\Phi}^0+\boldsymbol{\epsilon}))\geq \sigma_{\min}(A(\wt{\Phi}^0))- \norm{E}_2\geq \sqrt{\wt{d}}-4\wt{d}^{\frac{3}{2}}\norm{\boldsymbol{\epsilon}}_2 \geq\frac{4}{5}\sqrt{\wt{d}}.
\end{equation}
and 
\begin{equation}
    \sigma_{\max}(A(\wt{\Phi}^0+\boldsymbol{\epsilon}))\leq \sigma_{\max}(A(\wt{\Phi}^0))+ \norm{E}_2\leq \sqrt{2\wt{d}}+4\wt{d}^{\frac{3}{2}}\norm{\boldsymbol{\epsilon}}_2 \leq \left(\frac{1}{5}+\sqrt{2}\right)\sqrt{\wt{d}}.
\end{equation}
\end{proof}

Recall that the Hessian matrix at the maximal solution satisfies $\text{Hess}(\wt{\Phi}^*)=\frac{2}{\wt{d}} A(\wt{\Phi}^*)^\top A(\wt{\Phi}^*)$.  The positive definiteness of this matrix follows by combining the previous results with \cref{thm:estimate_maximal}.
\begin{theorem}
        Let $\Phi^*$ be the optimal phase factors corresponding to the maximal solution. If $\norm{f}_\infty \leq \frac{\sqrt{3}}{20\pi\wt{d}}$, then $\mathrm{Hess}(\wt{\Phi}^*)$ is positive definite and 
        \begin{equation}
            \lambda_{\min} \left(\mathrm{Hess}(\wt{\Phi}^*)\right)\geq \frac{32}{25}.
        \end{equation}
\end{theorem}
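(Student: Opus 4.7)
The plan is to combine the vanishing-residual structure of the Hessian at a global minimum with the distance bound on the maximal solution and the singular-value perturbation estimate, all of which are established earlier in the excerpt. At the maximal solution $\wt{\Phi}^*$ we have $g(x_k,\wt{\Phi}^*)=f(x_k)$ for every Chebyshev node, so the residual matrix $R(\wt{\Phi}^*)$ in the decomposition \cref{eq:hess-proof} vanishes and
\begin{equation*}
    \mathrm{Hess}(\wt{\Phi}^*) \;=\; \frac{2}{\wt{d}}\,A(\wt{\Phi}^*)^{\top} A(\wt{\Phi}^*),
\end{equation*}
whose smallest eigenvalue equals $\frac{2}{\wt{d}}\,\sigma_{\min}(A(\wt{\Phi}^*))^{2}$. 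It therefore suffices to lower bound $\sigma_{\min}(A(\wt{\Phi}^*))$.

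The next step is to show that $\wt{\Phi}^*$ lies in the perturbative regime around $\wt{\Phi}^0$ handled by \cref{cor:phi_around_initial}. By \cref{thm:estimate_maximal}, the hypothesis $\norm{f}_\infty \le \frac{\sqrt{3}}{20\pi\wt{d}}$ immediately gives
\begin{equation*}
    \norm{\wt{\Phi}^* - \wt{\Phi}^0}_2 \;\le\; \frac{\pi}{\sqrt{3}}\,\norm{f}_\infty \;\le\; \frac{1}{20\wt{d}},
\end{equation*}
which is exactly the radius condition required by \cref{cor:phi_around_initial}. Note that this step silently uses $\norm{f}_\infty\le \frac{1}{2}$, which is clearly implied by the stronger hypothesis.

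Writing $\boldsymbol{\epsilon} := \wt{\Phi}^* - \wt{\Phi}^0$, \cref{cor:phi_around_initial} then yields $\sigma_{\min}(A(\wt{\Phi}^*)) \ge \frac{4}{5}\sqrt{\wt{d}}$. Substituting back,
\begin{equation*}
    \lambda_{\min}\!\bigl(\mathrm{Hess}(\wt{\Phi}^*)\bigr) \;=\; \frac{2}{\wt{d}}\,\sigma_{\min}\!\bigl(A(\wt{\Phi}^*)\bigr)^{2} \;\ge\; \frac{2}{\wt{d}}\cdot\frac{16}{25}\,\wt{d} \;=\; \frac{32}{25},
\end{equation*}
which proves the claim and simultaneously gives positive definiteness.

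There is essentially no hard step here: the theorem is really a clean corollary obtained by chaining \cref{thm:estimate_maximal} (distance bound), \cref{cor:phi_around_initial} (singular-value stability of the Jacobian), and the vanishing of the residual term in the Hessian at a global minimum. The only subtlety worth double-checking is that the constant $\frac{1}{20\wt{d}}$ appears on both sides of the chain precisely because the scalar $\frac{\sqrt{3}}{20\pi\wt{d}}$ in the hypothesis has been engineered to match, via the factor $\pi/\sqrt{3}$ from \cref{thm:estimate_maximal}, the admissible perturbation radius in \cref{cor:phi_around_initial}.
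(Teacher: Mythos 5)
Your proof is correct and follows essentially the same route as the paper: chain \cref{thm:estimate_maximal} to place $\wt{\Phi}^*$ within radius $\frac{1}{20\wt{d}}$ of $\wt{\Phi}^0$, invoke \cref{cor:phi_around_initial} for $\sigma_{\min}(A(\wt{\Phi}^*)) \ge \frac{4}{5}\sqrt{\wt{d}}$, and use the vanishing residual $R(\wt{\Phi}^*)=0$ to get $\lambda_{\min}(\mathrm{Hess}(\wt{\Phi}^*)) = \frac{2}{\wt{d}}\sigma_{\min}^2(A(\wt{\Phi}^*)) \ge \frac{32}{25}$. Your added remarks (checking $\norm{f}_\infty \le \frac{1}{2}$ so that \cref{thm:estimate_maximal} applies, and noting the deliberate matching of constants) are correct and slightly more explicit than the paper's own write-up.
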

\begin{proof}
\cref{thm:estimate_maximal} implies that $\norm{\wt{\Phi}^* -\wt{\Phi}^0}_2 \leq \frac{1}{20\wt{d}}$. As a consequence of \cref{cor:phi_around_initial}, we have $\sigma_\text{min}(A(\wt{\Phi}^*)) \ge \frac{4}{5}\sqrt{\wt{d}}$. Therefore
\begin{equation*}
    \lambda_\text{min}(\text{Hess}(\wt{\Phi}^*)) = \lambda_\text{min}\left(\frac{2}{\wt{d}} A(\wt{\Phi}^*)^\top A(\wt{\Phi}^*)\right) = \frac{2}{\wt{d}} \sigma_\text{min}^2(A(\wt{\Phi}^*)) \ge \frac{32}{25},
\end{equation*}
which completes the proof.
\end{proof}

\subsection{Lower bounding the distance to the initial point}
In the last subsection, we show that the Hessian matrix is positive definite at the maximal solution. To show the positive definiteness of the Hessian matrix in a neighborhood of the optima, we need the following lower bound as an intermediate step.
\begin{theorem}\label{thm:estimate_Cheby_norm}
Given any symmetric phase factor $\wt{\Phi}$ satisfying $\norm{\wt{\Phi}-\wt{\Phi}^0}_2\leq \frac{1}{\sqrt{2}}$, the following inequality holds
\begin{equation}
   \frac{1}{\sqrt{3}} \norm{g(x,\wt{\Phi})}_{T}\leq \norm{\wt{\Phi}-\wt{\Phi}^0}_2.
\end{equation}
\end{theorem}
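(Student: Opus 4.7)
The plan is to apply the fundamental theorem of calculus to $g$ along the straight segment from $\wt{\Phi}^{0}$ to $\wt{\Phi}$ and to exploit $g(x,\wt{\Phi}^{0})=0$. Setting $v:=\wt{\Phi}-\wt{\Phi}^{0}$ and $\gamma(t):=\wt{\Phi}^{0}+tv$, I have $g(x,\wt{\Phi})=\int_{0}^{1}\nabla_{\wt{\Phi}}\, g(x,\gamma(t))\cdot v\ud t$. Squaring, applying Jensen's inequality in $t$, and then integrating against the Chebyshev-$T$ weight yields
$$
\norm{g(\cdot,\wt{\Phi})}_{T}^{2} \leq \int_{0}^{1} v^{\top} M(\gamma(t))\, v\ud t, \qquad M(\wt{\Phi})_{ij}:=\langle g_{i}(\cdot,\wt{\Phi}),g_{j}(\cdot,\wt{\Phi})\rangle_{T}.
$$
Since each $g_{i}g_{j}$ is a polynomial of degree at most $2d$ with even parity, the discrete-continuous Chebyshev quadrature of \cref{sec:chebyshev} identifies this Gram matrix with $M(\wt{\Phi})=\tfrac{1}{\wt{d}}A(\wt{\Phi})^{\top}A(\wt{\Phi})$, connecting back to the Jacobian used throughout \cref{sec:local_strong_convexity}.

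The problem therefore reduces to the uniform bound $\lambda_{\max}(M(\wt{\Phi})) \leq 3$ over the ball $\norm{v}_{2}\leq 1/\sqrt{2}$. At the base point, the explicit formula $g_{i}(x,\wt{\Phi}^{0})=-2T_{d-2i}(x)$ derived inside the proof of \cref{thm:strong-convex-init}, together with the weighted orthogonality relations of \cref{sec:chebyshev}, immediately gives $M(\wt{\Phi}^{0})=2I$ for odd $d$ (and $\operatorname{diag}(2,\ldots,2,1)$ for even $d$); in either case $\lambda_{\max}(M(\wt{\Phi}^{0}))=2$. By Weyl's inequality, it suffices to show $\norm{M(\wt{\Phi})-M(\wt{\Phi}^{0})}_{2}\leq 1$. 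To do so I would use the splitting
$$
M(\wt{\Phi})-M(\wt{\Phi}^{0}) = \tfrac{1}{\wt{d}}\bigl((A-A^{0})^{\top}A^{0} + (A^{0})^{\top}(A-A^{0}) + (A-A^{0})^{\top}(A-A^{0})\bigr),
$$
together with the exact norm $\norm{A^{0}}_{2}=\sqrt{2\wt{d}}$ established in the proof of \cref{thm:strong-convex-init}, thereby reducing the entire argument to an effective operator-norm estimate on $A(\wt{\Phi})-A(\wt{\Phi}^{0})$.

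The main obstacle is sharpening this Jacobian perturbation beyond what \cref{lma:perturb-Jacobian} provides. The pointwise bound $\lvert g_{ij}\rvert \leq 4$ used there only yields $\norm{A-A^{0}}_{2}\lesssim \wt{d}^{3/2}\norm{v}_{2}$, which is too coarse to close the constant gap when $\norm{v}_{2}$ is as large as $1/\sqrt{2}$ uniformly in $\wt{d}$. The improvement should come from the additional fact, also proved inside \cref{thm:strong-convex-init}, that $g_{ij}(x,\wt{\Phi}^{0})=0$, so each entry of $A-A^{0}$ vanishes to first order in $v$. A second-order Taylor expansion in $\wt{\Phi}$ then involves the third partials $g_{ijk}(\cdot,\wt{\Phi}^{0})$, which by the SU(2) calculus of the proof of \cref{thm:strong-convex-init} are bounded integer-coefficient combinations of distinct Chebyshev polynomials of the first kind; this permits Parseval's identity to replace dimension-dependent $\ell^{1}$-summations of the third derivatives by dimension-independent $\ell^{2}$-summations when tested against $v$. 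Assembling these ingredients with the base-point calculation should yield $\norm{M(\wt{\Phi})-M(\wt{\Phi}^{0})}_{2}\leq 1$ in the prescribed ball, hence $\lambda_{\max}(M(\gamma(t)))\leq 3$ along the segment, which integrates to the asserted inequality $\norm{g(\cdot,\wt{\Phi})}_{T}^{2}\leq 3\norm{v}_{2}^{2}$.
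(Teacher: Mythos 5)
Your opening reduction is sound, and is a genuinely different framing from the paper's: since $g(x,\wt{\Phi}^0)=0$, the fundamental theorem of calculus plus Jensen gives $\norm{g(\cdot,\wt{\Phi})}_T^2\le\int_0^1 v^\top M(\gamma(t))\,v\,\rd t$; the identification $M(\wt{\Phi})=\frac{1}{\wt{d}}A(\wt{\Phi})^\top A(\wt{\Phi})$ is exact because each $g_ig_j$ has even parity and degree $\le 2d<4\wt{d}$, so the Chebyshev quadrature is exact; and $\lambda_{\max}(M(\wt{\Phi}^0))=2$ follows from \cref{thm:strong-convex-init}. The theorem would therefore follow from $\lambda_{\max}(M(\gamma(t)))\le 3$ along the segment. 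The gap is that your route to this bound---Weyl's inequality with $\norm{M(\wt{\Phi})-M(\wt{\Phi}^0)}_2\le 1$, itself derived from an operator-norm estimate $\norm{A-A^0}_2\le(\sqrt{3}-\sqrt{2})\sqrt{\wt{d}}$---is not merely hard to establish; both intermediate claims are false on the ball of radius $1/\sqrt{2}$. Take $d$ odd and $\wt{\Phi}=\wt{\Phi}^0+\epsilon\, \ve_j$ with $\epsilon=1/\sqrt{2}$ and $1\le j\le\wt{d}-1$. Expanding the two $e^{\I\epsilon Z}$ factors and using $ZW(x)Z=W(x)^{-1}$ gives $g_i(x,\wt{\Phi})=-2\cos(2\epsilon)\,T_{d-2i}(x)$ for every $1\le i\le\wt{d}-1$ (verify at $d=3$: $g(x,\wt{\Phi})=-\sin(2\epsilon)\,x$, hence $g_1=-2\cos(2\epsilon)\,x$). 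Consequently the corresponding diagonal entries of $M(\wt{\Phi})$ equal $2\cos^2(2\epsilon)\approx 0.05$, so $\norm{M(\wt{\Phi})-M(\wt{\Phi}^0)}_2\ge 2\sin^2(\sqrt{2})\approx 1.95>1$; correspondingly the columns of $A-A^0$ with $i\ge 1$ are $-2\sin^2(\epsilon)$ times those of $A^0$, giving $\norm{A-A^0}_2\ge 2\sqrt{2}\sin^2(\epsilon)\sqrt{\wt{d}}\approx 1.19\sqrt{\wt{d}}$, roughly four times what your splitting can tolerate. Note that in this example $\lambda_{\max}(M(\wt{\Phi}))$ itself remains small, but only because the perturbation pushes the spectrum \emph{down}; that signed, structural information is invisible to any triangle-inequality/Weyl argument centered at $\wt{\Phi}^0$, so the framing itself is a dead end, not just the constants.

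The proposed repair via third derivatives does not close this, for two reasons: the Taylor remainder of $A-A^0$ involves $g_{ijl}$ evaluated along the segment, where the clean ``integer combination of Chebyshev polynomials'' structure holds only at $\wt{\Phi}^0$; and even at the base point the degrees produced by the monomials $M^{d_1,d_2,d_3}$ collide for order $\wt{d}$ many index pairs, so Parseval still leaves dimension-dependent factors. This is precisely why the paper abandons perturbation theory for this theorem: its proof is exact, using the symmetric QSP reduction of \cref{thm:cheby_version} to get $\abs{\sin(2\phi_\ell)}^2\ge 4q^{(0)}_{d-1}\bigl(q^{(\ell)}_{d-1-2\ell}-q^{(\ell-1)}_{d+1-2\ell}\bigr)$, the lower bound $q^{(0)}_{d-1}\ge\prod_\ell\cos^2(\phi_\ell)\ge 1-\norm{\wt{\Phi}-\wt{\Phi}^0}_2^2\ge\frac{1}{2}$ from \cref{eq:chebycoefPQ}, a telescoping sum, and \cref{lma:coef_q_bound}. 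That algebraic structure---not continuity near $\wt{\Phi}^0$---is what makes the admissible radius $1/\sqrt{2}$ independent of $d$; in this paper, perturbative arguments are confined to the radius-$\frac{1}{20\wt{d}}$ ball of \cref{cor:phi_around_initial}, which is exactly the scale at which they can work.
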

\begin{proof}
We denote the expansion of the polynomial $g(x, \wt{\Phi})$ in the Chebyshev basis as
\begin{equation*}
    g(x, \wt{\Phi}) = \sum_{k=0}^d g_k(\wt{\Phi}) T_k(x).
\end{equation*}
For simplicity, we drop the $\wt{\Phi}$-dependence in the Chebyshev coefficient $g_k$  (in particular, $g_k$ is not the derivative of $g(x,\wt{\Phi})$ with respect to $\phi_k$ as in the previous section). 

Then we apply \cref{eq:chebycoefPQ} to estimate $q_{d-1}$, the coefficient of $Q(x)$. Notice that $\cos(\phi_{\ell})$ is positive for any $1 \le \ell \le \wt{d}-1$, which follows $\abs{\phi_{\ell}}\leq \norm{\wt{\Phi}-\wt{\Phi}^0}_2\leq \frac{1}{\sqrt{2}}<\frac{\pi}{4}$. We use \cref{eq:chebycoefPQ} and the inequality $\cos(\phi_{\wt{d}-1}) \ge \cos(\phi_{\wt{d}-1})^2$ when $d$ is even, and obtain
\begin{equation}\label{eqn:q0-lower-bound}
    q_{d-1}^{(0)} \ge \prod_{\ell=1}^{\wt{d}-1} \cos^2(\phi_{\ell}) = \prod_{\ell=1}^{\wt{d}-1} \left(1-\sin^2(\phi_{\ell})\right) \geq 1-\sum_{\ell=1}^{\wt{d}-1} \phi_{\ell}^2\geq 1-\norm{\wt{\Phi}-\wt{\Phi}^0}_2^2\geq \frac{1}{2}.
\end{equation}
Hence $\deg(Q)=d-1$ and the leading coefficient of $Q$ is positive. Now we are able to apply \cref{thm:cheby_version,lma:coef_q_bound} and have
\begin{equation*}
    \sqrt{1-\norm{g(x,\wt{\Phi})}_{\infty}^2}\leq q^{(0)}_{d-1}\leq q^{(1)}_{d-3}\leq \cdots\leq q^{(\ell)}_{d-1-2\ell}\leq \cdots\leq q^{(\hat{d})}_{d-1-2\hat{d}}\leq 1.
\end{equation*}
According to the reduction procedure, the leading Chebyshev coefficients and the phase factor are connected as follows
\begin{equation}
    \abs{\sin(2\phi_0-\frac{\pi}{2})}^2 =\abs{\frac{\Re[p_d]}{q_{d-1}}}^2=\frac{g_d^2}{q_{d-1}^2}\geq g_d^2,
\end{equation}
where the last inequality follows $\abs{q_{d-1}} \le 1$.

Following \cref{eq:value_phi}, for any $ 1\leq \ell\leq\hat{d}$, we have 
\begin{equation*}
\begin{split}
    \abs{ \sin(2 \phi_{\ell})}^2&=\abs{\frac{\Im[p^{(\ell-1)}_{d-2\ell} e^{-2\I \phi_{\ell-1}}]}{q_{d-1-2\ell}^{(\ell)}}}^2 \geq \abs{\Im[p^{(\ell-1)}_{d-2\ell} e^{-2\I \phi_{\ell-1}}]}^2\\
    &= 4q^{(\ell-1)}_{d+1-2\ell}\left(q^{(\ell)}_{d-1-2\ell}-q^{(\ell-1)}_{d+1-2\ell}\right)\geq 4q_{d-1}^{(0)}\left(q^{(\ell)}_{d-1-2\ell}-q^{(\ell-1)}_{d+1-2\ell}\right),\\
\end{split}
\end{equation*}
where the last inequality follows the monotonicity in \cref{thm:cheby_version}.

Now we are going to show 
\begin{equation}
    \norm{\wt{\Phi}-\wt{\Phi}^0}_2^2\geq \frac{1}{4} g_d^2+q_{d-1}^{(0)}\left(1 -{q_{d-1}^{(0)}}\right),
\end{equation}
where $g_d$ denotes the leading Chebyshev coefficient of $g(x,\wt{\Phi})$.
 
When $d$ is odd, we have 
\begin{equation*}
    \sum_{\ell=1}^{\hat{d}}\abs{ \sin(2\phi_{\ell})}^2 \geq 4 q_{d-1}^{(0)}\sum_{\ell=1}^{\hat{d}} \left(q^{(\ell)}_{d-1-2\ell}-q^{(\ell-1)}_{d+1-2\ell}\right) = 4q_{d-1}^{(0)} \left(q_0^{(\hat{d})} - q_{d-1}^{(0)}\right) = 4 q_{d-1}^{(0)}\left(1 -{q_{d-1}^{(0)}}\right).
\end{equation*}
Here, we have used the fact that $q_0^{(\hat{d})}=1$ obtained from the proof of \cref{lma:phi_expression}. 
Hence 
\begin{equation*}
    \begin{split}
        \norm{\wt{\Phi}-\wt{\Phi}^0}_2^2&=\left(\phi_0-\frac{\pi}{4}\right)^2+\sum_{\ell=1}^{\hat{d}} \phi_{\ell}^2\\
        &\geq \frac{1}{4} \abs{\sin(2\phi_0-\frac{\pi}{2})}^2+ \frac{1}{4}\sum_{\ell=1}^{\hat{d}}\abs{ \sin(2\phi_{\ell})}^2\\
        &\geq \frac{1}{4} g_d^2+q_{d-1}^{(0)}\left(1 -{q_{d-1}^{(0)}}\right).
    \end{split}
\end{equation*}

When $d$ is even, the dangling unpaired phase factor $\phi_{\wt{d}-1}$ has to be analyzed individually
\begin{equation*}
    \abs{\sin\left(\phi_{\wt{d}-1}\right)}^2= 1- \left({q^{(\wt{d}-2)}_{1}}\right)^2.
\end{equation*}
Furthermore, we have
\begin{equation*}
    \sum_{\ell=1}^{\hat{d}}\abs{ \sin(2\phi_{\ell})}^2 \geq 4 q_{d-1}^{(0)} \sum_{\ell=1}^{\hat{d}} \left(q^{(\ell)}_{d-1-2\ell}-q^{(\ell-1)}_{d+1-2\ell}\right) =4 q_{d-1}^{(0)} \left(q_1^{(\wt{d}-2)}-q_{d-1}^{(0)}\right).
\end{equation*}
Unlike the odd case, the coefficient $q_1^{(\wt{d}-2)}$ is not necessary equal to $1$ when $d$ is even. Then
\begin{equation*}
    \begin{split}
        \norm{\wt{\Phi}-\wt{\Phi}^0}_2^2&=\left(\phi_0-\frac{\pi}{4}\right)^2+\sum_{\ell=1}^{\hat{d}} \phi_{\ell}^2+\phi_{\wt{d}-1}^2\\
        &\geq \frac{1}{4} \abs{\sin(2\phi_0-\frac{\pi}{2})}^2+ \frac{1}{4}\sum_{\ell=1}^{\hat{d}}\abs{ \sin(2\phi_{\ell})}^2+ \abs{\sin\left(\phi_{\wt{d}-1}\right)}^2\\
        &\geq \frac{1}{4}g_d^2 + q_{d-1}^{(0)} \left(q_1^{(\wt{d}-2)}-q_{d-1}^{(0)}\right) +1- \left(q^{(\wt{d}-2)}_{1}\right)^2\\
        &= \frac{1}{4} g_d^2+q_{d-1}^{(0)}\left(1 -{q_{d-1}^{(0)}}\right)+\left(1 -{q_{d-1}^{(0)}}\right)\left(1 -{q_{1}^{(\wt{d}-2)}}\right)+q_{1}^{(\wt{d}-2)}\left(1 -{q_{1}^{(\wt{d}-2)}}\right)\\
        &\geq \frac{1}{4} g_d^2+q_{d-1}^{(0)}\left(1 -{q_{d-1}^{(0)}}\right).
    \end{split}
\end{equation*}
According to \cref{eqn:q0-lower-bound,lma:coef_q_bound}, we have $\frac{1}{2}\leq q_{d-1}^{(0)}\leq \sqrt{1-\norm{g(x, \wt{\Phi})}_T^2+\frac{1}{2}g_d^2}$. It implies that 
\begin{equation*}
\begin{split}
    \norm{\wt{\Phi}-\wt{\Phi}^0}_2^2&\geq \frac{1}{4} g_d^2+\sqrt{1-\norm{g(x,\wt{\Phi})}_T^2+\frac{1}{2}g_d^2}-1+\norm{g(x,\wt{\Phi})}_T^2-\frac{1}{2}g_d^2\\
    &\geq \norm{g(x,\wt{\Phi})}_T^2-\frac{1}{4}g_d^2-\frac{\norm{g(x,\wt{\Phi})}_T^2-\frac{1}{2}g_d^2}{\sqrt{1-\norm{g(x,\wt{\Phi})}_T^2+\frac{1}{2}g_d^2}+1}\\
    &\geq \norm{g(x,\wt{\Phi})}_T^2-\frac{1}{4}g_d^2-\frac{2\norm{g(x,\wt{\Phi})}_T^2-g_d^2}{3}\\
    &\geq \frac{1}{3}\norm{g(x,\wt{\Phi})}_T^2.
\end{split}
\end{equation*}
\end{proof}

\subsection{Local strong convexity of the optimization problem}
Now we are going to present the proof of \cref{thm:Hess_PD}, where we use \cref{cor:phi_around_initial} and the following lemma regarding the perturbative analysis to the eigenvalues of a matrix.

\begin{lemma}[{\cite[Theorem 8.1.4]{GolubVanLoan96}} \textbf{Wielandt--Hoffman}]\label{lma:perturb_eigen}
If A, A+E $ \in \RR^{n\times n}$ are symmetric matrices, then 
\begin{equation*}
    \sum_{k\in[n]}\left(\lambda_k (A+E)-\lambda_k(A)\right)^2\leq \norm{E}_F^2.
\end{equation*}
\end{lemma}
Although Weyl's inequality may give a sharper estimate about the perturbation of eigenvalues, we adopt \cref{lma:perturb_eigen} in our analysis. That is because the former exploits the operator norm of the perturbation of matrix, which is hard to estimate in our problem.

\begin{proof}[Proof of \cref{thm:Hess_PD}]
Recall that 
\begin{equation}
    F(\wt{\Phi})=\frac{2}{\wt{d}}\left(A^\top(\wt{\Phi}) A(\wt{\Phi})\right)\REV{+}R(\wt{\Phi}),
\end{equation}
where $R_{ij}:=\frac{2}{\wt{d}}\sum_{k=1}^{\wt{d}} \left(g(x_k,\wt{\Phi})-f(x_k)\right)g_{ij}(x_k,\wt{\Phi})$. The residual term is estimated as follows
\begin{equation}
    \begin{split}
        \norm{R}_F^2 &=\sum_{i=0}^{\wt{d}-1}\sum_{j=0}^{\wt{d}-1}\left(\frac{2}{\wt{d}}\sum_{k=1}^{\wt{d}} \left(g(x_k,\wt{\Phi})-f(x_k)\right)g_{ij}(x_k,\wt{\Phi})\right)^2\\
        &\leq \frac{4}{\wt{d}^2} \sum_{i=0}^{\wt{d}-1}\sum_{j=0}^{\wt{d}-1} \left(\sum_{k=1}^{\wt{d}} \left(g(x_k,\wt{\Phi})-f(x_k)\right)^2\right)\left(\sum_{k=1}^{\wt{d}} g_{ij}(x_k,\wt{\Phi})^2\right)\\
        &\leq 8\left(\sum_{k=1}^{\wt{d}} g(x_k,\wt{\Phi})^2+\sum_{k=1}^{\wt{d}}f(x_k)^2\right)\max_{i,j}\left(\sum_{k=1}^{\wt{d}} g_{ij}(x_k,\wt{\Phi})^2\right)\\
        &= 8 \wt{d}^2\left(\norm{g(x,\wt{\Phi})}_T^2+\norm{f(x)}_T^2\right) \max_{i,j} \norm{g_{ij}(x,\wt{\Phi})}_T^2.
    \end{split}
\end{equation}
Here, the last line uses the discrete orthogonality of Chebyshev polynomial.

Then, we have the following inequality
\begin{equation}
    \norm{R}_F^2 \le 128\wt{d}^2 \left(3\norm{\wt{\Phi}-\wt{\Phi}^0}_2^2+\norm{f}_{\infty}^2\right)\leq \frac{24}{25}\left(1+\frac{1}{\pi^2}\right),
\end{equation}
where the first inequality uses \cref{thm:estimate_Cheby_norm}, the fact $\norm{f}_T \le \norm{f}_\infty$, and its consequence $\norm{g_{ij}(x,\wt{\Phi})}_T \le  \norm{g_{ij}(x,\wt{\Phi})}_\infty \le 4$ derived in the proof of \cref{lma:perturb-Jacobian}.

Then, applying \cref{lma:perturb_eigen}, one has
\begin{equation}
    \begin{split}
        &\lambda_{\min} (\mathrm{Hess}(\wt{\Phi}))\geq \lambda_{\min}(\frac{2}{\wt{d}} A(\wt{\Phi})^\top A(\wt{\Phi}))-\norm{R}_F\geq\frac{32}{25}- \sqrt{\frac{24}{25}\left(1+\frac{1}{\pi^2}\right)}\geq \frac{1}{4},\\
        &\lambda_{\max} (\mathrm{Hess}(\wt{\Phi}))\leq \lambda_{\max}(\frac{2}{\wt{d}} A(\wt{\Phi})^\top A(\wt{\Phi}))+\norm{R}_F\leq 2\left(\frac{1}{5}+\sqrt{2}\right)^2+\sqrt{\frac{24}{25}\left(1+\frac{1}{\pi^2}\right)}\leq \frac{25}{4}.
    \end{split}
\end{equation}
\end{proof}

\begin{remark}
According to \cref{thm:Hess_PD}, we know that if $\norm{f}_\infty \leq \frac{\sqrt{3}}{20\pi\wt{d}}$, the optimization problem \cref{eqn:optprob-intro} is strongly convex in the following region 
\begin{equation}
    \mathcal{S}:=\left\{\wt{\Phi}\in \RR^{\wt{d}}: \norm{\wt{\Phi}-\wt{\Phi}^0}_2\leq\frac{1}{20\wt{d}}\right\}.
\end{equation}
Furthermore, the maximal solution is included in that region $\wt{\Phi}^*\in \mathcal{S}$. For convenience, the lower bound of the Hessian matrix is referred to as $\sigma= \frac{1}{4}$, while the upper bound is referred to as $L = \frac{25}{4} $.
\end{remark}

\subsection{Projected gradient descend algorithm for symmetric phase-factor evaluation}

\begin{algorithm}[htbp]
\caption{Projected gradient descend algorithm for optimizing symmetric phase factors  (with performance guarantee)}
\label{alg:proj-gradient-descent}
\begin{algorithmic}
\STATE{\textbf{Input:} A target real polynomial $f$ of degree $d$ so that it has definite parity and $\norm{f}_{\infty}\leq \frac{\sqrt{3}}{20\pi\wt{d}}$, an initial guess $\wt{\Phi}^0=\left(\frac{\pi}{4},0,\cdots,0\right)\in \mathcal{S}$ and the error tolerance $\epsilon$. }
\STATE{\textbf{Output:} A set of reduced phase factors $\wt{\Phi}$ so that $F(\wt{\Phi}) \le \epsilon$ implying $\norm{g(x, \wt{\Phi}) - f(x)}_\infty \le d \sqrt{\epsilon}$ \cite[Theorem 4]{DongMengWhaleyEtAl2021}.}
\STATE{}
\STATE{Evaluate the objective function $F(\wt{\Phi})$ at $\wt{\Phi}^0$ according to \cref{eqn:optprob-intro}.}
\WHILE{$F(\wt{\Phi})> \epsilon$}
\STATE{Evaluate the gradient $\nabla F(\wt{\Phi})$ of the objective.}
\STATE{Update $\wt{\Phi}_i$ by gradient descent $\wt{\Phi} \leftarrow \wt{\Phi} -\frac{1}{L} \nabla F(\wt{\Phi})$.}
 \IF{$\norm{\wt{\Phi}-\wt{\Phi}^0}_2> \frac{1}{20\wt{d}}$}
  \STATE{Project the point onto the ball $\wt{\Phi} \leftarrow \wt{\Phi}^0-\frac{\wt{\Phi}-\wt{\Phi}^0}{20\wt{d}\norm{\wt{\Phi}-\wt{\Phi}^0}_2}$.}
  \ENDIF
\ENDWHILE
\RETURN $\wt{\Phi}$
\end{algorithmic}
\end{algorithm}

We list a practical algorithm for symmetric phase-factor evaluation in \cref{alg:proj-gradient-descent} which is based on the projected gradient descent method \cite{Bubeck2015}. In \cref{cor:conv_pg}, we give the convergence analysis of this algorithm \REV{and here is the proof.}
\REV{\begin{proof}[Proof of \cref{cor:conv_pg}]
According to \cref{thm:Hess_PD}, the cost function $F(\wt{\Phi})$ is $\sigma$-strongly convex and $L$-smooth on ball $\mathcal{S}:=\{\wt{\Phi}:\norm{\wt{\Phi}-\wt{\Phi}^0}\leq \frac{1}{20\wt{d}}\}$, where $\sigma=\frac{25}{4}$ and $L=\frac{1}{4}$. Using \cref{thm:estimate_maximal}, we know that $\wt{\Phi}^*\in \mathcal{S}$, since $\norm{\wt{\Phi}^*-\wt{\Phi}^0}\leq \frac{\pi}{\sqrt{3}}\norm{f}_{\infty}\leq \frac{1}{20\wt{d}}$. Applying \cite[Theorem 3.10]{Bubeck2015}, the projected gradient method with step size $\frac{1}{L}$ converges exponentially, that is,
\begin{equation*}
    \norm{\wt{\Phi}^\ell-\wt{\Phi}^*}_2\leq e^{-\frac{\sigma}{L}\ell}\norm{\wt{\Phi}^0-\wt{\Phi}^*}_2,
\end{equation*}
where $\wt{\Phi}^\ell$ is the set of reduced phase factors in the $\ell$-th iteration step. In particular, $\wt{\Phi}^\ell \in \mc{S}$ for all $\ell$ follows the construction of the algorithm. 
\end{proof}}

\REV{Let $\wt{\Phi}^\ell$ be the set of reduced phase factors in the $\ell$-th iteration step.} Note that $F(\wt{\Phi}^*) = 0,\ \nabla F(\wt{\Phi}^*) = 0$ and $\wt{\Phi}^* \in \mc{S}$. Following  Taylor's theorem, there exists $s \in (0, 1)$ so that
\begin{equation*}
\begin{split}
    F(\wt{\Phi}^\ell) &= F(\wt{\Phi}^*) + \nabla F(\wt{\Phi}^*)^\top \left(\wt{\Phi}^\ell - \wt{\Phi}^*\right) + \frac{1}{2} \left(\wt{\Phi}^\ell-\wt{\Phi}^*\right)^\top \text{Hess}\left(s \wt{\Phi}^\ell + (1-s) \wt{\Phi}^*\right) \left(\wt{\Phi}^\ell-\wt{\Phi}^*\right)\\
    &= \frac{1}{2} \left(\wt{\Phi}^\ell-\wt{\Phi}^*\right)^\top \text{Hess}\left(s \wt{\Phi}^\ell + (1-s) \wt{\Phi}^*\right) \left(\wt{\Phi}^\ell-\wt{\Phi}^*\right).
\end{split}
\end{equation*}
Because $\wt{\Phi}^\ell, \wt{\Phi}^* \in \mc{S}$ and $\mc{S}$ is convex, the intermediate point also lies in the convex set $s \wt{\Phi}^\ell + (1-s) \wt{\Phi}^* \in \mc{S}$. Then, the Hessian is upper bounded at the intermediate point by \cref{thm:Hess_PD}. Thus
\begin{equation*}
    F(\wt{\Phi}^\ell) \le \frac{1}{2} \lambda_\text{max}\left(\text{Hess}\left(s \wt{\Phi}^\ell + (1-s) \wt{\Phi}^*\right)\right) \norm{\wt{\Phi}^\ell - \wt{\Phi}^*}_2^2 \le \frac{L}{2} \norm{\wt{\Phi}^\ell - \wt{\Phi}^*}_2^2.
\end{equation*}
Furthermore, using \cref{cor:conv_pg}, we have
\begin{equation*}
    F(\wt{\Phi}^\ell) \le \frac{L}{2} \norm{\wt{\Phi}^\ell - \wt{\Phi}^*}_2^2 \le \frac{L}{2} e^{-\frac{\sigma}{L}\ell} \norm{\wt{\Phi}^0 - \wt{\Phi}^*}_2^2.
\end{equation*}
To obtain $F(\wt{\Phi}^\ell) \le \epsilon$, it suffices to take $\ell = \Or\left(\log\left(\frac{1}{\epsilon}\right)\right)$ iterations. 

Remarkably, \cref{alg:proj-gradient-descent} does not require a large number of bits. To see the claim, we consider the model of finite precision arithmetic (see standard axioms in \cite{Higham}). The parameter $u$ is referred to the unit roundoff. In each iteration step, the rounding error when computing the function $F$ and its gradient is $\Or(d^2u)$. Furthermore, the projection is needed to be executed when the proposed new $\Phi$ lies outside $\mathcal{S}$. Under this circumstance, the possible error amplification by taking square roots and division does not contribute significantly and the total rounding error accumulated in each iteration step remains $\mathcal{O}(d^2 u)$. To conclude, the numerical error of the algorithm is upper bounded by invoking triangle inequality. Then, there exists some constant $C$ so that
\begin{equation*}
    C d^2 u \log\frac{1}{\epsilon} \le \epsilon \ \Rightarrow \ \frac{1}{u} := 2^{n_\text{bits}} \ge \frac{1}{C} \frac{d^2 \log(1/\epsilon)}{\epsilon} \ \Rightarrow \ n_\text{bits} = \Or\left(\log\left(\frac{d \log(1/\epsilon)}{\epsilon}\right)\right).
\end{equation*}
Therefore, the number of bits required by \cref{alg:proj-gradient-descent} is $\mathcal{O}\left(\log\left(\frac{d\log(1/\epsilon)}{\epsilon}\right)\right)$.
Therefore the algorithm is numerically stable, and it is sufficient to use the standard double precision arithmetic operations.

\section{Additional numerical results}\label{sec:numer}

In this section, we present some additional numerical results demonstrating some of the features of the energy landscape of the optimization problem for symmetric phase factors in \cref{eqn:optprob-intro}.

\subsection{Hessian matrix without symmetry constraint is singular at global optima}
 When the symmetry constraint is not imposed, the Hessian matrix can be derived from \cref{eq:hess-proof}. Note that the Jacobian matrix $A \in \RR^{\wt{d}\times d}$ does not have full column rank because $d > \wt{d}$. Then, following that $\ker(A^\top A) = \ker(A)$, we have $\text{rank}(A^\top A) = \text{rank}(A) \le \wt{d} < d$. Note that at the global optima, the Hessian matrix is precisely $\text{Hess}(\Phi^*) = (A^*)^\top A^*$ due to the vanishing residual term $R = 0$. Here, $A^*$ refers to the Jacobian matrix at the optima. Thus, the Hessian matrix at optima is singular. Because the residual term $R$ is small near the optima by continuity, the Hessian matrix is almost singular in a neighborhood of the optima. This behavior is demonstrated by the numerical result displayed in \cref{fig:hess-singular-near-optmum} (a). 
On the other hand, when the symmetry constraint is imposed, \cref{fig:hess-singular-near-optmum} (b) shows that the Hessian matrix is positive definite in a neighborhood of the maximal solution and it agrees our theoretical results about the local strong convexity of the Hessian matrix. 

\begin{figure}[htbp]
    \centering
    \includegraphics[width=\textwidth]{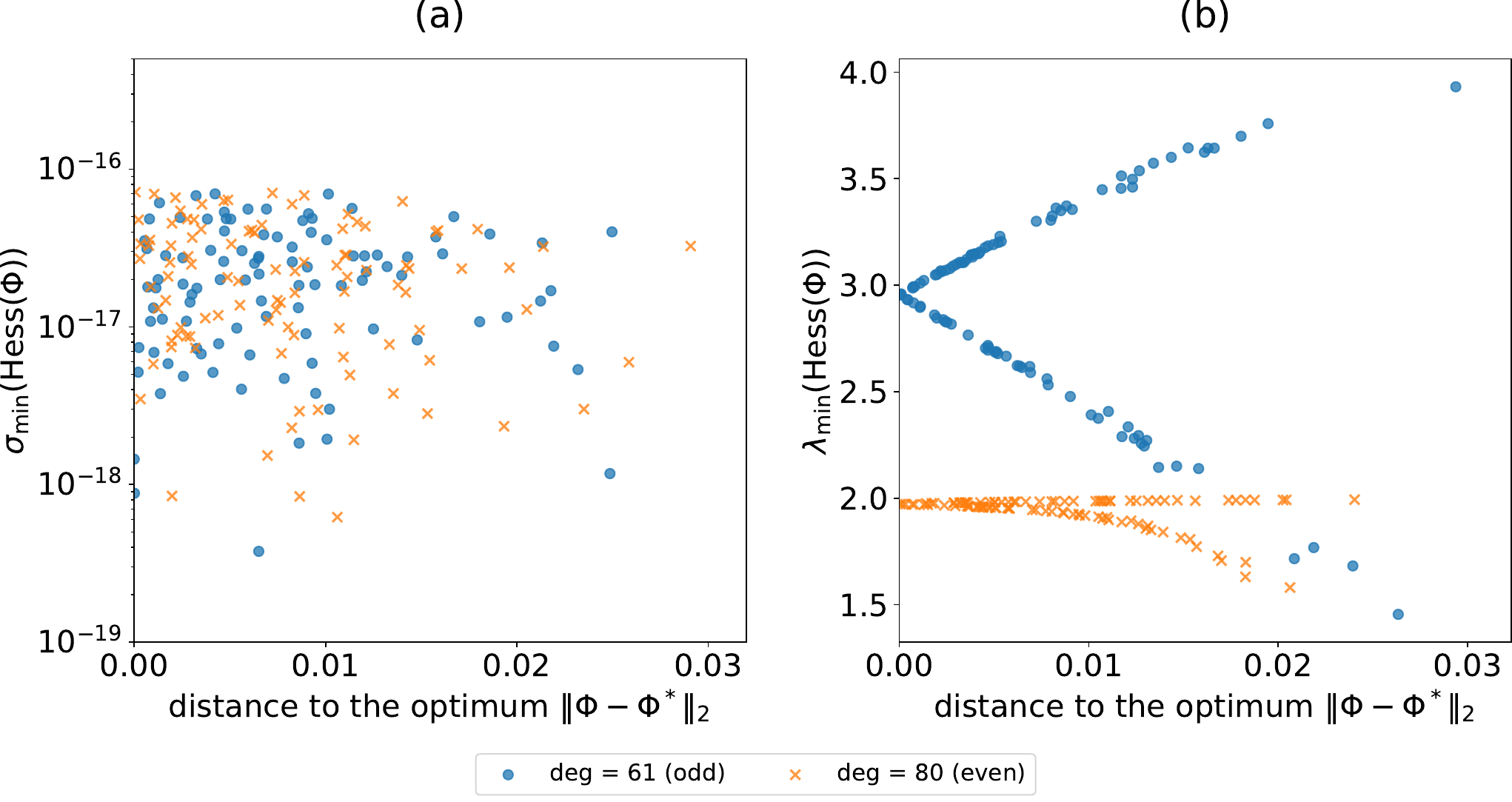}
    \caption{The smallest singular value ($\sigma_\text{min}$) or eigenvalue ($\lambda_\text{min}$) of the Hessian matrix evaluated at 100 randomly sampled $\Phi$ near the optimizer $\Phi^*$ of a given polynomial. Blue dots correspond to an odd polynomial of degree $61$ and orange crosses correspond to an even polynomial of degree $80$. (a) The symmetry constraint is not imposed. The plotted data concentrates around the machine precision $\sim 10^{-17}$, which implies a large basin near the optimizer on which the Hessian matrix is singular if the symmetry constraint is not imposed. (b) The symmetry constraint is imposed. The numerical results show that the Hessian matrix is well conditioned near the optimizer, and it is positive definite which agrees our result of local strong convexity.}
    \label{fig:hess-singular-near-optmum}
\end{figure}

\subsection{Existence of local minima}
To visualize the full energy landscape of the optimization problem, we first consider the simplest scenario so that only two phase factors suffice to solve the optimization problem exactly.
We choose two target polynomials $f(x) = x^2-\frac{1}{2}$ and $f(x) = \frac{1}{\sqrt{3}}x^3 - \frac{2}{\sqrt{3}}x$. 
The energy landscape on the irreducible domain is displayed in \cref{fig:landscape_global}. The global minima derived from the proposed method are annotated in the domain. In this special case, there are \REVN{no local minima}. 
However, this is a rare exception rather than the norm. Existence of local minima can be observed by slightly increasing the degree of the target polynomial $f(x) =0.2103 T_4(x)+0.1231T_2(x)+0.1666$. The local minimizer is numerically searched by randomly initiating the stochastic gradient descent algorithm to solve \cref{eqn:optprob-intro}. The value of the objective function at $\Phi^\text{loc}$ is $0.0218$. The Hessian matrix has eigenvalues $0.1359, 4.5815, 7.7510$, which indicates that $\Phi^\text{loc}$ is indeed a local minimum. To visualize the landscape, we plot a two-dimensional section of the optimization landscape spanned by the affine plane spanned by the eigenvectors of the least two eigenvalues of the Hessian matrix (\cref{fig:landscape_global}).

\REVN{Next we present an example to demonstrate the complexity of the full landscape of the optimization problem. We choose $f(x)= \frac{1}{\alpha}\left(T_4(x)+2T_2(x)+T_0(x)\right)$ as the target polynomial. The scale factor $\alpha=440$ is set to be large enough and satisfies the requirements of \cref{thm:Hess_PD}. We first found a global minimum around $\wt{\Phi}^0 = (\frac{\pi}{4}, 0,0)$, by taking $\wt{\Phi}^0$ as the initial guess for \cref{alg:proj-gradient-descent}.
We also numerically searched two local minimizers by randomly initiating the stochastic gradient descent algorithm to solve \cref{eqn:optprob-intro}.  We use $\wt{\Phi}^*$, $\wt{\Phi}^1$ and $\wt{\Phi}^2$ to denote the global minimum and two local minima respectively. Their values are as follows:
\begin{equation*}
    \begin{split}
        \wt{\Phi}^*&= (0.7843, -0.0023, -0.0023),\\
        \wt{\Phi}^1&= ( 1.2317, 1.5685, 2.2467),\\
        \wt{\Phi}^2&= ( 0.4370, 1.5731, 0.6991).
    \end{split}
\end{equation*}The values of the objective function at $\wt{\Phi}^1$ and $\wt{\Phi^2}$ are both around 2.5769e-06. The Hessian matrix at both points are positive semidefinite, which indicates that they are indeed local minima. To visualize the landscape, we plot a two-dimensional section of the optimization landscape spanned by the affine plane spanned by $\wt{\Phi}^1-\wt{\Phi}^*$ and $\wt{\Phi}^2-\wt{\Phi}^*$ (\cref{fig:localmin_multi}). It agrees with our conclusion that the energy landscape of the optimization problem remains complex and has many local minima as well as global minima. 
This confirms that even when $\norm{f}_{\infty}$ is sufficiently small, the global energy landscape is still very complex, and the local energy landscape described by \cref{thm:Hess_PD} is nontrivial.
}

\begin{figure}[htbp]
    \centering
    \includegraphics[width=\textwidth]{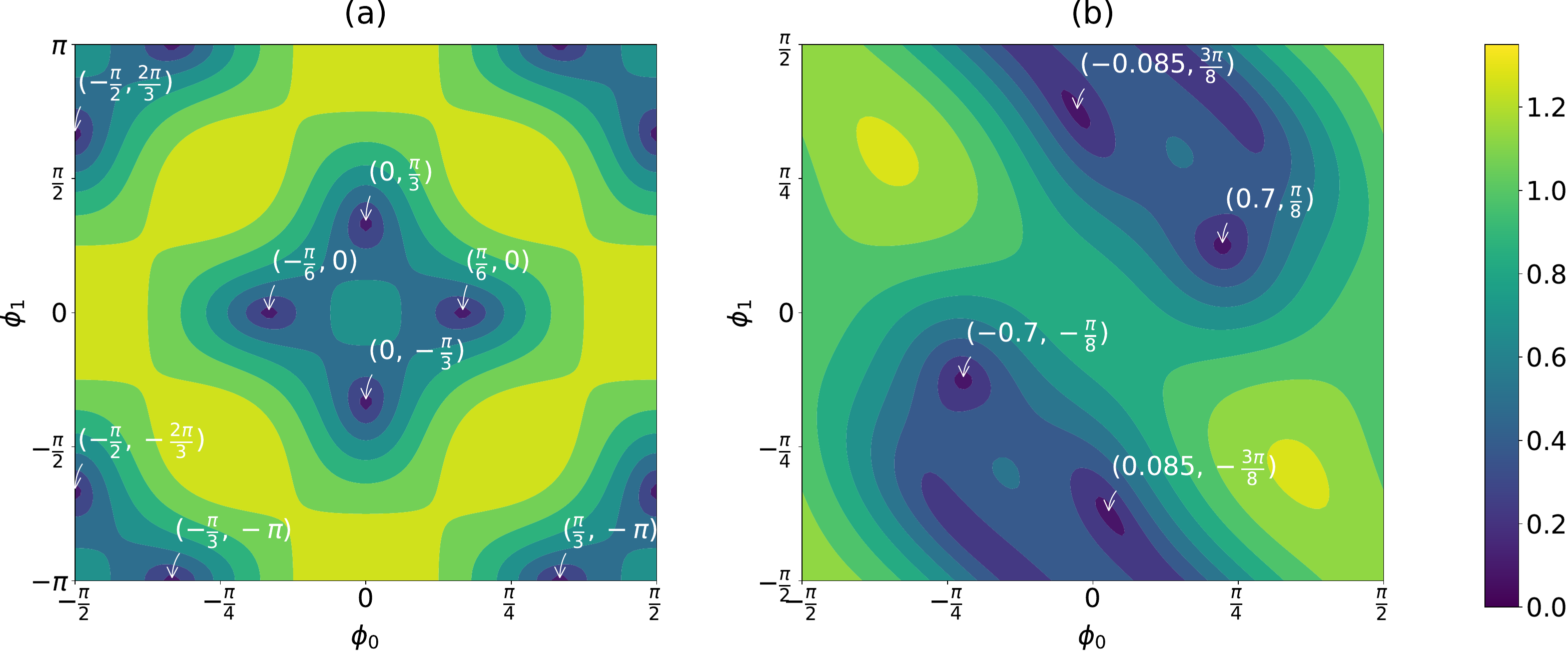}
    \caption{The optimization landscape of the modified cost function $F(\wt{\Phi})^{1/3}$ on the irreducible domain $D_d$. Here, the cube root is taken to signify the structure of the landscape near optima. The annotated values are all optima exactly computed from the proposed method. (a) The target function is set to $f(x)=x^2-\frac{1}{2}$ which yields 8 inequivalent optima. (b) The target function is set to $f(x)=\frac{1}{\sqrt{3}} x^3 - \frac{2}{\sqrt{3}} x$ which yields 4 inequivalent optima.}
    \label{fig:landscape_global}
\end{figure}

\begin{figure}[htbp]
    \centering
    \includegraphics[width=.5\textwidth]{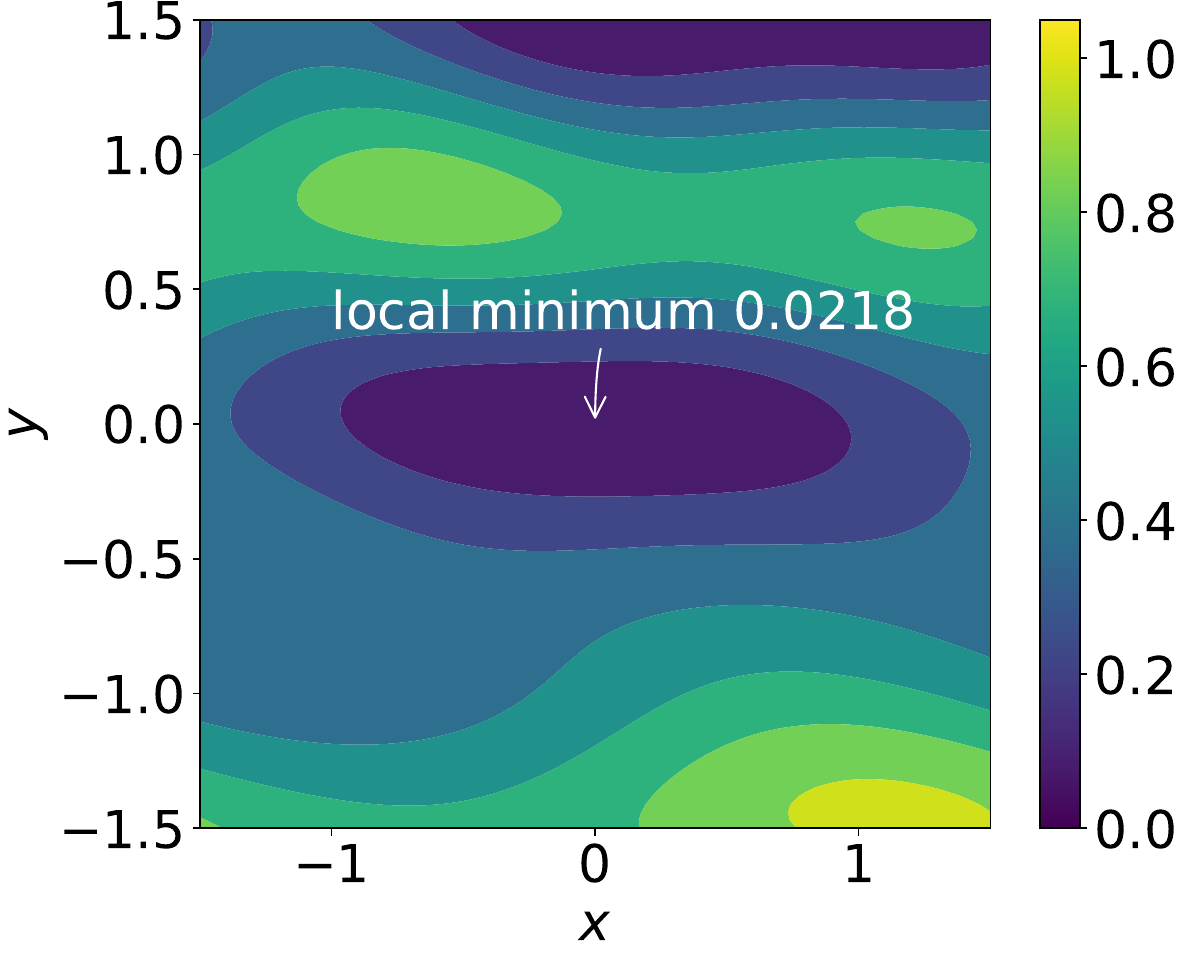}
    \caption{A two-dimensional section of the optimization landscape $F(\wt{\Phi}^{\mathrm{loc}}+xv_1+yv_2)$, where $\Phi^{\mathrm{loc}}$ is a local minimum of a given degree-$5$ target polynomial, and $v_1, v_2$ are unit eigenvectors corresponding to the least two eigenvalues of the Hessian matrix evaluated at $\Phi^{\mathrm{loc}}$. The objective value of the local minimum is annotated on the figure.}
    \label{fig:localmin}
\end{figure}

\begin{figure}[htbp]
    \centering
    \includegraphics[width=.5\textwidth]{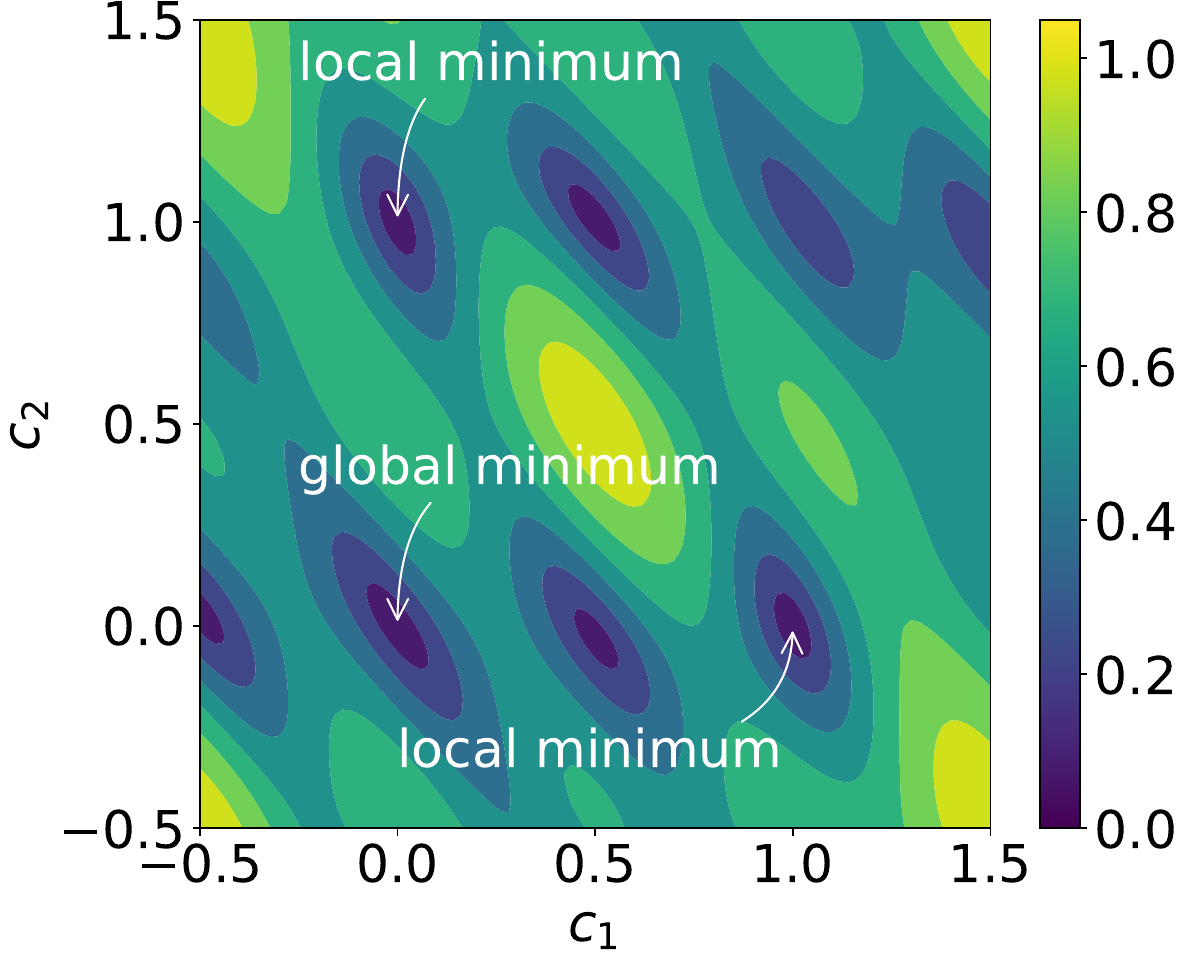}
    \caption{A two-dimensional section of the optimization landscape $F(\wt{\Phi}^*+c_1 v_1+c_2v_2)$. Here $\Phi^*$ is a global minimum of the target polynomial $f(x)=\frac{1}{\alpha}\left(T_4(x)+2T_2(x)+T_0(x)\right)$ with $\alpha = 440$, and $v_1 = \wt{\Phi}^1-\wt{\Phi}^*, v_2=\wt{\Phi}^2-\wt{\Phi}^*$, where $\wt{\Phi}^1$ and $\wt{\Phi}^2$ are two local minima found by numerical schemes. The objective values of the global minimum and local minima are annotated on the figure.}
    \label{fig:localmin_multi}
\end{figure}

\subsection{Numerical behavior near the maximal solution}
Here we demonstrate that among all global minima, the maximal solution is particularly desirable from the perspective of numerical optimization. 
Given an arbitrarily chosen polynomial $f(x)=\frac{1}{4}T_6(x)+\frac{5}{4}T_4(x)+\frac{1}{8}T_2(x)-T_0(x)$, we consider a sequence of scaled polynomials $f^{(k)}(x) := 10^{-k} f(x)$, so that $\lim_{k\to \infty}f^{(k)}(x)=0$.
All exact solutions to the optimization problem associated to $f^{(k)}(x)$ can be constructed explicitly via \cref{re:construct_method}. This procedure yields several sequences of phase factors, which converge to different limiting points as $k \to \infty$. For the specific degree-$6$ polynomial above, there are four distinct limits which is referred to as $\Phi_0^\text{class}$,
\begin{itemize}
    \item class 1: $\Phi^1_0=\left(\frac{\pi}{4},0,0,0,0,0,\frac{\pi}{4}\right),$
    \item class 2: $\Phi^2_0=\left(\frac{\pi}{4},0,\frac{\pi}{4},-\frac{\pi}{2},\frac{\pi}{4},0,\frac{\pi}{4}\right),$
    \item class 3: $\Phi^3_0=\left(\frac{\pi}{4},0,-\frac{\pi}{4},\frac{\pi}{2},-\frac{\pi}{4},0,\frac{\pi}{4}\right),$
    \item class 4: $\Phi^4_0=\left(\frac{\pi}{4},\frac{\pi}{4},0,-\frac{\pi}{2},0,\frac{\pi}{4},\frac{\pi}{4}\right).$
\end{itemize}
Specifically, the first class is associated with the maximal solution. The sequence of the set of phase factors corresponding to the scaled polynomial $f^{(k)}(x)$ and a given class of construction is referred to as $\Phi^\text{class}_*(k)$. \cref{fig:convergence_rate} shows that the class associated with the maximal solution distinguishes from other classes, in the sense that the convergence rate of $\Phi^\text{1}_*(k)$ towards $\Phi^\text{1}_0$ is much faster. 
This suggests that the convergence basin near $\Phi^\text{1}_0$ is much flatter, which justifies the choice of using $\Phi^\text{1}_0$ as the initial guess of the optimization.

To further test the performance of numerical optimization associated with different initial guesses, we use the limit point $\Phi_0^\text{class}$ as the initial guess and find the optimal phase factors $\Phi_*^\text{class}(k)$ generating the scaled polynomial \REVN{$\frac{1}{2}f(x)$} by running \textsf{QSPPACK} \cite{DongMengWhaleyEtAl2021}. The trajectory of the optimization process is displayed in \cref{fig:convergence_qsppack}. It shows that the optimization starting from $\Phi^\text{1}_0$ also has the fastest convergence rate.

\begin{figure}[htbp]
    \centering
    \includegraphics[width=.5\textwidth]{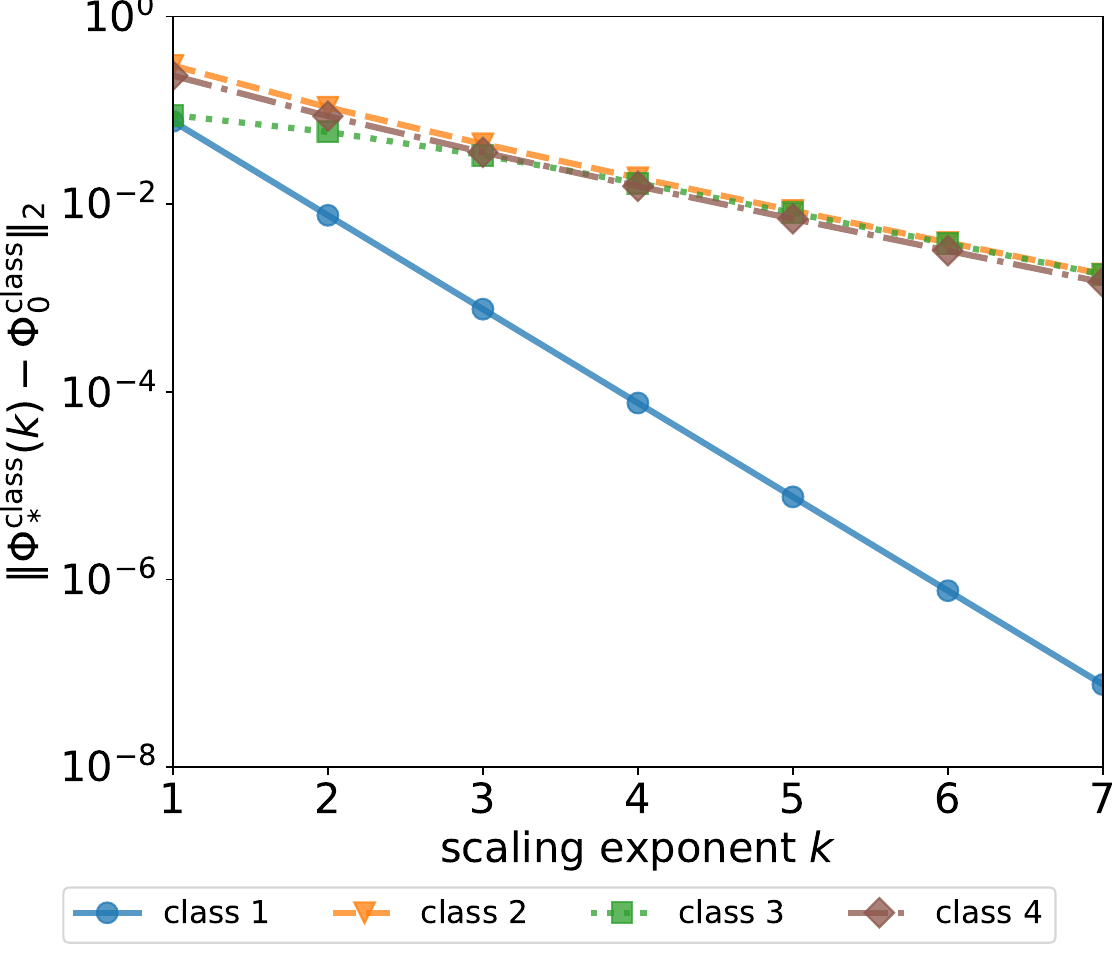}
    \caption{The convergence to the limit point of different class as the target polynomial being scaled down. Given a polynomial $f(x)$, the optimal set of phase factors $\Phi_*^\text{class}(k)$ parameterizes the scaled polynomial $f^{(k)}(x) := 10^{-k} f(x)$. Here, $\Phi^{\mathrm{class}}_0$ is the limit point of the corresponding class.}
    \label{fig:convergence_rate}
\end{figure}
\begin{figure}[htbp]
    \centering
    \includegraphics[width=.5\textwidth]{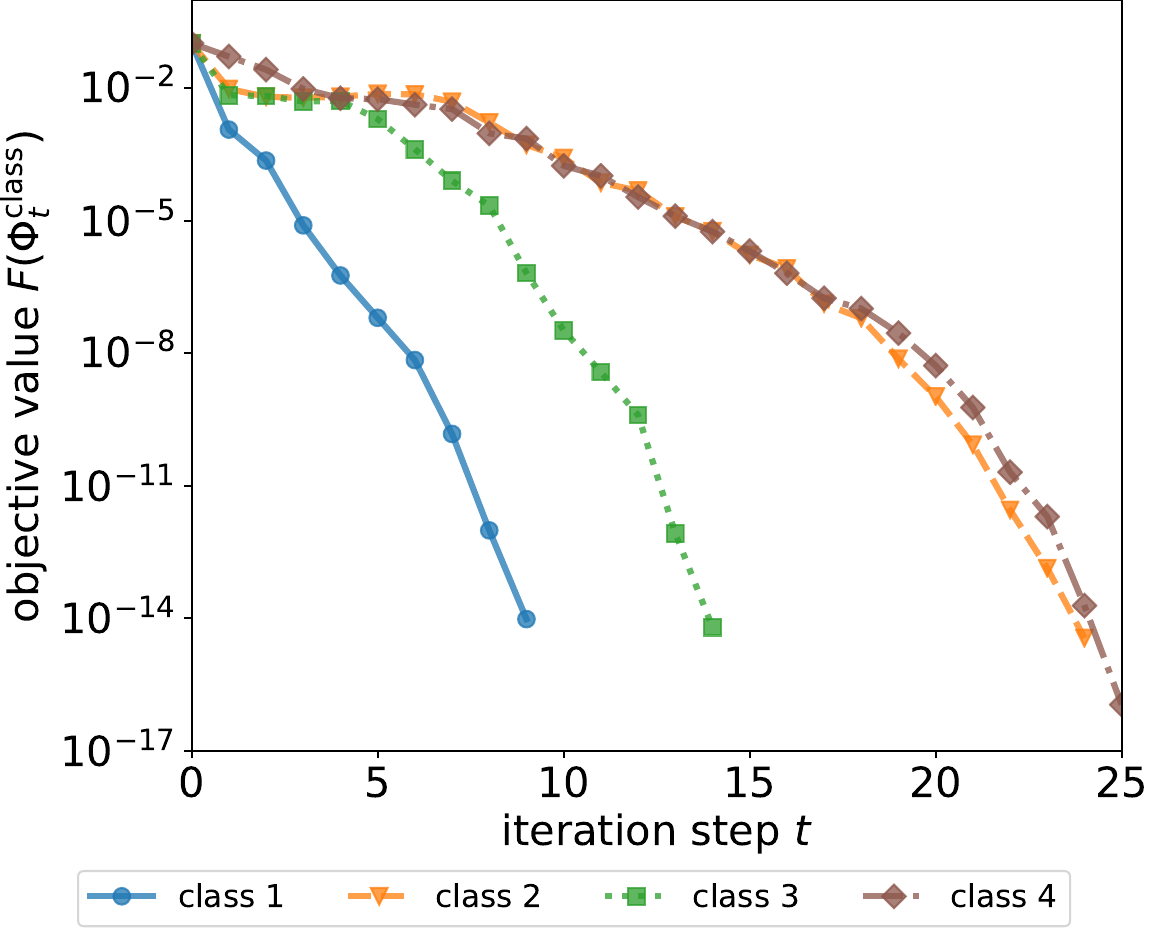}
    \caption{The objective value in each iteration step by running \textsf{QSPPACK} \REVN{ for target function $\frac{1}{2}f(x)$}. Initiating the optimization from the limit point of a class $\Phi_0^\text{class}$, $\Phi_{t}^\text{class}$ is the set of phase factors in the $t$-th optimization step. Different symbols correspond to the optimization trajectories starting from distinct classes of initial guess.}
    \label{fig:convergence_qsppack}
\end{figure}

\newpage
\appendix

\section{Proof of \cref{lma:leading_coef}}\label{sec:proof_leading_coef}
\begin{proof}
    One can check that
    \begin{equation}
    \begin{split}
        & P(x) = e^{\I\left(\phi_0+\phi_d\right)} \bra{0} W(x)\prod_{j=1}^{d-1} \left[e^{\I\phi_j Z} W(x)\right] \ket{0},\\
        & \I \sqrt{1-x^2} Q(x) = e^{\I\left(\phi_0-\phi_d\right) } \bra{0} W(x)\prod_{j=1}^{d-1} \left[e^{\I\phi_j Z} W(x)\right] \ket{1}.
    \end{split}
    \end{equation}
    Then, we expand the product of matrices from left to right by using $e^{\I \phi Z} = \cos(\phi) + \I \sin(\phi) Z$ and $W(x) Z = Z W(x)^{-1}$. We want to show that the term containing $\sin(\phi_j)$ does not contribute to $p_d$ and $q_{d-1}$. We use $l.o.$ to denote any polynomial or polynomial-valued matrix of degree $<d$. Here, we relax the definition of polynomial and classify the entry of the matrix of our interest as polynomial by considering $\sqrt{1-x^2}$ as another variable. Note that
    \begin{equation}
    \begin{split}
        &W(x)\prod_{j=1}^{d-1} \left[e^{\I\phi_j Z} W(x)\right] \\&= \cos\left(\phi_1\right) W(x)^2 \prod_{j=2}^{d-1} \left[e^{\I\phi_j Z} W(x)\right] + \I \sin\left(\phi_1\right) W(x) Z W(x) \prod_{j=2}^{d-1} \left[e^{\I\phi_j Z} W(x)\right]\\
        &= \cos\left(\phi_1\right) W(x)^2 \prod_{j=2}^{d-1} \left[e^{\I\phi_j Z} W(x)\right] + \I \sin\left(\phi_1\right) Z \prod_{j=2}^{d-1} \left[e^{\I\phi_j Z} W(x)\right]\\
        &= \cos\left(\phi_1\right) W(x)^2 \prod_{j=2}^{d-1} \left[e^{\I\phi_j Z} W(x)\right] + l.o.,
    \end{split}
    \end{equation}
    where the product of the second term is indeed a matrix whose elements are polynomials of degree $\leq d-2$.
    
    Using the fact that 
    \begin{equation}
        [W(x)]^d =\begin{pmatrix}
        T_d(x)  & \I \sqrt{1-x^2} U_{d-1}(x)\\
        \I \sqrt{1-x^2}U_{d-1}(x) & T_d(x)
        \end{pmatrix},
    \end{equation}
    we may prove inductively,
    \begin{equation}
        \begin{split}
            & P(x) = e^{\I\left(\phi_0+\phi_d\right)} \prod_{j=1}^{d-1} \cos\left(\phi_j\right) T_d(x) + l.o.\\
            & \I \sqrt{1-x^2} Q(x) = \I \sqrt{1-x^2} e^{\I\left(\phi_0-\phi_d\right)} \prod_{j=1}^{d-1} \cos\left(\phi_j\right) U_{d-1}(x) + l.o.,
        \end{split}
    \end{equation}
    which proves the lemma.
\end{proof}

\section{Proof of \cref{lma:leading_coef_sym}}\label{sec:proof_leading_coef_sym}
\begin{proof}
Now that $d$ is odd and $\Phi$ is symmetric, \cref{lma:leading_coef} indicates
\begin{equation*}
    q_{d-1} = \prod_{j=1}^{\wt{d}-1} \cos^2\left(\phi_j\right)\geq 0.
\end{equation*}
If $\phi_j\ne \frac{\pi}{2}+k\pi$ for all $1\leq j\leq \wt{d}-1$, where $k$ is some integer, then the leading Chebyshev coefficient of $Q$ is $q_{d-1}$, which is positive. Otherwise, there exists positive integer $j_0$ and integer $k$ such that $\phi_{j_0}= \phi_{d-j_0}= \frac{\pi}{2}+k\pi$. And either of the following equalities holds
\begin{equation*}
    e^{\I \phi_{j_0} Z}=e^{\I \phi_{d-j_0}Z}=e^{\I \frac{\pi}{2}Z},\quad e^{\I \phi_{j_0} Z}=e^{\I \phi_{d-j_0}Z}=e^{-\I \frac{\pi}{2}Z}.
\end{equation*}
Without losing generality, we only consider the cases that $\phi_{j_0}= \phi_{d-j_0}= \pm \frac{\pi}{2}$.

If $j_0<\wt{d}-1$, using the identity 
\begin{equation*}
    W(x) e^{\pm\I \frac{\pi}{2}Z} W(x) = e^{\pm\I \frac{\pi}{2}Z},
\end{equation*}
the product of matrices $$e^{\I \phi_{j_0-1}Z} W(x) e^{\I \phi_{j_0} Z} W(x) e^{\I \phi_{j_0+1}Z}$$
can be replaced by
$$e^{\I \left(\phi_{j_0-1} +\phi_{j_0} + \phi_{j_0+1} \right)Z}.$$ 
We can perform a similar reduction for for $$e^{\I \phi_{d-j_0-1}Z} W(x) e^{\I \phi_{d-j_0}Z} W(x) e^{\I \phi_{d-j_0+1}Z}.$$
Therefore, $U(x,\Phi)$ can be constructed by a new set of symmetric phase factors whose full length is $d-3$ and $\deg(Q)\leq d-5$. Apply \cref{lma:leading_coef} again and one has $q_{d-5}\geq 0$. 

If $j_0=\wt{d}-1$, the product of matrices $$e^{\I \phi_{\wt{d}-2}Z} W(x) e^{\I \phi_{\wt{d}-1} Z} W(x) e^{\I \phi_{\wt{d}}Z}W(x) e^{\I \phi_{\wt{d}+1}Z} $$
can be replaced by
$$- e^{\I \phi_{\wt{d}-2}Z} W(x) e^{\I \phi_{\wt{d}+1}Z}.$$ 
Therefore, $U(x,\Phi)$ can be constructed by a new set of symmetric phase factors whose full length is $d-1$ and $\deg(Q)\leq d-3$. Apply \cref{lma:leading_coef} again and one has $q_{d-3}\leq 0$.

Repeat the procedure above until the length of the new set of symmetric phase factors reaches $\text{deg} (Q)+2$, and it follows that the leading Chebyshev coefficient of $Q$ has the same sign as $\left(-1\right)^{\frac{d-1-\deg (Q)}{2}}$. 
\end{proof} 

\section{Proof of \cref{lma:newPQ_coef}}\label{sec:proof_newPQ_coef}
First, we consider the expansion of $P(x)$ and $Q(x)$ in the monomial basis,
\begin{equation*}
    P(x)=\sum_{i=0}^d \fp_i x^i, \quad\quad Q(x)=\sum_{j=0}^{d-1} \fq_j x^j.
\end{equation*}
and derive some useful equalities from the normalization condition, which comes from the unitarity of $U(x,\Phi)$.
\begin{lemma}\label{lma:normalization_property}
For any $P\in \CC[x],Q\in \RR[x]$ satisfying the condition (1)-(3) in \cref{thm:existandunique}, one can define $e^{2\I \phi_0}:=\frac{\fq_{d-1}}{\fp_{d}^*}$ and derive two useful equalities
\begin{equation}\label{eq:equality1}
\begin{split}
    & 2\Re[\fp_{d-2}e^{-2\I\phi_0}]+\fq_{d-1} -2\fq_{d-3}=0,\quad \forall d\geq 2,
\end{split}
\end{equation}
and 
\begin{equation}\label{eq:equality2}
    2\Re[\fp_{d-4}e^{-2\I\phi_0}]+\fq_{d-3}-2\fq_{d-5}=-\frac{\fq_{d-1}}{4}-\frac{1}{\fq_{d-1}}\abs{\Im[\fp_{d-2}e^{-2\I\phi_0}]}^2,\forall d\geq 3.
\end{equation}
Here we use the convention that $\fq_k=\fp_k=0$ if $k<0$.
\end{lemma}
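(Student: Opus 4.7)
The strategy is to expand the pointwise normalization condition $|P(x)|^2 + (1-x^2)|Q(x)|^2 \equiv 1$ as an identity of polynomials in $x$, which forces every coefficient of a positive power of $x$ on the left-hand side to vanish. Both equalities will follow from comparing coefficients at two consecutive leading orders, using the parity hypotheses of \cref{thm:existandunique} to eliminate all but a handful of terms in each convolution sum. Specifically, $\fp_i = 0$ whenever $i \not\equiv d \pmod{2}$ and $\fq_j = 0$ whenever $j \not\equiv d-1 \pmod{2}$, which in particular kills $\fp_{d-1}$ and $\fq_{d-2}$. As a warm-up, the coefficient of $x^{2d}$ yields $|\fp_d|^2 = \fq_{d-1}^2$, confirming $e^{2\I\phi_0} = \fq_{d-1}/\fp_d^*$ is well-defined and unimodular, and supplying the handy identity $\fp_d^* = \fq_{d-1} e^{-2\I\phi_0}$ used below.

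For \cref{eq:equality1}, I would compare coefficients of $x^{2d-2}$. In $|P|^2$ only the pairs $(d,d-2)$ and $(d-2,d)$ survive the parity filter, contributing $2\Re[\fp_{d-2}\fp_d^*]$; in $|Q|^2$ only the diagonal pair $(d-1,d-1)$ survives, contributing $\fq_{d-1}^2$; and in $-x^2|Q|^2$ only the pairs $(d-1,d-3)$ and $(d-3,d-1)$ survive, contributing $-2\fq_{d-1}\fq_{d-3}$. Dividing through by $\fq_{d-1}$ and substituting $\fp_d^* = \fq_{d-1} e^{-2\I\phi_0}$ produces \cref{eq:equality1} directly.

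For \cref{eq:equality2}, I would repeat the exercise at the coefficient of $x^{2d-4}$. The surviving contributions are $2\Re[\fp_{d-4}\fp_d^*] + |\fp_{d-2}|^2$ from $|P|^2$, $2\fq_{d-1}\fq_{d-3}$ from $|Q|^2$, and $-\fq_{d-3}^2 - 2\fq_{d-1}\fq_{d-5}$ from $-x^2|Q|^2$. After dividing by $\fq_{d-1}$, the only obstructions to reaching the claimed identity are a residual $|\fp_{d-2}|^2/\fq_{d-1}$ and $-\fq_{d-3}^2/\fq_{d-1}$. The crucial manipulation is to decompose $|\fp_{d-2}|^2 = \Re[\fp_{d-2}e^{-2\I\phi_0}]^2 + \Im[\fp_{d-2}e^{-2\I\phi_0}]^2$ (legitimate since $e^{-2\I\phi_0}$ is unimodular) and then to feed in \cref{eq:equality1} in the rearranged form $\Re[\fp_{d-2}e^{-2\I\phi_0}] = \fq_{d-3} - \fq_{d-1}/2$. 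The real-part square combines with $\fq_{d-3}^2 - \fq_{d-3}\fq_{d-1}$ into a perfect square telescoping to $\fq_{d-1}^2/4$, and \cref{eq:equality2} then falls out.

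The entire argument is a bookkeeping exercise: no analytic estimates are required, only coefficient matching and parity cancellation. The only mildly subtle step is spotting the perfect-square structure hidden in the $x^{2d-4}$ identity, which is precisely what forces \cref{eq:equality1} to be invoked inside the proof of \cref{eq:equality2}; once this observation is made, the algebra closes cleanly.
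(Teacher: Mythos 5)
Your proof is correct and is essentially the same as the paper's: both extract the coefficients of $x^{2d-2}$ and $x^{2d-4}$ from the normalization identity, use the parity conditions to discard cross terms, and substitute \cref{eq:equality1} (in the form $\Re[\fp_{d-2}e^{-2\I\phi_0}] = \fq_{d-3}-\fq_{d-1}/2$) into the $x^{2d-4}$ relation so that the perfect square collapses to $\fq_{d-1}^2/4$. The only cosmetic difference is that the paper first rewrites the condition as $e^{-2\I\phi_0}P(x)\cdot e^{2\I\phi_0}P^*(x)+(1-x^2)Q(x)^2=1$ before matching coefficients, whereas you match coefficients of $\abs{P}^2+(1-x^2)\abs{Q}^2=1$ directly and then insert $\fp_d^*=\fq_{d-1}e^{-2\I\phi_0}$.
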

\begin{proof}
The normalization condition implies that $\abs{\fp_d}=\abs{\fq_{d-1}}$, so the complex argument $\phi_0$ is well defined. Since $Q(x)$ is real, $\fp_d e^{-2i\phi_0}=\fp^*_d e^{2i\phi_0}= \fq_{d-1}\in \RR$ and the normalization condition becomes
\begin{equation}\label{eqn:normalization_real}
e^{-2i\phi_0} P(x)\cdot e^{2i\phi_0} P^* (x) + (1-x^2)Q(x)^2 =1.
\end{equation}
For $d\geq 2$, the coefficient of $x^{2d-2}$ in \cref{eqn:normalization_real} is 
\begin{equation*}
\begin{split}
0&=\fp_d e^{-2i\phi_0}\fp_{d-2}^* e^{2i\phi_0}+\fp_d^* e^{2i\phi_0}\fp_{d-2} e^{-2i\phi_0}+\fq_{d-1}^2- 2\fq_{d-1}\fq_{d-3}\\
&= \fq_{d-1}\left(\fp_{d-2}^* e^{2i\phi_0}+\fp_{d-2} e^{-2i\phi_0}+\fq_{d-1}-2\fq_{d-3}\right)\\
\Longrightarrow \quad & 2\Re[\fp_{d-2}e^{-2\I\phi_0}] + \fq_{d-1} -2\fq_{d-3}=0.
\end{split}
\end{equation*}
For $d\geq 3$, the coefficients of $x^{2d-4}$ in \cref{eqn:normalization_real} is
\begin{equation*}
\begin{split}
0&=\fp_d e^{-2i\phi_0}\fp_{d-4}^* e^{2i\phi_0}+\fp_d^* e^{2i\phi_0}\fp_{d-4} e^{-2i\phi_0}+\fp_{d-2}^* e^{2i\phi_0}\fp_{d-2} e^{-2i\phi_0}\\
&\quad+2\fq_{d-1}\fq_{d-3}-\fq_{d-3}^2- 2\fq_{d-1}\fq_{d-5}\\
&= \fq_{d-1}\left(2\Re[\fp_{d-4}e^{-2\I\phi_0}]+\fq_{d-3}-2\fq_{d-5}\right)-\fq_{d-3}^3+\fq_{d-1}\fq_{d-3}+\abs{\fp_{d-2}e^{-2\I\phi_0}}^2\\
&= \fq_{d-1}\left(2\Re[\fp_{d-4}e^{-2\I\phi_0}]+\fq_{d-3}-2\fq_{d-5}\right)-\fq_{d-3}^3+\fq_{d-1}\fq_{d-3}\\
&\quad +\abs{\Re[\fp_{d-2}e^{-2\I\phi_0}]}^2 +\abs{\Im[\fp_{d-2}e^{-2\I\phi_0}]}^2\\
&= \fq_{d-1}\left(2\Re[\fp_{d-2}e^{-2\I\phi_0}]+\fq_{d-3}-2\fq_{d-5}\right)-\fq_{d-3}^3+\fq_{d-1}\fq_{d-3}\\
&\quad +\left(\frac{2\fq_{d-3}-\fq_{d-1}}{2}\right)^2+\abs{\Im[\fp_{d-2}e^{-2\I\phi_0}]}^2\\
&= \fq_{d-1}\left(2\Re[\fp_{d-2}e^{-2\I\phi_0}]+\fq_{d-3}-2\fq_{d-5}\right)+\frac{\fq_{d-1}^2}{4}+\abs{\Im[\fp_{d-2}e^{-2\I\phi_0}]}^2\\
\Longrightarrow \quad & 2\Re[\fp_{d-2}e^{-2\I\phi_0}]+\fq_{d-3}-2\fq_{d-5}=-\frac{\fq_{d-1}}{4}-\frac{1}{\fq_{d-1}}\abs{\Im[\fp_{d-2}e^{-2\I\phi_0}]}^2,
\end{split}
\end{equation*}
where we exploit \cref{eq:equality1} in the fourth equality.
\end{proof}

\begin{proof}[Proof of \cref{lma:newPQ_coef}]
Direct computation shows that 
\begin{equation}\label{eq:tiledP}
P^{(1)} = 2(1-x^2)x Q-(1-x^2) P^* e^{2i \phi_0} + x^2 P e^{-2i \phi_0},
\end{equation}
and 
\begin{equation}\label{eq:tiledQ}
Q^{(1)} = (2x^2-1) Q - x P^* e^{2i \phi_0} - x P e^{-2i \phi_0}=(2x^2-1) Q - 2x\Re[ P^* e^{2i \phi_0}].
\end{equation}
Hence, $Q(x)\in \RR[x]$. Normalization condition is preserved due to unitarity. We can also verify that the parity of $P^{(1)}$ is the same as $P$ and the parity of $Q^{(1)}$ is the same as $Q$.

Now we  apply the results of \cref{lma:normalization_property} and first examine the coefficients of $P^{(1)}$. The coefficient of $x^{d+2}$ is
\begin{equation*}
\fp^{(1)}_{d+2}= -2 \fq_{d-1} + \fp_d^* e^{2i\phi_0} + \fp_d e^{-2i \phi_0} =0.
\end{equation*}
If $d\geq 2$, the coefficient of $x^{d}$ is 
\begin{equation*}
\begin{split}
\fp^{(1)}_d=&2\fq_{d-1} -2 \fq_{d-3} + \fp_{d-2}^* e^{2i\phi_0} + \fp_{d-2} e^{-2i\phi_0} -\fp_d^* e^{2i\phi_0}\\
=&\left(2\Re[\fp_{d-2}e^{-2\I\phi_0}] + \fq_{d-1} -2\fq_{d-3}\right) + \left(\fq_{d-1}-\fp_d^* e^{2i\phi_0}\right)=0.
\end{split}
\end{equation*}
If $d=2$, $P^{(1)}=-\fp_0^* e^{2\I \phi_0}=-\left(\frac{\fp_0}{\fp_2}\right)^*\fq_1$. And if $d\geq 3$, the coefficient of $x^{d-2}$ is 
\begin{equation*}
    \begin{split}
        \fp^{(1)}_{d-2}& =2\fq_{d-3}-2\fq_{d-5}-\fp_{d-2}^* e^{2\I \phi_0}+\fp^*_{d-4} e^{2\I \phi_0}+\fp_{d-4} e^{-2\I \phi_0}\\
        &= \fq_{d-3}-\fp_{d-2}^* e^{2\I \phi_0}-\frac{\fq_{d-1}}{4}-\frac{1}{\fq_{d-1}}\abs{\Im[\fp_{d-2}e^{-2\I\phi_0}]}^2\\
        & = \frac{\fq_{d-1}}{4}-\frac{1}{\fq_{d-1}}\abs{\Im[\fp_{d-2}e^{-2\I\phi_0}]}^2+ \I \Im [\fp_{d-2}e^{-2\I\phi_0}]\\
    \end{split}
\end{equation*}
where we use \cref{eq:equality1} in the second equality and \cref{eq:equality2} in the third equality.

Then we also apply the results of \cref{lma:normalization_property} to examine the coefficients of $Q^{(1)}$. The coefficient of $x^{d+1}$ is 
\begin{equation*}
\fq^{(1)}_{d+1}=2 \fq_{d-1} -\fp_d e^{-2i\phi_0} -\fp_d^* e^{2i \phi_0} =0.
\end{equation*} 
If $d\geq 2$, the coefficient of $x^{d-1}$ is 
\begin{equation*}
\fq^{(1)}_{d-1}=2 \fq_{d-3} -\fq_{d-1} -\fp_{d-2} e^{-2i\phi_0} -\fp_{d-2}^* e^{2i \phi_0}=0.
\end{equation*}
If $d=2$, $Q^{(1)}=0$. And if $d\geq 3$, the coefficient of $x^{d-3}$ is 
\begin{equation*}
\begin{split}
    \fq^{(1)}_{d-3}&=2 \fq_{d-5} -\fq_{d-3} -\fp_{d-4}^* e^{2i\phi_0}-\fp_{d-4} e^{-2i\phi_0}\\
    &=\frac{\fq_{d-1}}{4}+\frac{1}{\fq_{d-1}}\abs{\Im[\fp_{d-2}e^{-2\I\phi_0}]}^2,
\end{split}
\end{equation*}
where we exploit \cref{eq:equality2}. Furthermore, we get that $\fq^{(1)}_{d-3}$ is positive if $d$ is odd.

\end{proof}

\end{document}